\def\eqref#1{equation~\ref{#1}}
\def\1{\bm{1}}
\def\eps{{\varepsilon}}
\DeclareMathAlphabet{\mathsfit}{\encodingdefault}{\sfdefault}{m}{sl}
\SetMathAlphabet{\mathsfit}{bold}{\encodingdefault}{\sfdefault}{bx}{n}
\DeclareMathOperator*{\argmin}{arg\,min}
\newcommand{\bracket}[1]{\left[#1\right]}
\newcommand{\paren}[1]{\ensuremath{\left(#1\right)}\xspace}
\renewcommand{\epsilon}{\varepsilon}
\newcommand{\card}[1]{\left\vert{#1}\right\vert}
\newcommand{\CC}{\ensuremath{\mathcal{C}}}
\newcommand{\abs}[1]{\ensuremath{\left|#1\right|}}
\newcommand{\costT}{\ensuremath{\widetilde{\cost}}\xspace}
\newcommand{\prob}[1]{\Pr\bracket{#1}}
\newcommand{\expect}[1]{\Exp\bracket{#1}}
\newcommand{\set}[1]{\ensuremath{\left\{ #1 \right\}}}
\newcommand{\polylog}{\mbox{\rm  polylog}}
\newcommand{\OPT}{\ensuremath{\textnormal{\textsf{OPT}}}\xspace}
\newcommand{\ALG}{\ensuremath{\mbox{\sf ALG}}\xspace}
\DeclareMathOperator*{\Exp}{\ensuremath{{\mathbb{E}}}}
\DeclareMathOperator*{\Prob}{\ensuremath{\textnormal{Pr}}}
\renewcommand{\Pr}{\Prob}
\newenvironment{tbox}{\begin{tcolorbox}[
		enlarge top by=5pt,
		enlarge bottom by=5pt,
		 breakable,
		 boxsep=0pt,
                  left=4pt,
                  right=4pt,
                  top=10pt,
                  arc=0pt,
                  boxrule=1pt,toprule=1pt,
                  colback=white
                  ]
	}
{\end{tcolorbox}}
\newcommand{\cost}{\ensuremath{\textnormal{\textsf{cost}}}}
\newcommand{\Lap}[1]{{\ensuremath{\textnormal{\textsf{Lap}}(#1)}\xspace}}
\newcommand{\real}{\mathbb{R}}
\newcommand{\dist}{\mathsf{d}}
\newcommand{\bscdpalg}{\ensuremath{\textnormal{\textsc{BalancedSparsestCut-DP}}}\xspace}
\newcommand{\hcalg}{\ensuremath{\textnormal{\textsc{MakeTree}}}\xspace}
\newcommand{\dphcalg}{\ensuremath{\textnormal{\textsc{HierarchicalClustering-DP}}}\xspace}
\newcommand{\GCf}{\ensuremath{\mathcal{G}_{n,5}}\xspace}
\newcommand{\FCf}{\ensuremath{\mathcal{F}_{n,5}}\xspace}
\newcommand{\FCfc}{\ensuremath{\mathcal{F}_{K_{n}}}\xspace}
\newcommand{\cT}{\ensuremath{\mathcal{T}}\xspace}
\newcommand{\cTb}{\ensuremath{\mathcal{T}_{\text{balance}}}\xspace}
\renewcommand{\dist}{\ensuremath{\mathcal{D}}\xspace}
\renewcommand{\ALG}{\ensuremath{\mathcal{A}}\xspace}
\newcommand{\myqed}[1]{\let\qed\relax \hspace*{\fill} #1 \qed $\square$}
\newcommand{\Smin}{\ensuremath{S^{min}}}
\newcommand{\Sspr}{\ensuremath{S^{sparse}}}
\newcommand{\edgeprint}[3]{\ensuremath{\textnormal{\textsf{edge-cost}}^{#2}_{#1}}(#3)\xspace}
\newcommand{\lap}{\ensuremath{\textsf{Lap}}\xspace}
\definecolor{ForestGreen}{rgb}{0.1333,0.5451,0.1333}
\definecolor{DarkRed}{rgb}{0.7,0,0}
\definecolor{Red}{rgb}{0.9,0,0}
\crefname{property}{property}{Property}
\crefname{equation}{eq}{Eq}
\crefname{proposition}{proposition}{Proposition}
\crefname{claim}{claim}{Claim}
\crefname{observation}{observation}{Observation}
\newtheorem{theorem}{Theorem}
\newtheorem{lemma}{Lemma}[section]
\newtheorem{proposition}{Proposition}[section]
\newtheorem{claim}[lemma]{Claim}
\newtheorem{remark}[lemma]{Remark}
\newtheorem{observation}{Observation}[section]
\newtheorem{mdresult}{Result}
\newenvironment{result}{\begin{mdframed}[backgroundcolor=lightgray!40,topline=false,rightline=false,leftline=false,bottomline=false,innertopmargin=3pt,innerbottommargin=5pt]\begin{mdresult}}{\end{mdresult}\end{mdframed}}
\theoremstyle{definition}
\newtheorem{definition}{Definition}
\theoremstyle{definition}
\newtheorem{mdalg}{algorithm}
\newtheorem{mddist}{Distribution}
\newcommand{\thickhline}{%
    \noalign {\ifnum 0=`}\fi \hrule height 1.3pt
    \futurelet \reserved@a \@xhline
}
\DeclareFontFamily{U}{mathx}{\hyphenchar\font45}
\DeclareFontShape{U}{mathx}{m}{n}{
      <5> <6> <7> <8> <9> <10>
      <10.95> <12> <14.4> <17.28> <20.74> <24.88>
      mathx10
      }{}
\DeclareSymbolFont{mathx}{U}{mathx}{m}{n}
\DeclareMathSymbol{\bigtimes}{1}{mathx}{"91}
\renewcommand{\qed}{\nobreak \ifvmode \relax \else
      \ifdim\lastskip<1.5em \hskip-\lastskip
      \hskip1.5em plus0em minus0.5em \fi \nobreak
      \vrule height0.75em width0.5em depth0.25em\fi}
\title{On the Price of Differential Privacy for Hierarchical Clustering}
\newcommand*\samethanks[1][\value{footnote}]{\footnotemark[#1]}
\author{Chengyuan Deng\thanks{Rutgers University, \texttt{\{jg1555,cd751\}@rutgers.edu}. Deng and Gao would like to acknowledge funding through NSF IIS-2229876, DMS-2220271, DMS-2311064, CCF-2208663, CCF-2118953.} \and Jie Gao\samethanks \and Jalaj Upadhyay \thanks{Rutgers University, \texttt{jalaj.upadhyay@rutgers.edu}. Jalaj would like to acknowledge funding through Decanal Research Grant and unrestricted gift from Google.} \and Chen Wang\thanks{Texas A\&M University, \texttt{cwangjhw@tamu.edu}} \and Samson Zhou\thanks{Texas A\&M University, \texttt{samsonzhou@gmail.com}. Supported in part by NSF CCF-2335411. 
This work was done in part while Samson Zhou was visiting the Simons Institute for the Theory of Computing as part of the Sublinear Algorithms program.} 
}
\date{}
\begin{document}

\maketitle

\begin{abstract}
Hierarchical clustering is a fundamental unsupervised machine learning task with the aim of organizing data into a hierarchy of clusters. Many applications of hierarchical clustering involve sensitive user information, therefore motivating recent studies on differentially private hierarchical clustering under the rigorous framework of Dasgupta's objective. However, it has been shown that any privacy-preserving algorithm under edge-level differential privacy necessarily suffers a large error. To capture practical applications of this problem, we focus on the weight privacy model, where each edge of the input graph is at least unit weight. We present a novel algorithm in the weight privacy model that shows significantly better approximation than known impossibility results in the edge-level DP setting. In particular, our algorithm achieves $O(\log^{1.5}n/\varepsilon)$ multiplicative error for $\varepsilon$-DP and runs in polynomial time, where $n$ is the size of the input graph, and the cost is never worse than the optimal additive error in existing work. We complement our algorithm by showing if the unit-weight constraint does not apply, the lower bound for weight-level DP hierarchical clustering is essentially the same as the edge-level DP, i.e. $\Omega(n^2/\varepsilon)$ additive error. As a result, we also obtain a new lower bound of $\tilde{\Omega}(1/\varepsilon)$\footnote{Here, and throughout, we use $\tilde{O}()$ and $\tilde{\Omega}()$ to hide $\polylog{n}$ terms.} additive error for balanced sparsest cuts in the weight-level DP model, which may be of independent interest. Finally, we evaluate our algorithm on synthetic and real-world datasets. Our experimental results show that our algorithm performs well in terms of extra cost and has good scalability to large graphs.
\end{abstract}

\pagenumbering{roman}
\clearpage
\tableofcontents
\clearpage
\pagenumbering{arabic}
\setcounter{page}{1}

\section{Introduction}
\label{sec:intro}
Hierarchical clustering is a fundamental problem in unsupervised machine learning that has gained significant popularity since its introduction in 1963~\cite{ward1963hierarchical}. 
In contrast to flat clusterings, such as the popular $k$-means and $k$-median, which provide a single partitioning of the input dataset, hierarchical clustering (HC) recursively partitions the dataset so that similar items are grouped together at lower levels of granularity while dissimilar items are separated as high as possible. 
The resulting output is a binary tree-like structure called a \textit{dendrogram}, whose leaves represent the individual items and whose internal nodes represent a cluster of the items in its subtree, thereby organizing the relationships within the dataset into a hierarchical representation. 
Moreover, unlike flat clusterings that require the ``correct'' number of clusters as input and which can be difficult to ascertain~\cite{thorndike1953belongs}, hierarchical clustering simultaneously captures structure at all levels of granularity and thus does not need to determine a fixed number of clusters. 
Consequently, hierarchical clustering has numerous applications in areas where data intrinsically is organized into hierarchical structure, such as biology and phylogenetics~\cite{sneath1973numerical,sotiriou2003breast}, community detection~\cite{leskovec2020mining}, finance~\cite{tumminello2010correlation}, and image and text analysis~\cite{steinbach2000comparison}. 

Even though these applications involve datasets with sensitive information, there has been little work on privacy-preserving hierarchical clustering. 
\cite{imola2023differentially} recently initiated a study on differentially-private approximation algorithms for hierarchical clustering measured using the objective introduced by \cite{dasgupta2016cost}, 
which quantifies the HC cost based on how early similar nodes are split in the hierarchy. In particular, the cost is the sum of the weights of all edges, multiplied by the size of the smallest cluster that contains both endpoints of the edge.
\cite{imola2023differentially} considered \textit{edge-level differential privacy}, where the edges of the input graph represent the sensitive information so that the output of the hierarchical clustering algorithm should be somewhat insensitive to changes to a single edge of the similarity graph. 
Intuitively, this is quite difficult to guarantee, since many common hierarchical clustering algorithms, such as \textit{single-linkage} or \textit{average-linkage}~\cite{hastie2009elements} are deterministic and thus can change drastically when a single edge of the graph is changed. 
Indeed, \cite{imola2023differentially} showed that any $\eps$-differentially private algorithm on a graph with $n$ vertices would produce a hierarchical clustering with Dasgupta's objective value $\Omega\left(n^2/{\eps}\right)$ additively worse than the optimal clustering. 

Such a large error renders the corresponding hierarchical clustering useless in most practical scenarios. 
Consider, for example, a dataset whose nonzero costs construct a graph that consists of $\sqrt{n}$ disconnected instances of an expander graph with $(\log n)$-expansion~\cite{kowalski2019introduction} and on $\sqrt{n}$ vertices. 
We remark that each expander can incur a cost at most $(\sqrt{n})^2\log n$ under the Dasgupta objective, and so by clustering each of the $\sqrt{n}$ expanders together, the optimal hierarchical clustering cost is at most $n^{1.5}\log n$. 
On the other hand, if we have a single connected $(\log n)$-expander on $n$ vertices, then it can be shown that any balanced cut at the top level of the dendrogram incurs must have $O(\log n)$ edges that cross the cut. 
Since it is known that there exists a balanced cut that is a constant-factor approximation to the optimal hierarchical clustering~\cite{charikar2017approximate,RoyP17}, it follows that the optimal cost in this case is $\Omega(n^2\log n)$. 
Hence, algorithms with additive error $\Omega\left(n^2/{\eps}\right)$ will not be able to distinguish between these two cases. 

In light of these shortcomings, it is natural to ask whether more meaningful accuracy-privacy tradeoffs are possible under less stringent notions of differential privacy. 
Another common notion is \textit{weight-level} differential privacy, which assumes that the graph topology and connectivity are public knowledge but edge weights are sensitive. 
Therefore, the private mechanism in this scenario bounds the probability of distinguishing two neighboring graphs with the same connectivity, but different weights, where a pair of neighboring graphs have weight vectors that differ by an $\ell_1$ norm of $1$. 
Weight-level differential privacy is a great fit for many application scenarios where the underlying network topology is often public, e.g., Internet topology or financial networks. 
However, the weights on such networks may represent information that is closely tied to user behaviors and activities and should be protected. 
Hence, weight-level differential privacy has been studied for a number of fundamental problems such as all-pairs shortest paths and range queries~\cite{bodwin2024discrepancy,chen2023differentially, deng2023differentially, fan2022breaking, fan2022distances,sealfon2016shortest}.

\textbf{Our Results.} 
In this paper, we provide a comprehensive treatment for hierarchical clustering under Dasgupta's objective function with weight-level DP. To overcome the strong barrier in the edge-DP model, we consider the weight-DP model with a simple and rather unassuming setting where all edges have minimum weight $1$. We show there exists an algorithm in this setting that achieves $O({\log^{1.5}{n}}/{\eps})$ multiplicative approximation. As a complement of the upper bound, we next show that, surprisingly, the $\Omega(n^2/\eps)$ lower bound still holds in the weight-DP model without the assumption.  
Finally, as a by-product of our HC results, we provide new lower bounds for weight-DP balanced sparsest cuts, which is a related problem where the objective is to partition a graph into two sets with a minimal ``relative'' number of crossing edges. 
Our results highlight the power and limitations of the weight-level DP model for HC and related problems.
We now describe these results in more detail, starting with our new algorithm.

\begin{result}[Informal Statement of \Cref{thm:hc-new-upper}]
    \label{rst:hc-new-upper}
    There exists an $\eps$-weight DP algorithm that given a weighted, connected graph $G$ of size $n$ with weight on each edge at least 1, in polynomial time computes an HC $\cT$ that achieves an $O(\frac{\log^{1.5}{n}}{\eps})$-multiplicative approximation for the optimal HC cost.
\end{result}
\vspace{-2pt}

Our private algorithm gives roughly $O(\log^{1.5}n)$-approximation to the optimal cost. In the HC literature, $\polylog{(n)}$-approximation is considered to be a ``sweet spot'' solution (see, e.g.~\cite{dasgupta2016cost,charikar2017approximate,RoyP17,agarwal2022sublinear,AssadiCLMW22}). 
In practice, it allows us to ``adjust'' to specific instances, and produce HC trees tailored to the actual optimal cost. 
Compared to the poly-time algorithm in \cite{imola2023differentially}, which has $O\left(\frac{n^{2.5} \sqrt{\log(1/\delta)}}{\eps}\right)$ additive error for $(\eps,\delta)$-DP, our algorithm achieves $\eps$-DP and has much better performance on the aforementioned one-vs-many expander example. 

We will discuss shortly why the natural assumption for weight-level DP is the ``right notion'' for many privacy applications to obtain low approximation error on HC. 
Here, we first show that from a technical point of view, the unit-weight assumption is \emph{necessary}.

\begin{result}[Informal Statement of \Cref{thm:hc-weight-lower}]
    \label{rst:hc-weight-lower}
    Any $\eps$-\emph{weight} DP algorithm that outputs an HC tree under Dasgupta's cost must have an additive error of $\Omega(n^2/\eps)$, even for graphs with optimal HC cost $O(n/\eps)$.
\end{result}
\vspace{-2pt}

Our lower bound for weight-level DP matches the lower bound of $\Omega(n^2/\eps)$ for edge-level DP~\cite{imola2023differentially}. 
Thus, our result precludes any practical algorithms for general graphs for either weight or edge-level DP. We present the formal theorem and the proof in \Cref{sec:lb-weight-level-dp}. As a result of the lower bound in \Cref{thm:hc-weight-lower}, we also obtain a new lower bound for balanced sparsest cut for weight-level DP.

\begin{result}[Informal Statement of \Cref{thm:lb-blanaced-sparsest-cut}]
    \label{rst:lb-blanaced-sparsest-cut}
   Given a graph $G$ and a constant $\gamma$, any $\eps$-weight DP algorithm that outputs a $\gamma$-balanced sparsest cut has to have $\Omega(1/\big(\eps\log^2{n})\big)$ \emph{expected} additive error to the sparsity of $G$.
\end{result}
We emphasize that \Cref{thm:lb-blanaced-sparsest-cut} does \emph{not} follow trivially from known connections between balanced sparsest cut and hierarchical clustering. 
The reason is: \textbf{(i)} known results deal with only multiplicative approximation, and \textbf{(ii)} we need to worry about the privacy loss with repeated balanced sparsest cuts. 
As such, we used a novel reduction to prove \Cref{thm:lb-blanaced-sparsest-cut}, which may be of independent interest.

\textbf{Discussions for our privacy model.}
We discuss why our privacy model is natural and the contexts of its applications. For applications whose edge weights are not at least $1$, we can scale up all weights by the same factor, and the combinatorial structure of the solutions to the HC does not change (although the cost of HC is also scaled by the same factor). 
Therefore, one can assume by default that the minimum edge weight is $1$ when neighboring graphs have $\ell_1$ norm of edge weights at most $1$. 
This assumption also makes sense as the definition of neighboring graphs in the differential privacy model typically assumes that the magnitude of change is at an ``atomic'' level. 

Furthermore, the model captures a wide range of applications. 
For instance, consider the financial network where the edge weights capture the transaction interactions or amounts, and users would like to protect the transaction information. This scenario is captured exactly by our weight-DP model with unit weight assumption: (i) the topology is public to the network hosts, and (ii) the weight is at least one since each interaction contributes one and a reasonable transaction amount would be larger than one. Another application is clustering in music libraries, e.g., the iTunes library. The iTunes library is a bipartite graph where music/artists are on one side and users are on the other side. We know the graph topology from the album brought by the users. However, the number of times a user has interacted with a particular song is private information. Here, the interaction counter starts with 1 because a user has downloaded the album.

Finally, we note that our algorithm in \Cref{rst:hc-new-upper} is \emph{robust} for graphs that do \emph{not} satisfy the unit-weight condition. For graphs with minimum weights $\alpha\leq 1$, our algorithm could achieve $O(\frac{\log^{1.5}{n}}{\alpha \cdot \eps})$-multiplicative approximation with $\eps$-weight DP. The discussion and formal analysis of this approximation guarantee is shown in \Cref{prop:hc-flexible}.

\textbf{Empirical evaluations.}
In correspondence to \Cref{rst:hc-new-upper}, we implement the weight-DP algorithm for hierarchical clustering and evaluate its performance with respect to Dasgupta's cost. 
We observe that our algorithm produces significantly lower cost than input perturbation consistently on synthetic and real-world graphs. 
We further remark that our algorithm in \Cref{rst:hc-new-upper} is the first DP hierarchical clustering algorithm with \emph{reasonable implementations} on general graphs. 
Previously, the algorithms in \cite{imola2023differentially} for general graphs either take exponential time (the $\eps$-DP algorithm) or use very complicated theoretical subroutines for which no implementation is known (the $(\eps,\delta)$-DP algorithm). 
In contrast, our algorithm is practical based on favorable performance and good scalability. 
Our codes are publicly available on the Github\footnote{https://github.com/margarita-aicyd/dp-hc}.

\subsection{Our Techniques}
\label{subsec:tech-overview}
\textbf{The algorithm with unit weights.} 
The most natural approach to consider for our privacy setting is input perturbation. Naively, we would try to add Laplacian noise of $\lap(\frac{1}{\eps})$ to every edge weight to obtain graph $\widetilde{G}$. 
Then, we can perform recursive balanced sparsest cuts on the $\tilde{G}$ to obtain an HC tree $\cT$. 
The hope here is that on the balanced sparsest cut $S^*$, since each edge has at least unit weights, the noise scales at most with the sparsity. 

Unfortunately, the simple idea as above does \emph{not} work: although the balanced sparsest cut on $G$ remains a balanced sparse cut on $\widetilde{G}$, the \emph{converse} is not true. 
In particular, let $(S, V\setminus S)$ be a cut with high sparsity in $G$, and suppose $E(S, V\setminus S)$ are of unit weights; then, because of the Laplacian noise, the sparsity of $S$ in $\tilde{G}$ can be very small.  
The same issue also happens if we instead list all the possible cuts, add $\lap(\frac{\phi_{G}}{\eps})$ noise to each entry ($\phi_{G}$ is the global sparsity of $G$), and output the sparsest cut therein. 
Here, since we have $2^n$ many balanced cuts, some of the cuts could have very large deviations in the noisy vector, which again breaks the utility.

Our idea to handle the above issue is to take advantage of the unit-weight condition, and ``amplify'' the gap between sparse and non-sparse cuts in $G$. 
To this end, we can use a fairly simple trick: we artificially ``bump up'' the weight of each edge in $G$ by $O(\frac{\log{n}}{\eps})$ before adding the Laplacian noise. 
In this way, cuts with high sparsity will never get a chance to become low from the Laplacian noise. 
Furthermore, because of the unit-weight assumption, the actual sparsest cut $S^*$ will \emph{not} be affected by too much due to the increased edge weights (at most an $O(\frac{\log{n}}{\eps})$ factor). 
As such, we can perform recursive balanced sparsest cuts and get the desired algorithm.

\textbf{The lower bound for HC in weight-level DP.} 
Our lower bound is inspired by the $\eps$-DP hierarchical clustering lower bound in \cite{imola2023differentially} for edge-level DP; however, generalizing the lower bound to our privacy model requires non-trivial additional work. 
The hard instances in \cite{imola2023differentially} come from the family of random disjoint $5$-cycles. 
It is well-known in the literature that to minimize Dasgupta's HC cost on disconnected graphs, an HC tree should \emph{avoid} splitting any edges before the induced subgraph becomes connected. 
In particular, for the family of random $5$-cycles, if the HC tree splits the cycle edges in the bottom layers of the tree, the induced cost is at most $O(n)$. 
On the other hand, if the HC tree starts with partitioning $\Omega(n)$ edges in the cycles, the induced cost is at least $\Omega(n^2)$. 
The key lemma of \cite{imola2023differentially} shows that any algorithm that is $\eps$-DP has to, unfortunately, cut many of the cycle edges, which leads to the desired lower bound of $\Omega(n^2)$.

For the weight-level privacy setting, we need to argue that the topology information revealed to the algorithm does not break the lower bound. 
This is not true at first glance: for any \emph{fixed} collection of $5$ cycles, if the algorithm knows the graph topology, it can simply avoid all the cycle edges while being $\eps$-DP. 
Our idea to overcome the challenge is to ``embed'' the family of random $5$ cycles into a complete graph. 
In particular, we can fix the graph topology as the complete graph and put ``important weights'' only on edges obtained by random $5$ cycles. 
In this way, we essentially preserve the source of hardness: the algorithm cannot be private if it always avoids the edges with high weights. 

\textbf{The balanced sparsest cut lower bound.} 
The connection between the balanced sparsest cut and hierarchical clustering is well-known in the literature. 
Therefore, a natural idea to prove lower bounds for $\eps$-DP balanced sparsest cuts is through a reduction argument from $\eps$-DP hierarchical clustering. 
Since a sparsity error of $O(\frac{1}{\eps})$ can be roughly translated into an additive cost of $O(\frac{n^2}{\eps})$ in hierarchical clustering, the bound of $\tilde{\Omega}(\frac{1}{\eps})$ seems to be a reasonable target.

There are two technical challenges to formalizing the above idea. 
The first challenge is that existing results between balanced sparsest cuts and hierarchical clustering focus on \emph{multiplicative} error, while we need to show a lower bound with \emph{additive} error. 
This would require us to open the black box of the existing technical lemmas, and use a white-box argument to show the ``charging'' argument of \cite{charikar2017approximate} works in a similar manner for additive errors.

The second, and perhaps the bigger challenge, is the loss of privacy in the process. 
Note that to obtain the HC tree, we need to run the $\eps$-DP algorithm for the balanced sparsest cut on vertex-induced subgraphs, which requires repeated queries for up to $\Omega(n)$ times. 
As such, by the strong composition theorem, we need to lose a factor of $\sqrt{n}$ on the additive error, and we can only obtain an HC algorithm for $(\eps, \delta)$-approximate DP -- for which we can only prove lower bounds for very small $\delta$!\footnote{Although we did not include a proof for the HC lower bounds with $(\eps, \delta)$-DP, it appears the proof in \Cref{sec:lb-weight-level-dp} would still work for $\delta=1/2^{n/5}$.}

Our idea to tackle this issue is to use more \emph{restrictive} privacy parameters as we go down the HC tree with the balanced sparsest cuts. 
For the cuts on level $\ell$ (from the root), we run the DP balanced sparsest cut algorithm with $\eps$ scaling with the \emph{size} of the current partition, i.e., when we run the algorithm on $H$, we use $\eps_{H} = \frac{\card{H}}{n}\cdot \eps$ (we use $\card{H}$ for the number of vertices in $H$). 
Since we have $\sum_{H}\frac{\card{H}}{n}=1$ on each level, the privacy loss on this particular level is at most $\eps$. 
Furthermore, since the depth of the HC tree $\cT$ can be at most $O(\log{n})$ due to balanced cuts, the privacy loss is at most $O(\log{n})$, which can be handled by rescaling. 

The final missing piece here is the impact of changed $\eps$ in the \emph{additive error}. 
Since we decrease the parameter $\eps$, the additive error increases. 
Nevertheless, we can show that on each partition, the blow-up is roughly $O(\frac{n\card{H}}{\eps})$. 
Therefore, if we sum up the partitions of a level, the additive error is bounded by $O\left(\frac{n^2}{\eps}\right)$. 
As such, we can again use the fact that the HC tree is balanced to obtain the total additive error of $O(\frac{n^2\log{n}}{\eps})$ -- and the extra cost can again be handled by rescaling of $\eps$.

\section{Preliminaries}

We use $\mathbb R$ to denote the set of real numbers and $\mathbb R_{\geq 0}$ denote the set of non-negative real numbers. 
We use the notation $G=(V,E,w)$ to denote a graph on the vertex set $V$, edge set $E$, and weight function $w: V \times V \to \mathbb R_{\geq 0}$. 
The edge between two vertices $u$ and $v$ is denoted by the tuple $(u,v)$ and its weight is denoted by $w(u,v)$. 
For a set of vertices $A,B \subseteq V$, we denote by 
\[
w_G(A,B) = \sum_{(u,v)\in A \times B} w(u,v).
\]
When the context is clear, we simplify it by writing $w(A,B)$. 
We now introduce the sparsest cut and hierarchical clustering, then elaborate on their relationship.

\begin{definition}[Sparsest Cut]
    \label{def:sparsest-cut}
    Given a weighted graph $G = (V,E,w)$ with $|V|=n$, let $S \subset V$ be a subset of vertices and $|S| \leq n/2$. Then the cut induced by $(S^*, \bar{S}^*)$ such that $S^* = \argmin_{S\subseteq V}\frac{w(S,\bar{S})}{|S|}$ is the sparsest cut of $G$, where $\bar{S} = V\setminus S$.
\end{definition}

The optimal value of $\phi_G =\frac{w(S^*,\bar{S}^*)}{|S^*|}$ is also known as the \textit{sparsity} of the graph. For the convenience of notation, we also use $\psi_{G}(S)=\frac{w(S,\bar{S})}{|S|}$ to denote the sparsity of $S$ in $G$, also known as edge expansion. We say the output $S$ is an \emph{$(\alpha, \beta)$-approximation} of the sparsest cut if $\frac{w(S,\bar{S})}{|S|} \leq \alpha \cdot \phi_G + \beta$. Further, if $|S| = \gamma n$ for some $\gamma \in [\frac{1}{n},\frac{1}{2})$, the cut $(S,\bar{S})$ is called $\gamma$-\emph{balanced sparsest cut}. Similarly, one can define $\gamma$-balanced min-cut by $S^* = \argmin_{S\subseteq V}{w(S,\bar{S})}$ where $|S| = \gamma n$ for some $\gamma \in [\frac{1}{n},\frac{1}{2})$.

\begin{definition}[Hierarchical clustering trees]
Given a weighted graph $G = (V,E,w)$ with $n$ vertices representing data points and $m$ edges with non-negative weights measuring the pairwise similarities, we say a rooted tree $\mathcal{T}$ is a hierarchical clustering tree (HC tree) if the leaf nodes correspond to vertices $V$, and the internal nodes represent the splits of the subsets of vertices.
\end{definition}

\begin{definition}[Hierarchical Clustering under Dasgupta's Objective~\cite{dasgupta2016cost}]
    \label{def:hc-das}
      Given a weighted graph $G=(V,E,w)$, Dasgupta's HC  objective is to minimize the cost of $\mathcal{T}$ prescribed as follows:
    \begin{align}
        \textsf{cost}_G(\cT) = \sum_{(u,v)\in E} w(u,v)\cdot |\cT[{u\vee v}]|,
        \label{eq:Dasgupta_HC_costfunction}
    \end{align}
    where $\cT[{u\vee v}]$ stands for the subtree rooted at the lowest common ancestor of $u,v$, and $|\cT[{u\vee v}]|$ is the number of \emph{leaf nodes} induced by $\cT[{u\vee v}]$.
\end{definition}

The clustering is represented by $\mathcal{T}$, where each internal node induces a cluster. We say $\mathcal{T}$ is the hierarchical clustering tree (HC-tree) for graph $G$. 

Prior works~\cite{charikar2017approximate,cohen2019hierarchical,dasgupta2016cost} have established a connection between the (balanced) sparsest cut problem and hierarchical clustering problem.  In short, as long as the graph has non-negative weights, an $\alpha$-approximation of the sparsest cut or balanced sparsest cut implies an $O(\alpha)$-approximation algorithm for the hierarchical clustering. The formal statement of such an algorithm is given as follows. 
\begin{proposition}[\cite{charikar2017approximate,dasgupta2016cost}]
\label{prop:scut-hc}
Let $\cT$ be an HC tree that is obtained by recursively performing $\alpha$-approximate $1/3$-balanced sparsest cut on the vertex-induced subgraphs. Then, $\cT$ gives an $O(\alpha)$-approximation to the optimal Dasgupta's HC cost.
\end{proposition}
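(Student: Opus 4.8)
The plan is to pass to the equivalent ``split-cost'' form of Dasgupta's objective and then charge the algorithm's cost against a fixed optimal tree. For any HC tree $\mathcal{T}'$ of a graph $H$, $\textsf{cost}_H(\mathcal{T}') = \sum_{N} |V_N|\cdot w(A_N,B_N)$, the sum over internal nodes $N$ of $\mathcal{T}'$, where $V_N$ is the cluster at $N$ and $(A_N,B_N)$ its two children; this is just \Cref{eq:Dasgupta_HC_costfunction} regrouped by the (unique) lowest common ancestor of each edge. In particular $\textsf{cost}_G(\cT) = \sum_{t} m_t\, w(A_t,B_t)$, where $t$ ranges over the internal nodes of the algorithm's tree $\cT$, $V_t$ is the cluster at $t$, $m_t := |V_t|$, and $(A_t,B_t)$ are the two children with $|A_t|\le|B_t|$ (so $m_t/3\le |A_t|\le m_t/2$, reading ``$1/3$-balanced'' as both sides having size at least $m_t/3$). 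I would fix an optimal tree $\mathcal{T}^*$ for $G$ and recall $\textsf{cost}_G(\mathcal{T}^*) = \sum_{e}w(e)\,|\mathcal{T}^*[e]|$.

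Next I would establish a ``balanced separator inside a tree'' bound at each node $t$. Descending in $\mathcal{T}^*$, restricted to the leaves lying in $V_t$, always into the heavier child, let $N_t$ be the first node whose number of $V_t$-leaves falls in $(m_t/3,\,2m_t/3]$, and let $(S_t,\bar S_t)$ be the induced $1/3$-balanced cut of $G[V_t]$, with $S_t$ the set of $V_t$-leaves below $N_t$. If an edge $e\subseteq V_t$ crosses $(S_t,\bar S_t)$, then $N_t$ separates its two endpoints, which forces $\mathrm{LCA}_{\mathcal{T}^*}(e)$ to be a \emph{strict} ancestor of $N_t$; hence the number $q_t(e)$ of $V_t$-leaves below $\mathrm{LCA}_{\mathcal{T}^*}(e)$ is at least $m_t/3$. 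Writing $\mathcal{X}_t$ for the set of edges of $G[V_t]$ crossing $(S_t,\bar S_t)$, summing this over $\mathcal{X}_t$ gives $w(S_t,\bar S_t)=\sum_{e\in\mathcal{X}_t}w(e)\le \tfrac{3}{m_t}\sum_{e\in\mathcal{X}_t}w(e)\,q_t(e)$. Since the algorithm returns an $\alpha$-approximate $1/3$-balanced sparsest cut, the sparsity $w(A_t,B_t)/|A_t|$ of $(A_t,B_t)$ is at most $\alpha$ times that of $(S_t,\bar S_t)$, hence at most $\tfrac{3\alpha}{m_t}w(S_t,\bar S_t)$; since $|A_t|\le m_t/2$ this gives $w(A_t,B_t)\le\tfrac{3\alpha}{2}w(S_t,\bar S_t)$, and multiplying by $m_t$ and applying the previous inequality yields the per-node bound
\[
 m_t\, w(A_t,B_t)\ \le\ \tfrac{9\alpha}{2}\sum_{e\in\mathcal{X}_t}w(e)\,q_t(e).
\]

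I would then sum this over all internal $t$ and exchange the order of summation, so that the whole argument reduces to showing that every edge is charged at most a constant multiple of its optimal-tree cost:
\[
 \sum_{t:\,e\in\mathcal{X}_t} q_t(e)\ \le\ 9\,|\mathcal{T}^*[e]|\qquad\text{for every edge }e.
\]
To prove this, set $\mu = \mathrm{LCA}_{\mathcal{T}^*}(e)$, so $|\mathcal{T}^*[e]| = |V_\mu|$, and note $q_t(e) = |V_\mu\cap V_t|\le m_t$ whenever $e\subseteq V_t$. If $e\in\mathcal{X}_t$, then $N_t$ is a strict descendant of $\mu$, so $m_t/3\le |S_t|\le |V_{N_t}| < |V_\mu|$, i.e. $m_t < 3|V_\mu|$. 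Moreover, the nodes $t$ whose cluster contains both endpoints of $e$ form a chain in $\cT$ along which $m_t$ drops by a factor at least $3/2$ per step (a $1/3$-balanced cut leaves both sides of size $\le 2m_t/3$). Hence the relevant $t$'s form a geometrically shrinking sub-chain whose largest cluster has size below $3|V_\mu|$, so $\sum_{t:\,e\in\mathcal{X}_t}q_t(e)\le\sum_{t:\,e\in\mathcal{X}_t}m_t\le 3|V_\mu|\sum_{j\ge0}(2/3)^j = 9|V_\mu|$. Combining the two displays, $\textsf{cost}_G(\cT)\le \tfrac{9\alpha}{2}\sum_{e}w(e)\cdot 9|\mathcal{T}^*[e]| = O(\alpha)\cdot\textsf{cost}_G(\mathcal{T}^*)$.

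I expect the last step to be the obstacle. The naive alternative — bounding the split cost at $t$ by $O(\alpha)$ times the cost of the restriction of $\mathcal{T}^*$ to $V_t$, then summing over a single level of the recursion (where the $V_t$'s partition $V$, so these restrictions cost at most $\textsf{cost}_G(\mathcal{T}^*)$ in total) over the $O(\log n)$ levels of a balanced recursion — only gives an $O(\alpha\log n)$ approximation, i.e. Dasgupta's original bound. Removing the logarithm is exactly the Charikar--Chatziafratis refinement, and it relies on the finer bookkeeping above: one charges a cut at node $t$ not to the whole subproblem but to the \emph{individual} $\mathcal{T}^*$-edges it separates, and then exploits the two facts special to \emph{balanced} cuts — that such an edge $e$ can be charged only while $m_t = O(|\mathcal{T}^*[e]|)$, and that cluster sizes along any root-to-leaf path of $\cT$ decay geometrically — to cap the total charge on each edge at $O(|\mathcal{T}^*[e]|)$. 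The remaining ingredients (the split-cost identity, the descent lemma, and handling constant-size subgraphs, which contribute $O(1)$ each) are routine.
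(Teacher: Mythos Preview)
Your proof is correct and is precisely the Charikar--Chatziafratis charging argument the paper cites for this proposition. The paper does not supply its own proof of \Cref{prop:scut-hc}, but it reproduces essentially the same argument---phrased via the edge-cost-footprint formalism of \Cref{def:edge-cost-footprint} rather than your direct $q_t(e)$ bookkeeping---in \Cref{app:balanced-charging-proof} to prove the closely related \Cref{lem:hc-cost-mult-term}.
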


A variety of privacy models have been studied for graph-theoretic problems with differential privacy. The key difference lands on the definition of \emph{neighboring graphs}, which determines the sensitive information to be protected by DP. Towards this end, most problems are studied under one or several of the following three privacy models: node-level~\cite{blocki2013differentially, kalemaj2023node,kasiviswanathan2013analyzing, ullman2019efficiently}, edge-level~\cite{arora2019differentially, blocki2012johnson,  dalirrooyfard2024nearly, eliavs2020differentially, gupta2012iterative, imola2023differentially, liu2024optimal} and weight-level~\cite{chen2023differentially, deng2023differentially, fan2022breaking, fan2022distances,sealfon2016shortest}. The focus of this study is the weight-level DP.

\begin{definition}[Neighboring weights]
\label{def:weight-neighbor-vanilla}
    For a graph $G = (V,E)$, let $w, w': V \times V\rightarrow \mathbb{R}_{\geq 0}$ be two weight functions that map any $(u,v) \in E$ to a non-negative real number, we say $w, w'$ are neighboring, denoted as $w \sim w'$ if:
    $\sum_{(u,v)\in E}|w(u,v)-w'(u,v)| \leq 1.$ 
\end{definition}

Now we can formally define weight-differential privacy on a graph $G$.

\begin{definition}[Differential Privacy]
    \label{def:dp}
    An algorithm $\mathcal{A}$ is $(\eps,\delta)$-DP on a graph $G=(V,E)$, if for any neighboring weights $w\sim w'$ such that $G'=(V,E,w')$, and any set of output $\mathcal{C}$, it holds that:
    \begin{displaymath}
    \Pr[\mathcal{A}(G)\in \CC] \leq e^{\varepsilon}\cdot \Pr[\mathcal{A}(G') \in \CC ]+\delta.
    \end{displaymath}
    If $\delta=0$, we say $\mathcal{A}$ is $\varepsilon$-differentially private on $G$.
\end{definition}

More standard technical tools in differential privacy are deferred to \Cref{app:prelim}.


\section{Private Algorithm for Hierarchical Clustering}
\label{sec:alg-new-dp-model}
We start by formalizing our privacy model rooted in the weight-level DP that we believe is more appropriate for hierarchical clustering. 
This model has two assumptions in addition to neighboring weights, formally as below.
\begin{definition}[Neighboring weights]
    \label{def:new-weight-dp}
    For a graph $G = (V,E)$ and weight functions $w, w':E\rightarrow \mathbb{R}^{\geq 0}$, we say $w, w'$ are neighboring, denoted as $w \sim w'$ if: (1) $G$ is a connected component. (2) For any $e\in E$, $w(e) \geq 1$. (3) $ \sum_{e\in E}\|w(e)-w'(e)\| \leq 1$.
\end{definition}

We remark that the model in \Cref{def:new-weight-dp} is a natural extension of the weight-neighboring notion in \Cref{def:weight-neighbor-vanilla}, as discussed in \Cref{sec:intro}. 
Our main algorithm for HC under this model is as follows.

\begin{theorem}[Formalization of Result~\ref{rst:hc-new-upper}]
    \label{thm:hc-new-upper}
    There exists a polynomial-time $\eps$-weight DP algorithm for any $\eps>0$, such that given a weighted, connected graph $G$ of size $n$ with weight on each edge at least 1, outputs a balanced HC tree $\mathcal{T}$ such that the HC cost by $\mathcal{T}$ is at most $O(\frac{\log^{1.5}{n}}{\eps})\cdot \OPT_{G}$,
    where $\OPT_{G}$ is the optimal HC cost of $G$ under Dasgupta's objective.
\end{theorem}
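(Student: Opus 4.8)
The plan is to implement the input-perturbation strategy sketched in the techniques overview, combined with the recursive balanced sparsest cut framework of \Cref{prop:scut-hc}. Concretely, the algorithm is: given the connected graph $G=(V,E,w)$ with $w(e)\geq 1$ for all $e$, form a new weight function $w'(e) = w(e) + c\log n/\eps$ for a suitable constant $c$, then release $\widetilde{w}(e) = w'(e) + \lap(1/\eps)$ independently on each edge, obtaining $\widetilde{G}$. Finally, run the (non-private) recursive $1/3$-balanced sparsest cut algorithm on $\widetilde{G}$ to produce the HC tree $\cT$. Privacy is the easy part: since $w\sim w'$ means $\|w-w'\|_1\le 1$ and the bump-up term is data-independent, the map $w\mapsto w'$ has $\ell_1$-sensitivity $1$, so adding independent $\lap(1/\eps)$ noise to each coordinate gives $\eps$-DP by the Laplace mechanism and the fact that post-processing (running the deterministic HC routine on $\widetilde{G}$) preserves DP. I would state this first as a short lemma.

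The utility analysis is the crux, and I would organize it around a single deterministic event. First I would condition on the event $\gE$ that $|\widetilde{w}(e)-w'(e)| \le c'\log n/\eps$ simultaneously for all $m\le n^2$ edges; by a union bound over Laplace tails this holds with probability, say, $\ge 1-1/\poly(n)$ (and I would remark the failure case contributes negligibly, or handle it by truncating the noise). On $\gE$, for every edge $e$ we have $w(e) \le w'(e) \le \widetilde w(e) + c'\log n/\eps$ and also $\widetilde w(e) \le w'(e) + c'\log n/\eps \le w(e) + (c+c')\log n/\eps \le (1 + (c+c')\log n/\eps)\, w(e)$, where the last inequality uses the unit-weight assumption $w(e)\ge 1$. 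In other words, on $\gE$, the weights $\widetilde w$ and $w$ are within a multiplicative factor $\Theta(\log n/\eps)$ of each other, edge by edge. This two-sided multiplicative sandwich is exactly what the bump-up buys us: it converts the additive Laplace noise into a bounded multiplicative distortion, precisely because no weight can be smaller than $1$.

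Given the edge-wise multiplicative sandwich, I would push it through two layers. Layer one: for any vertex set $S$, $\psi_{\widetilde G}(S) = w_{\widetilde G}(S,\bar S)/|S|$ and $\psi_G(S)$ are within a factor $\Theta(\log n/\eps)$ of each other, since numerators are sums of edge weights each distorted by that factor and denominators agree; consequently an exact (or constant-factor approximate) balanced sparsest cut computed on $\widetilde G$ is an $O(\log n/\eps)$-approximate balanced sparsest cut for $G$. This is true not just for the top-level cut but for every vertex-induced subgraph encountered in the recursion, since the sandwich is preserved under taking induced subgraphs. Layer two: invoke \Cref{prop:scut-hc} with $\alpha = O(\log n/\eps)$ to conclude that $\cT$ is an $O(\log n/\eps)$-approximation to $\OPT_G$. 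Wait — that would give $\log n$, not $\log^{1.5}n$; the extra $\sqrt{\log n}$ enters because the balanced sparsest cut subroutine that runs in polynomial time is itself only an $O(\sqrt{\log n})$-approximation (the Arora–Rao–Vazirani / Leighton–Rao type guarantee), so the effective $\alpha$ in \Cref{prop:scut-hc} is $O(\sqrt{\log n})\cdot O(\log n/\eps) = O(\log^{1.5} n/\eps)$, yielding the claimed bound. Polynomial running time follows since each of the $O(n)$ recursive calls runs a poly-time SDP-based balanced-cut approximation.

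The main obstacle I anticipate is making Layer one fully rigorous across the whole recursion while respecting that balanced sparsest cut is only approximable: I need that a $\gamma$-balanced cut that is $\beta$-approximately sparsest in $\widetilde G$ remains $O(\beta\log n/\eps)$-approximately sparsest in $G$ with the \emph{same} balance parameter, and that the balance constraint ($|S|\in[\frac13 n, \frac23 n]$ or whatever \Cref{prop:scut-hc} demands) is unaffected by reweighting — balance is purely combinatorial, so that part is fine, but I should be careful that the approximation notion in \Cref{prop:scut-hc} is purely multiplicative so the clean multiplicative sandwich suffices without tracking additive slack. A secondary technical point is the low-probability failure event $\bar\gE$: I would either use a truncated Laplace mechanism (so the sandwich holds deterministically at a mild cost in the DP constant) or argue that since the worst-case HC cost is at most $n^2 \cdot \max_e w(e)$ and the noise is polynomially bounded with overwhelming probability, the contribution of $\bar\gE$ to the expected cost is lower-order; I would also note that the cost is never worse than the trivial additive bound, matching the abstract's claim. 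I would close by verifying the statement's exact phrasing (polynomial time, $\eps$-weight DP, balanced tree, multiplicative $O(\log^{1.5}n/\eps)$) against what the argument delivers.
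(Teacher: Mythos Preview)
Your proposal is correct and follows essentially the same approach as the paper: bump up each edge weight by $\Theta(\log n/\eps)$, add $\lap(1/\eps)$ noise, run the $O(\sqrt{\log n})$-approximate balanced sparsest cut recursively, and invoke \Cref{prop:scut-hc}; privacy is by the Laplace mechanism plus post-processing. The only minor stylistic difference is that you establish an edge-wise two-sided multiplicative sandwich $w(e)\le \widetilde w(e)\le (1+O(\log n/\eps))\,w(e)$ and push it through all cuts at once, whereas the paper argues directly on the optimal cut (using $k\le w_G(S,\bar S)$ from the unit-weight assumption and a Laplace-sum concentration bound) together with the implicit one-sided bound $\psi_G\le \psi_{G''}$; both routes yield the same $O(\log^{1.5}n/\eps)$ factor and your version is arguably a touch cleaner.
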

A few remarks are in order. 
If exponential time is allowed, we can achieve $O(\frac{\log{n}}{\eps})$ multiplicative approximation \emph{or} $O(\frac{n^2\log{n}}{\eps})$ additive error. 
Furthermore, by using an algorithm that could privately evaluate the HC tree costs, we could augment our algorithm with the algorithm in \cite{imola2023differentially} to obtain an $(\eps, \delta)$-DP algorithm with a cost at most 
\begin{align*}
\resizebox{\hsize}{!}{
$\min\Big\{O(\frac{\log^{1.5}{n}\sqrt{\log{(1/\delta)}}}{\eps}) \cdot \OPT_{G} + O(\frac{n\log{n}\sqrt{\log{(1/\delta)}}}{\eps}), \, O(\sqrt{\log{n}}) \cdot \OPT_{G} + O(\frac{n^{2.5} \log^2{n} \log^{2}(1/\delta)}{\eps})\Big\}$,
}
\end{align*}
which strictly improves the bound of \cite{imola2023differentially}.
Finally, if the graph has minimum weight of $\alpha \in (0,1]$, our algorithm could still achieve $\eps$-weight privacy with $O(\frac{\log^{1.5}{n}}{\eps\cdot \alpha})$-approximation error (see \Cref{prop:hc-flexible} for details). 
We refer keen readers to \Cref{app:hc-add-error-bound} for such discussions; the rest of this section is dedicated to proving \Cref{thm:hc-new-upper}.

\subsection{A Polynomial-time \texorpdfstring{$\eps$}{eps}-DP Algorithm for Balanced Sparsest Cut}
\label{subsec:proof-of-sparsest-cut-guarantee}
We prove \Cref{thm:hc-new-upper} by showing a polynomial-time $\varepsilon$-DP algorithm for the balanced sparsest cut problem with multiplicative error at most $O(\log^{1.5} (n)/\eps)$, under the weight-level privacy model following \Cref{def:new-weight-dp}. 
Our algorithm follows a simple procedure. 
First, overlay the input graph $G$ by adding an extra additive weight of $O(\log n/\eps)$. 
Second, add independent Laplace noise $\textsf{Lap}(1/\eps)$ to every edge weight. 
Third, run a balanced sparsest cut algorithm (in the classical setting) on the perturbed graph and output the cut. 
Privacy of the mechanism is guaranteed by the Laplace mechanism used in the second step, also known as \emph{input perturbation}, and the first step ensures that, with high probability, all the edge weights are positive. 
Therefore, we are allowed to run any sparsest cut algorithm by the post-processing theorem (\Cref{prop:post-process}) in the third step. 
The output cut has a relatively small error because the magnitude of the perturbation is also small.

We still have to specify an algorithm to produce the balanced sparsest cut. It is known that a relaxation of this problem admits an $O(\sqrt{\log n})$-approximation when all edge weights are non-negative.

\begin{proposition}[\cite{charikar2017approximate,krauthgamer2009partitioning}, rephrased]
    \label{prop:scut-sdp}
    There exists an algorithm that given a graph $G=(V,E,w)$ such that $w(e)\geq 0$ for all $e\in E$, returns an $O(\sqrt{\log n})$ approximation to the balanced sparsest cut problem in polynomial time. Here, $n$ is the size of the input graph.
\end{proposition}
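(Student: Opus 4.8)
The plan is to prove Proposition~\ref{prop:scut-sdp} by combining the semidefinite relaxation of balanced sparsest cut with the Arora--Rao--Vazirani rounding, packaged into the bicriteria form that the rest of this section uses. Concretely, I would first set up the SDP: assign each vertex $i\in V$ a vector $v_i\in\R^n$ and minimize $\sum_{(i,j)\in E} w(i,j)\,\|v_i-v_j\|^2$ subject to (a) $\|v_i\|^2=1$ for all $i$; (b) the squared-Euclidean (negative-type) triangle inequalities $\|v_i-v_j\|^2+\|v_j-v_k\|^2\ge\|v_i-v_k\|^2$ for every triple $i,j,k$; and (c) a spreading constraint $\sum_{i<j}\|v_i-v_j\|^2\ge \Omega(n^2)$ that enforces constant balance. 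Embedding a genuine $\Theta(1)$-balanced cut $(S,\bar S)$ by $v_i=+e_1$ on $S$ and $v_i=-e_1$ on $\bar S$ is feasible, so the optimal SDP value is at most a constant times $w(S^\ast,\bar S^\ast)$; and since the SDP has polynomially many constraints it can be solved to $(1+o(1))$ accuracy in polynomial time via the ellipsoid or interior-point method.

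Second, I would invoke the ARV structure theorem: any family of unit vectors satisfying (b) and (c) contains sets $L,R\subseteq V$, each of size $\Omega(n)$, with $\|v_i-v_j\|^2\ge\Delta$ for all $i\in L$, $j\in R$, where $\Delta=\Omega(1/\sqrt{\log n})$. Rounding then uses the standard sweep: the truncated distance $f(i)=\min\{\operatorname{dist}_{\ell_2^2}(v_i,L),\Delta\}$, where $\operatorname{dist}_{\ell_2^2}(v_i,L)=\min_{k\in L}\|v_i-v_k\|^2$, satisfies $|f(i)-f(j)|\le\|v_i-v_j\|^2$ by (b), so $\sum_{(i,j)\in E}w(i,j)\,|f(i)-f(j)|\le \sum_{(i,j)\in E}w(i,j)\,\|v_i-v_j\|^2=\mathrm{SDP}$. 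Averaging over a uniform threshold $t\in[0,\Delta)$ produces a level cut $(S,\bar S)$ with $w(S,\bar S)\le \tfrac1\Delta\,\mathrm{SDP}=O(\sqrt{\log n})\cdot\mathrm{SDP}$; since $f\equiv0$ on $L$ and $f\equiv\Delta$ on $R$, one side contains $L$ and the other contains $R$, so both sides have $\Omega(n)$ vertices and the cut is $\Theta(1)$-balanced. Because all sets involved have size $\Theta(n)$, translating between $w(S,\bar S)$ and the sparsity ratio $w(S,\bar S)/|S|$ costs only a constant, yielding an $O(\sqrt{\log n})$ approximation to $\phi_G$ restricted to balanced cuts; a binary search over the balance parameter, together with the fact that Proposition~\ref{prop:scut-hc} only needs constant balance, delivers the $\gamma$-balanced form (e.g. $\gamma=1/3$) as stated.

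The main obstacle is the ARV structure theorem itself — that negative-type metrics obeying the spreading constraint admit linearly large, $\Omega(1/\sqrt{\log n})$-separated sets. This is the technical core of Arora--Rao--Vazirani, proved via measure concentration on the sphere and a chaining/matching argument, and since it is a well-established result I would cite it through \cite{charikar2017approximate,krauthgamer2009partitioning} rather than reprove it; the content of the ``rephrasing'' in the statement is exactly the assembly of that theorem, the sweep rounding, and the balanced-separator reduction into the bicriteria guarantee used here. The remaining points are routine: verifying that the SDP, though it relaxes exponentially many cuts, is solvable in polynomial time (polynomial constraint count); checking the constant-factor normalization between min-cut cost and sparsity on $\Theta(n)$-sized sets; and confirming that feeding a $\Theta(1)$-balanced, $O(\sqrt{\log n})$-sparsity cut into the recursion of Proposition~\ref{prop:scut-hc} keeps the recursion depth $O(\log n)$, which is all that the hierarchical-clustering application requires.
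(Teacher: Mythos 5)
The paper does not prove Proposition~\ref{prop:scut-sdp} at all---it is imported as a black box from \cite{charikar2017approximate,krauthgamer2009partitioning}---and your reconstruction (negative-type SDP with spreading constraints, the ARV structure theorem giving two $\Omega(n)$-sized sets at $\ell_2^2$-distance $\Omega(1/\sqrt{\log n})$, and the truncated-distance threshold rounding yielding a $\Theta(1)$-balanced cut of weight $O(\sqrt{\log n})\cdot\mathrm{SDP}$) is exactly the standard argument behind those citations, including the correct observation that the guarantee is bicriteria, which is all that Proposition~\ref{prop:scut-hc} and the rest of the section require. So your proposal is correct and matches the approach the paper implicitly relies on; no gaps worth flagging.
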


Our algorithm is given as \bscdpalg and uses polynomial time. 
We remark that it is possible to adapt our algorithm to further improve the utility bound by a multiplicative $O(\sqrt{\log n})$ factor, by using an exponential-time algorithm that returns the optimal value of the balanced sparsest cut, rather than the polynomial-time algorithm in the third step of algorithm. 
However, the resulting algorithm would then also use exponential time. 

\begin{tbox}
    \textbf{\bscdpalg: An $\eps$-DP Algorithm for Balanced Sparsest Cut}\\
    \textbf{Input:} Graph $G = (V,E,w)$, $|V| = n$, privacy parameter $\eps > 0$.
    \vspace{-0.2em}
    \begin{enumerate}
        \item $\forall e \in E$, add $10\log n/\eps$ to the edge weight $w(e)$, denoted as $G'=(V,E,w')$
        \item $\forall e \in E$, add independent noise  $\sim \textsf{Lap}(1/\eps)$ to $w'(e)$, denoted as $G'' = (V,E, w'')$.
        \item Run the algorithm from \Cref{prop:scut-sdp} on $G''$ and return the cut $S$.
    \end{enumerate}
    
\end{tbox}

Now we provide the analysis of \bscdpalg. 
\begin{lemma}
\label{lem:bsc-dp}
The \bscdpalg algorithm is $\eps$-DP.
\end{lemma}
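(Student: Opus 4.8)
The plan is to decompose \bscdpalg into its three stages and track the privacy guarantee through each: the perturbation in Step~2 establishes $\eps$-DP, Steps~1 and~3 only transform inputs/outputs in a data-independent way, and the post-processing theorem glues everything together. First I would make explicit that in the weight-DP model of \Cref{def:new-weight-dp} the vertex set $V$ and edge set $E$ are public, so the only sensitive object is the weight vector $w \in \R_{\geq 0}^{E}$, and two neighboring inputs satisfy $\sum_{e \in E}|w(e)-w'(e)| \le 1$.

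For Step~1, observe that the map $w \mapsto w + \tfrac{10\log n}{\eps}\cdot \1$ (the coordinatewise shift on $E$) is deterministic and depends only on the public $E$. Hence for any neighboring $w \sim w'$, the shifted vectors differ by exactly $\sum_{e\in E}|w(e)-w'(e)| \le 1$ in $\ell_1$; the shift neither increases nor decreases sensitivity. For Step~2, adding independent $\textsf{Lap}(1/\eps)$ noise to each coordinate of the shifted vector is precisely the Laplace mechanism applied to a vector-valued query of $\ell_1$-sensitivity at most $1$ with noise scale $1/\eps$; by the standard Laplace mechanism guarantee (see \Cref{app:prelim}) this release of $w''$ — equivalently, of the graph $G''=(V,E,w'')$ — is $\eps$-DP. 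Finally, Step~3 feeds $G''$ into the algorithm of \Cref{prop:scut-sdp} and returns the resulting cut $S$; since $S$ is a (possibly randomized) function of $G''$ alone and of no further information about the sensitive weights, \Cref{prop:post-process} implies the whole pipeline remains $\eps$-DP. Composing the three observations yields the lemma.

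I do not expect a genuine obstacle here; the one point that needs to be stated carefully rather than overcome is that the public nature of $(V,E)$ is exactly what makes both the deterministic shift in Step~1 and the $E$-indexed noise in Step~2 data-independent, so the released object is a pure function of the sensitive weight vector with $\ell_1$-sensitivity $1$. I would also remark in passing that the Laplace noise may drive some perturbed weights negative; this matters only for invoking \Cref{prop:scut-sdp} in the utility analysis (handled separately) and is irrelevant to privacy, since post-processing applies regardless of how Step~3 behaves on such inputs.
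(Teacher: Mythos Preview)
Your proposal is correct and follows essentially the same approach as the paper: the privacy comes from Step~2 being the Laplace mechanism on a query of $\ell_1$-sensitivity $1$, with Step~3 handled by post-processing. The paper's own proof is a one-line version of this (it simply says ``input perturbation'' and cites \Cref{prop:lap-privacy}), while you additionally make explicit that the deterministic shift in Step~1 preserves the $\ell_1$-sensitivity and that $(V,E)$ being public is what makes the mechanism well-defined---points the paper leaves implicit but which are worth stating.
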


\begin{proof}
The privacy guarantee follows from Step 2, where we are applying input perturbation. Each independent edge weight has sensitivity 1 corresponding to data publishing, therefore adding noise of $\textsf{Lap}(1/\eps)$ suffices for $\eps$-DP (\Cref{prop:lap-privacy}).
\end{proof}

Next, we show the proof of the multiplicative utility bound, captured by the lemma below.
\begin{lemma}
    \label{lem:bsc-utility}
    With high probability, for any graph $G=(V,E,w)$ whose minimum weight is at least $1$, the \bscdpalg algorithm returns a balanced cut in polynomial-time with at most $O(\log^{1.5} n/\eps)$-approximation to the sparsity of the balanced sparsest cut.
\end{lemma}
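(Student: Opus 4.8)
The plan is to show that the noise added in Steps 1 and 2 of \bscdpalg perturbs the sparsity of every cut by only a small additive amount, and that this additive perturbation is absorbed into a multiplicative factor thanks to the unit-weight assumption. Let $G=(V,E,w)$ be the input with $w(e)\geq 1$ for all $e$, let $G''$ be the perturbed graph, and let $t = 10\log n/\eps$ be the bump added in Step 1. First I would set up the high-probability event: since each $\textsf{Lap}(1/\eps)$ variable exceeds $c\log n/\eps$ in absolute value with probability $n^{-\Omega(c)}$, a union bound over the $\binom{n}{2}$ possible edges shows that with high probability every noise term satisfies $|Z_e|\leq \frac{1}{2}t = 5\log n/\eps$. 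On this event, each edge weight in $G''$ lies in $[w(e)+\tfrac12 t,\ w(e)+\tfrac32 t]$, so in particular all edge weights in $G''$ are strictly positive and \Cref{prop:scut-sdp} applies, giving a balanced cut $S$ with $\psi_{G''}(S)\leq O(\sqrt{\log n})\cdot \phi_{G''}$ where $\phi_{G''}$ is the (balanced) sparsity of $G''$.

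Next I would relate sparsity in $G''$ to sparsity in $G$ for an arbitrary cut $(A,\bar A)$ with $|A|\leq n/2$. Writing $m(A)=|E(A,\bar A)|$ for the number of crossing edges, the bump and noise change the cut weight by at most $m(A)\cdot\tfrac32 t$ above and by at least $m(A)\cdot\tfrac12 t\geq 0$ below, so
\[
w(A,\bar A) \;\leq\; w''(A,\bar A) \;\leq\; w(A,\bar A) + \tfrac32 t\cdot m(A).
\]
Dividing by $|A|$, and using the crucial observation that $w(e)\geq 1$ implies $m(A)\leq w(A,\bar A)$, we get $\psi_G(A)\leq \psi_{G''}(A)\leq (1+\tfrac32 t)\,\psi_G(A)$. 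Applying the upper bound to the true sparsest cut $S^*$ of $G$ gives $\phi_{G''}\leq \psi_{G''}(S^*)\leq (1+\tfrac32 t)\phi_G$, and applying the lower bound to the returned cut $S$ gives $\psi_G(S)\leq \psi_{G''}(S)$. Chaining these with the SDP guarantee,
\[
\psi_G(S)\;\leq\;\psi_{G''}(S)\;\leq\;O(\sqrt{\log n})\cdot\phi_{G''}\;\leq\;O(\sqrt{\log n})\cdot(1+\tfrac32 t)\cdot\phi_G\;=\;O\!\left(\tfrac{\log^{1.5} n}{\eps}\right)\cdot\phi_G,
\]
since $t=\Theta(\log n/\eps)$. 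The cut $S$ is balanced because the algorithm of \Cref{prop:scut-sdp} returns a balanced cut; the running time is polynomial since Steps 1--2 are linear in $|E|$ and Step 3 is polynomial.

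The main obstacle — and the only place the unit-weight hypothesis is really used — is the inequality $m(A)\leq w(A,\bar A)$, which converts the additive perturbation $\tfrac32 t\cdot m(A)$ into a multiplicative factor on $\psi_G(A)$; without a lower bound on edge weights this step fails, which is exactly the phenomenon described in the techniques overview (a high-sparsity unit-weight cut being swamped by noise). A secondary point to handle carefully is the two-sided control of the noise: one needs $|Z_e|\leq \tfrac12 t$ rather than just a one-sided bound, so that after the bump all weights stay positive (for post-processing/\Cref{prop:scut-sdp}) \emph{and} so that $w''(A,\bar A)\geq w(A,\bar A)$, which is what makes $\psi_G(S)\leq\psi_{G''}(S)$ hold for the returned cut. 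Both are immediate from the Laplace tail bound and a union bound over at most $\binom n2$ edges, so no delicate calculation is required beyond bookkeeping the constants.
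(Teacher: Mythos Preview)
Your proposal is correct and follows essentially the same approach as the paper: establish positivity of $G''$ so that \Cref{prop:scut-sdp} applies, use the unit-weight hypothesis $m(A)\le w(A,\bar A)$ to convert the additive $O(\log n/\eps)$ perturbation into a multiplicative factor, and compose with the $O(\sqrt{\log n})$ SDP guarantee. The only minor difference is that the paper bounds the aggregate noise on the optimal cut via the sum-of-Laplace concentration inequality (\Cref{prop:lap-sum}) rather than your per-edge tail bound plus union bound; your variant is arguably cleaner and also makes the inequality $\psi_G(S)\le\psi_{G''}(S)$ for the returned cut explicit, which the paper leaves implicit.
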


\begin{proof}
    First note that the SDP algorithm runs on graphs with positive edge weights. By the following claim we show this can be satisfied in \bscdpalg with high probability. 
    
    \begin{claim}
        \label{clm:positive-weights}
        For $G'' = (V,E,w'')$ defined in \bscdpalg, with high probability, it holds for any edge $e \in E$ that $w''(e) \geq 0$.
    \end{claim}
    \begin{proof}
        Let $X \sim \textsf{Lap}(1/ {\eps})$. Then we have:
        \begin{align*}
            \Pr( X \leq -10\log n/\eps) \leq e^{-10\log n} = {n^{-10}}
        \end{align*}
        By the union bound, all edge weights are non-negative with high probability. \myqed{\Cref{clm:positive-weights}}
    \end{proof}
    
    We now continue with the proof. The multiplicative factor of $O(\log^{1.5}n/\eps)$ comes in two parts. The algorithm in \Cref{prop:scut-sdp} contributes an $O(\sqrt{\log n})$ factor, thus it suffices to show that the perturbation in step 2 leads to $O(\log n/\eps)$-approximation.
    Suppose the optimal cut of $G$ is $S$ such that $\phi_G = \frac{w(S,\bar{S})}{|S|}$. Let $S''$ be the optimal sparsest cut on $G''$. We have
    \begin{align*}
        \frac{w_{G''}(S'',\bar{S''})}{|S''|} &\leq \frac{w_{G''}(S,\bar{S})}{|S|} \tag{by the definition of sparsest cut of $G''$} \\
        & = \frac{w_{G'}(S,\bar{S}) + \sum_{i=1}^{k} X_i}{|S|} \tag{by our construction} \\
        &= \frac{w_{G}(S,\bar{S}) + \frac{10k\log n}{\eps}+\sum_{i=1}^{k} X_i}{|S|}  = \phi_{G} + \frac{\frac{10k\log n}{\eps}+\sum_{i=1}^{k} X_i}{|S|},
    \end{align*}
    where $k$ is the number of edges involved in the cut $(S, \bar{S})$, and $\{X_i\}_{i=1}^k$ are independent random variables such that $X_i \sim \textsf{Lap}(1/\eps)$. Recall the definition of our privacy model, we have $w(e) \geq 1$ for $\forall e \in E$, then $k \leq w_G(S, \bar{S})$. Applying \Cref{prop:lap-sum}, we have:
    \begin{align*}
        \Pr \paren{ \left|\sum_{i=1}^{k} X_i \right| \geq \sqrt{80k \log n}/\eps} \leq 2\cdot \exp\paren{-\frac{80k\log n /{\eps^2}\cdot \eps^2}{8k}} = 2\cdot n^{-10}.
    \end{align*}
    That is, $\frac{1}{|S|}\sum_{i=1}^{k} X_i < \frac{10\sqrt{w(S,\bar{S})\log n}}{\eps |S|} < \frac{10\log n}{\eps} \cdot \phi_G$ holds with high probability.  
    It follows that: 
    \begin{align*}
        \frac{w_{G''}(S'',\bar{S''})}{|S''|} & \leq \phi_{G} + \frac{\frac{10k\log n}{\eps}+\sum_{i=1}^{k} X_i}{|S|}  < \phi_{G} +\frac{20\phi_G\log n}{\eps}    
        = O(\log n/\eps)\cdot \phi_G,
    \end{align*}
    as desired. \myqed{\Cref{lem:bsc-utility}}
\end{proof}

\subsection{The Complete \texorpdfstring{$\eps$}{eps}-DP Algorithm for Hierarchical Clustering}
\label{subsec:dp-hc-alg}
We present the $\eps$-DP hierarchical clustering algorithm using the $\eps$-DP balanced sparsest cut algorithm we discussed in the above section. As well-known in the literature, the sparsest cut algorithm is recursively called to construct an HC tree. The process, first proved by Dasgupta~\cite{dasgupta2016cost}, can be shown as follows.

\begin{tbox}
    \textbf{\hcalg: Recursive Sparsest Cut for Hierarchical Clustering}\\
    \textbf{Input:} Graph $G = (V,E,w)$
    \vspace{-0.2em}
    \begin{itemize}
        \item If $|V| = 1$, return leaf containing $V$. Otherwise, 
        \begin{itemize}
            \item Run the algorithm in \Cref{prop:scut-sdp} with input $(G)$, let the output be $S$. 
            \item Let $G_S$ be the induced subgraph by $S$, similarly $G_{\bar{S}}$ for $\bar{S}$.
            \item Run $\hcalg(G_S)$ and $\hcalg(G_{\bar{S}})$.
        \end{itemize}
    \end{itemize}
\end{tbox}

Now we are ready for the complete private algorithm for hierarchical clustering.

\begin{tbox}
    \textbf{\dphcalg: An $\eps$-DP Algorithm for Hierarchical Clustering}\\
    \textbf{Input:} Graph $G = (V,E,w)$, privacy parameter $\eps > 0$.
    \vspace{-0.2em}
    \begin{itemize}
        \item Run Steps 1 and 2 of \bscdpalg with input $G$ and $\eps$, let the perturbed graph be $G''$.
        \item Run $\hcalg(G'')$.
    \end{itemize}
    
\end{tbox}

\paragraph{Privacy guarantee of \dphcalg.} For the purpose of private hierarchical clustering, since \emph{all} cuts in the vertex-induced subgraphs of $G''$ is a function of $G''$, we can apply the post-processing theorem (\Cref{prop:post-process}) to show that the our algorithm remains $\eps$-DP. Note that this is how we avoid composition and any potential error blow-up therein.

\paragraph{Multiplicative approximation guarantee of \dphcalg.}
We have shown by \Cref{clm:positive-weights} that $G''$ has all edges with non-negative weights with high probability, therefore the \hcalg involving the balanced sparsest cut algorithm can proceed without violations. For the multiplicative approximation on \emph{every} partition, we can perform the analysis of \Cref{lem:bsc-utility} on each vertex-induced subgraph. There are at most $O(n)$ internal nodes, so we can apply a union bound to make \Cref{lem:bsc-utility} hold with high probability throughout the process of \dphcalg. This ensures we can run $O(\log^{1.5} n/\eps)$ approximation on all internal nodes, and obtain the desired multiplicative approximation by \Cref{prop:scut-hc}.

\section{Experimental Evaluations}
\label{sec:experiment}
We implement our new algorithm \dphcalg and demonstrate the experimental results in this section. 
First, we evaluate the performance of our algorithm in terms of Dasgupta's objective on synthetic and real-world graphs. 
Next, we show our algorithm has favorable scalability with large graphs ($n \geq 1000$).

\textbf{Datasets and baselines.} 
Following a sequence of previous works~\cite{cohen2017hierarchical, abboud2019subquadratic, manghiuc2021hierarchical, laenen2023nearly}, we generate two datasets of synthetic graphs from the stochastic block model (SBM) and hierarchical SBM (HSBM), then we select real-world datasets: IRIS, WINE and BOSTON from the scikit-learn package~\cite{pedregosa2011scikit}. 
Some basic statistics of the datasets used can be found in \Cref{tab:dataset} in \Cref{apx:exp}. 
For baseline methods satisfying weight-DP, we first consider \emph{input perturbation}, which adds $\Lap{1/\eps}$ random noise to each edge weight then applies the recursive sparsest cut algorithm on the perturbed graph. 
Input perturbation is a simple mechanism achieving DP but could possibly lead to $O(n^3)$ additive error to the Dasgupta's cost in the worst case. Next, we include single, average and complete linkage methods, which are widely used in practice for hierarchical clustering. We apply the same perturbation scheme as our proposed algorithm for a fair comparison.
Finally, we include the non-private cost for completeness.

\subsection{Performance on Dasgupta's Cost}
\textbf{Synthetic graphs.} 
For graphs from both SBM and HSBM, we give edge weights by sampling uniformly at random between 1 and 10. 
The results shown in this section are based on 10 different graphs generated by the same set of parameters where $p=0.7$ (intra-probability) and $q=0.1$ (inter-probability). 
We provide results on a wider range of parameters for SBM and HSBM in \Cref{apx:exp}. 
As for the choice of $\eps$, the privacy parameter, we test all algorithms on $\eps \in \{0.01,0.1,0.5,1,2\}$. 
Note that $\eps>1$ is already considered as a weak privacy guarantee. 

\vspace{-2mm}
\begin{figure}[h]
\begin{minipage}{0.48\linewidth}
    \centering
    \includegraphics[width=1\linewidth]{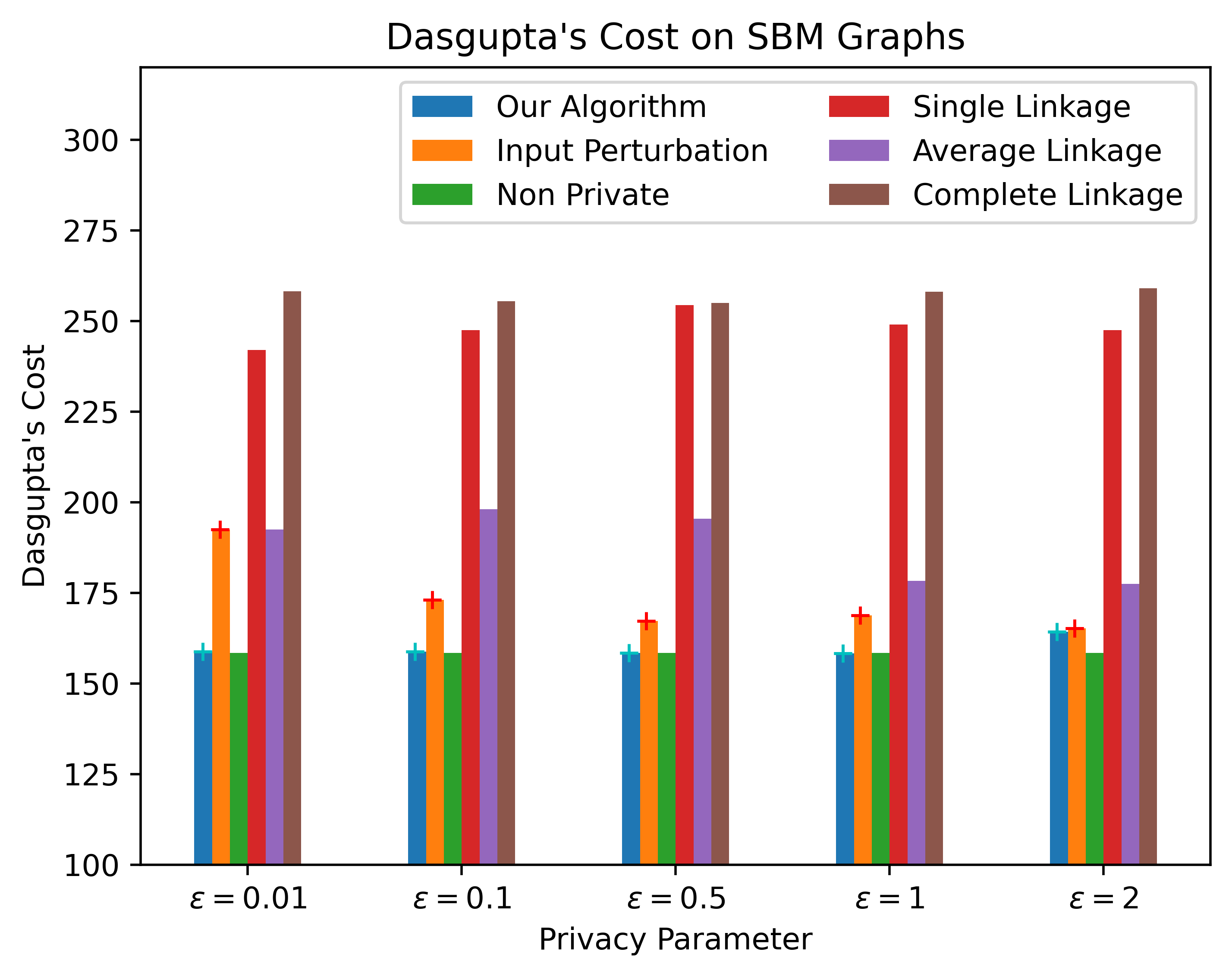}
    \vspace{-5mm}
    \caption{Comparison of Dasgupta's cost on SBM graphs of size $n=150$ and $k=5$.}
    \label{fig:sbm_results}
\end{minipage}
\hfill
\begin{minipage}{0.48\linewidth}
    \centering
    \includegraphics[width=1\linewidth]{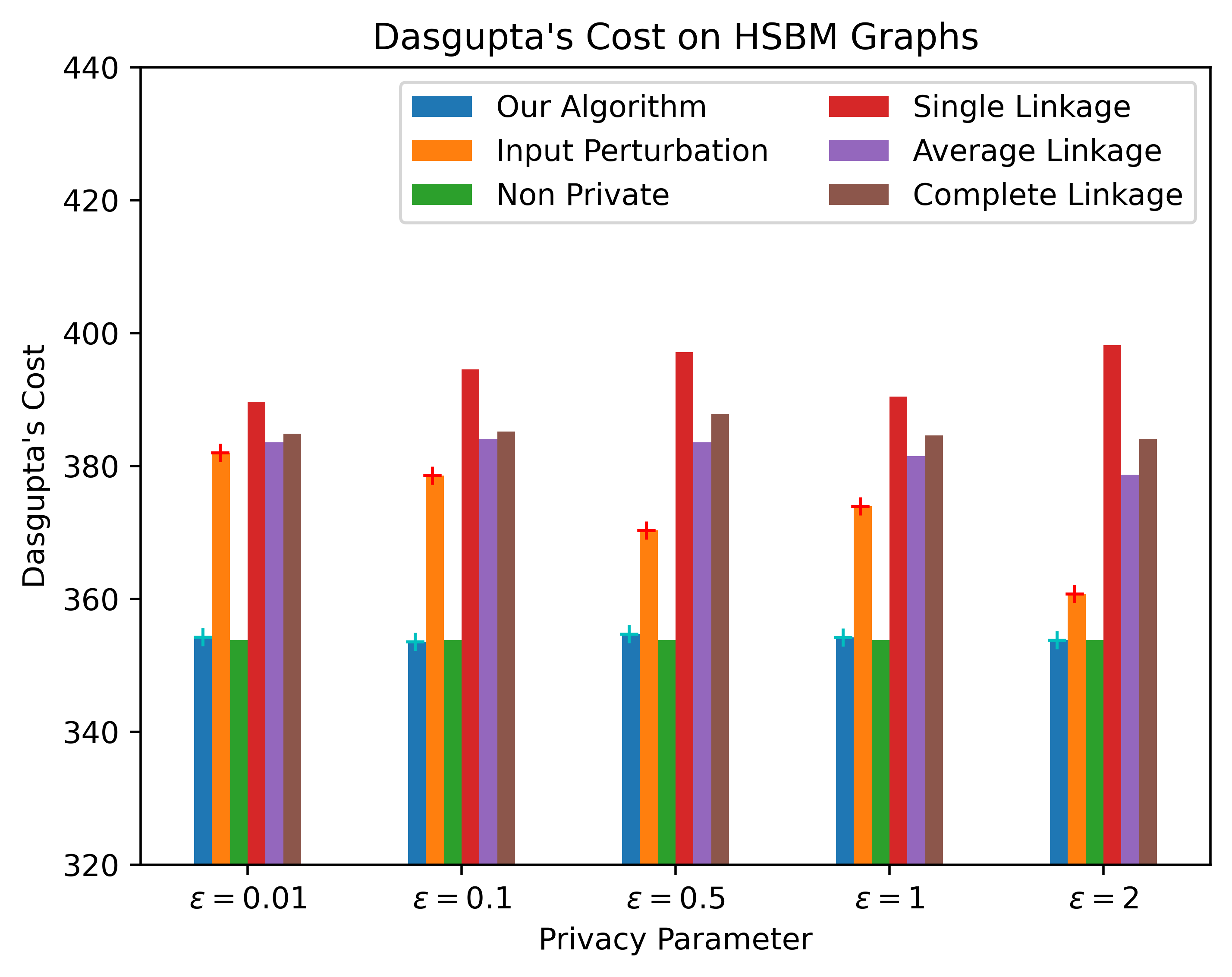}
    \vspace{-5mm}
    \caption{Comparison of Dasgupta's cost on HSBM graphs of size $n=150$ and $k=5$.}
    \label{fig:hsbm_results}
\end{minipage}
\end{figure}
\vspace{-2mm}

Our results are shown in \Cref{fig:sbm_results} for SBM graphs and \Cref{fig:hsbm_results} for HSBM graphs. 
The color bars attain the average costs, normalized by a factor of 1000, while the error bars indicate the max and min values, due to the randomness in graph generation and sampling Laplace noise. 
We can observe that our algorithm outperforms all other baselines by a considerable margin, especially when $\varepsilon$ is small. In addition, the cost of our algorithm is comparable to the non-private cost in some good runs. 
Note that SBM and HSBM graphs have well-clustered structures, therefore our algorithm has greater potential to perform well in practice when the input graph has an underlying clustered pattern.

\textbf{Real-world graphs.} 
Next, we evaluate our algorithm on real-world datasets. 
For each of the listed dataset, the similarity graph is constructed via the Gaussian kernel, where the parameters are chosen according to the standard heuristic~\cite{ng2001spectral}, with the details in \Cref{apx:exp}. 
We use the same values for $\eps$.

\vspace{-2mm}
\begin{figure}[h]
    \centering
    \includegraphics[width=1\linewidth]{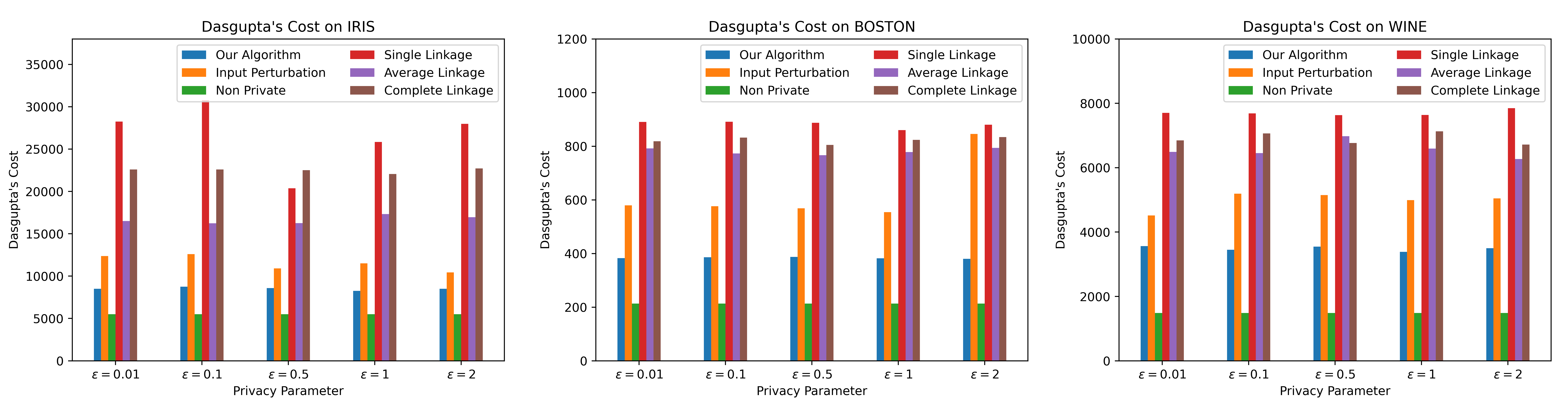}
    \vspace{-5mm}
    \caption{Comparison of Dasgupta's cost on real-world datasets: IRIS, WINE and BOSTON.}
    \label{fig:real-world}
\end{figure}

All results on different graphs are shown together in \Cref{fig:real-world}. The reported cost takes the average of five runs. Similarly as the synthetic graphs, our algorithm outperforms the Input Perturbation method in terms of the additional cost by a large margin. \Cref{fig:sbm_results}, \Cref{fig:hsbm_results} and \Cref{fig:real-world} collectively show that our algorithm produces consistently better results across the range of $\varepsilon$, the privacy parameter. 

An interesting observation is that contrary to many DP algorithms whose error is very sensitive with the choice of $\eps$, the error does not change drastically with various values of $\eps$. The reason is that, unlike conventional DP algorithms that release \emph{values}, our algorithm releases the \emph{HC trees}. The final cost is obtained by cost evaluation on the original graph. For the privately computed HC tree, even if we use a smaller $\eps$ value, the \emph{combinatorial structure} of the weights in the graph is relatively preserved by our algorithm. Therefore, the approximation error is less sensitive with the choice of $\eps$.

\subsection{Performance on Scalability} 

Scalability is a major gap between theory and practice on the performance of DP algorithms. Here we compare the running time of our algorithm, input perturbation and the algorithm for general graphs proposed in \cite{imola2023differentially} with our own implementation. Note that the results in \cite{imola2023differentially} are targeting SBM graphs, hence not applicable to general graphs. In fact, the algorithm for general graphs propopsed by \cite{imola2023differentially} has exponential running time due to the exponential mechanism, therefore may break down even for $n>10$. 

\begin{figure*}[h!]
\begin{adjustbox}{width=1\textwidth}
\begin{minipage}{0.55\linewidth}
    \centering
    \vspace{-5mm}
    \captionof{table}{ Comparison of Runtime (s) with $n=6,8,10$}
    \renewcommand{\arraystretch}{1.5}
    \label{tab:runtime}
    \begin{adjustbox}{width=1\textwidth}
    \begin{tabular}{c|ccc} 
    \thickhline
     HC Algorithm &  $n=6$ & $n=8$ & $n=10$ \\ 
    \hline
    \dphcalg (Our Algorithm) & 0.007 & 0.008 & 0.016\\
    \cite{imola2023differentially} (Our implementation) & 0.298 & 29.373 & >2h\\
    Input Perturbation & 0.006 & 0.012 & 0.017\\
    Non-private Algorithm & 0.007 & 0.008 & 0.013\\
    \thickhline
    \end{tabular}
    \end{adjustbox}
    
\end{minipage}%
\hspace{10pt}
\begin{minipage}{0.45\linewidth}
    \centering
    \vspace{-2mm}
    \includegraphics[width=1\linewidth]{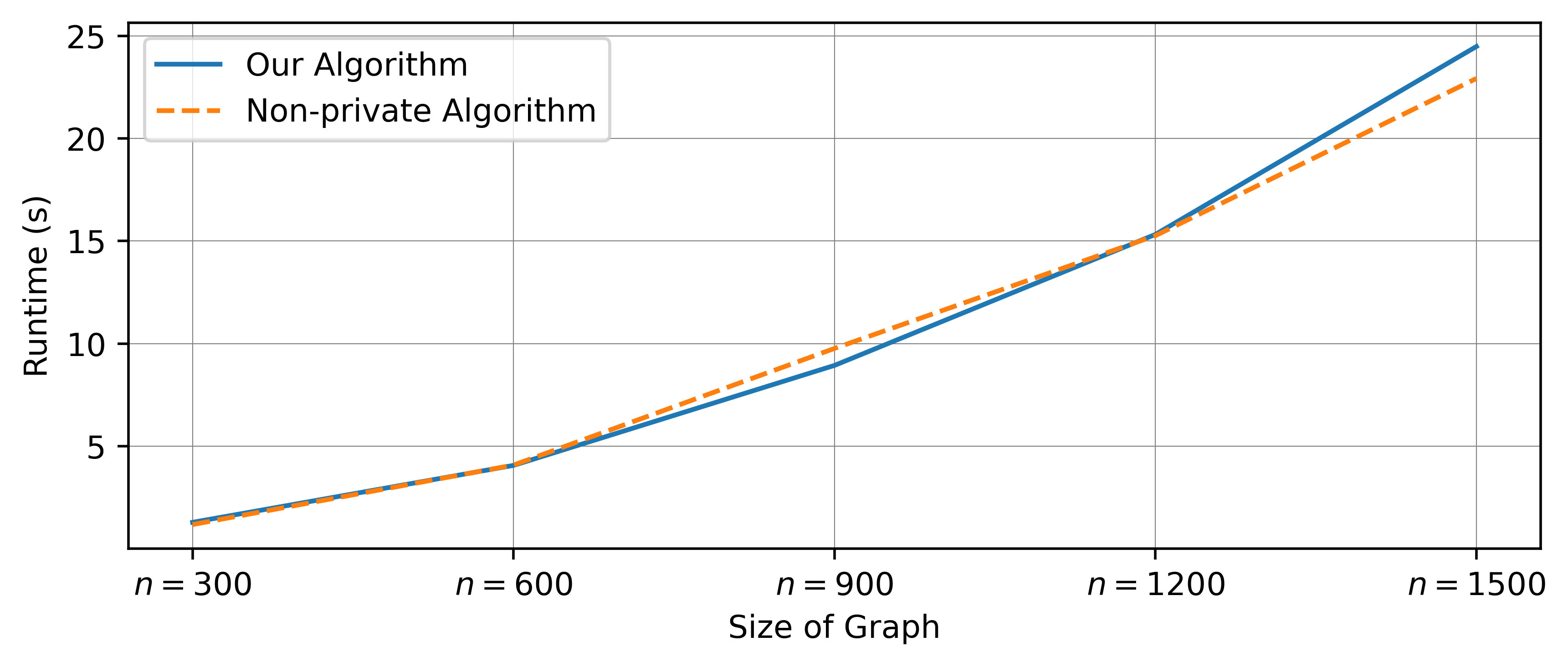}
    \vspace{-6mm}
    \captionof{figure}{Runtime with $n$ scaling up to 1500}
    \label{fig:scale}
\end{minipage} 
\end{adjustbox}
\end{figure*}

We first test all discussed HC algorithms on a cycle graph with $n=6,8,10$. As shown in \Cref{tab:runtime}, the DP algorithm in \cite{imola2023differentially} requires a huge amount of resources to handle a graph of size larger than 10. The rest three algorithms have relatively close running time. We further show in \Cref{fig:scale} that our algorithm scales also similarly as the standard non-private algorithm. The graphs used in \Cref{fig:scale} are generated from SBM with 5 clusters and varying sizes of each cluster.


\section{A Lower Bound for Hierarchical Clustering with Weight-DP}
\label{sec:lb-weight-level-dp}

Weight-differential privacy is a different notion than edge-differential privacy. 
Thus, it is not clear how the privacy-utility trade-off for hierarchical clustering differs across two privacy models (e.g.~\cite{chen2023differentially,sealfon2016shortest}). 
Specifically, the lower bound of $\Omega(n^2/\eps)$ for the edge-level DP does not rule out a better additive error for the weight-level DP model. 
In this section, we show that the $\Omega(n^2/\eps)$ lower bound can be extended to the weight-level DP model. 

\begin{theorem}[Formalization of Result~\ref{rst:hc-weight-lower}]
    \label{thm:hc-weight-lower}
    For any $\eps\in(0,\frac{1}{20})$ and $n$ sufficiently large, there exists a graph $G$ of size $n$ such that the optimal HC cost is $O(n/\eps)$, and any $\eps$-weight DP algorithm for hierarchical clustering must have HC cost $\Omega(n^2/\eps)$.
\end{theorem}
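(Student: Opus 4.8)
The plan is to adapt the random-$5$-cycle construction of \cite{imola2023differentially} to the weight-level model by \emph{hiding} the cycle structure inside the weights of a complete graph. Fix the topology to be $K_n$, so that the edge set is public but carries no information, and let the hard distribution be the following: pick a uniformly random family $\mathcal{C}$ of $m := n/10$ \emph{vertex-disjoint} $5$-cycles covering $n/2$ of the vertices and leaving a ``free'' pool of the other $n/2$ uncovered; set $w_\mathcal{C}(e) = \beta := c/\eps$ on each of the $5m$ cycle edges and $w_\mathcal{C}(e) = 0$ on every other pair, where $c$ is a small absolute constant fixed later. Neighboring weight functions differ by $\ell_1$ at most $1$ (Definition~\ref{def:weight-neighbor-vanilla}), so no unit-weight assumption is in force. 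Each $G_\mathcal{C}$ is the same graph up to relabeling, and its optimal HC cost is $O(\beta m) = O(n/\eps)$: place the five vertices of each cycle into a size-$5$ cluster and cluster everything else arbitrarily, since the weight-$0$ edges contribute nothing. This gives the ``optimal cost $O(n/\eps)$'' half of the statement.

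For the lower bound I would first record a deterministic structural fact: if $\mathcal{T}$ is any HC tree and $C$ is a $5$-cycle with edge weights $\beta$ whose five vertices first become separated inside a cluster of size $\ge \rho n$ — call such a cycle \emph{$\rho$-split} — then, since the $5$-cycle is connected, some cycle edge has its endpoints in the two children of that cluster, so it alone contributes $\ge \beta \rho n$ to $\mathrm{cost}_{G_\mathcal{C}}(\mathcal{T})$; hence $\mathrm{cost}_{G_\mathcal{C}}(\mathcal{T}) \ge \beta \rho n \cdot \#\{\text{$\rho$-split cycles of } \mathcal{C}\}$. It therefore suffices to show that for \emph{every} $\eps$-weight-DP algorithm $\mathcal{A}$ one has $\mathbb{E}_{\mathcal{C},\mathcal{A}}[\#\{\text{$\rho$-split cycles}\}] = \Omega(n)$ for a suitable constant $\rho$ (say $\rho = 1/4$); averaging then yields a specific instance on which $\mathcal{A}$ has expected cost $\Omega(\beta n^2) = \Omega(n^2/\eps)$.

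The core step is a per-cycle privacy argument. Fix a cycle index $i$ and let $G_\mathcal{C}^{-i}$ be $G_\mathcal{C}$ with the five edges of $C_i$ zeroed out; then $\|w_\mathcal{C} - w_\mathcal{C}^{-i}\|_1 = 5\beta = 5c/\eps$, so group privacy gives, for the event ``$\mathcal{A}$ outputs a tree in which the five vertices of $C_i$ lie in a common cluster of size $<\rho n$'',
\begin{equation*}
\Pr\nolimits_{\mathcal{A}}[\,C_i\text{ not }\rho\text{-split on }G_\mathcal{C}\,] \;\le\; e^{O(c)}\cdot \Pr\nolimits_{\mathcal{A}}[\,C_i\text{ not }\rho\text{-split on }G_\mathcal{C}^{-i}\,].
\end{equation*}
In $G_\mathcal{C}^{-i}$ the five vertices of $C_i$ carry no weight at all, and — this is where leaving half the vertices free is used — conditioned on the other cycles, those five vertices form a \emph{uniformly random} $5$-subset of the $\tfrac{n}{2}+5$ vertices not used by the other cycles, a set that is independent of the graph $G_\mathcal{C}^{-i}$ that $\mathcal{A}$ actually sees. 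For any fixed output tree, the maximal clusters of size $<\rho n$ partition the vertices into parts each of size $<\rho n$, so the probability that five uniformly random vertices of an $\Omega(n)$-size pool all land in one such part is at most $(2\rho)^4 \le 1/16$. Choosing $c$ small enough that $e^{O(c)}\le 8$ then gives $\Pr[C_i\text{ is }\rho\text{-split}] \ge 1 - 8\cdot\tfrac{1}{16} = \tfrac12$, and summing over the $m = n/10$ cycles gives $\mathbb{E}[\#\{\text{$\rho$-split cycles}\}] \ge m/2 = \Omega(n)$, as required.

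I expect the main obstacle to be exactly this ``hiding'' step: making rigorous that a weight-DP algorithm gains \emph{nothing} useful from the public $K_n$ topology. The resolution has two parts — (i) placing all the heavy weight on a random disjoint family of $5$-cycles makes ``which edges are heavy'', equivalently ``which clustering is cheap'', precisely the quantity that weight-DP must protect; and (ii) using only a constant fraction of the vertices for the cycles guarantees that once a cycle is deleted its vertices are statistically indistinguishable from generic free vertices, so the algorithm provably cannot cluster them tightly in the deleted world. A secondary point is bookkeeping the quantitative loss: the single choice $\beta = \Theta(1/\eps)$ must simultaneously keep the optimum at $O(n/\eps)$, keep the group-privacy factor $e^{\Theta(\beta\eps)}$ a small constant, and make each $\rho$-split cycle cost $\Omega(n/\eps)$; the admissible interval of constants is non-empty (this is where the hypothesis $\eps<1/20$ is convenient), but it should be tracked carefully.
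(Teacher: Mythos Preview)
Your argument is correct, and it takes a genuinely different route from the paper's.

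The paper proceeds by a \emph{packing} argument: it embeds the $5$-cycle family into $K_n$ with weight $W=\tfrac{1}{20\eps}$ on cycle edges and $n^{-3}$ elsewhere (all $n$ vertices covered), invokes Proposition~\ref{prop:5-cycle-lb} of \cite{imola2023differentially} to obtain $2^{n/5}$ instances whose low-cost balls are pairwise disjoint, and then applies \emph{global} group privacy across the whole family (the $\ell_1$ distance between any two instances is $\le 2Wn$, yielding a $2\eps Wn=\tfrac{n}{10}$ blow-up in the exponent) to reach the usual packing contradiction. By contrast, you run a \emph{per-cycle averaging} argument: you use only local group privacy (deleting one cycle costs $5\beta\eps=O(1)$ in the exponent) and you deliberately leave half the vertices uncovered so that, after deletion, the five cycle vertices are a uniformly random $5$-subset of an $\Omega(n)$-size pool that is independent of the input the algorithm sees. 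This is the key new idea in your proof and it replaces the packing lemma entirely.

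What each approach buys: the paper's argument is modular (it reduces cleanly to the edge-DP lower bound as a black box), but it needs the $2^{n/5}$ packing and the global group-privacy loss of $e^{\Theta(n)}$, which is why the constants are delicate. Your argument is self-contained and more elementary --- no packing, only a constant group-privacy factor per cycle --- and the free-pool trick makes the ``algorithm learns nothing about the deleted cycle'' step rigorous without any combinatorics of HC trees. The constant-factor group privacy also suggests your approach would extend more readily to $(\eps,\delta)$-DP, where the paper's $e^{\Theta(n)}$ factor would be fatal. One cosmetic point: your instances put weight exactly $0$ on non-cycle edges whereas the paper uses $n^{-3}$; this is immaterial for the weight-DP model of Definition~\ref{def:weight-neighbor-vanilla}, and your argument goes through verbatim with $n^{-3}$ if one prefers strictly positive weights.
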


\paragraph{Our hard family of graphs.} 
We now construct our family of the hard instances. 
The starting point of our lower bound is a similar construction in \cite{imola2023differentially}. 
In particular, \cite{imola2023differentially} used a hard family of instances as \emph{random disjoint 5 cycles}. 
Since we operate with the weight-level DP, the graph topology is known, and we can no longer use the disjointness in the 5 cycles. 
Nevertheless, we can still exploit this structural property by putting extremely small weights on all edges that are \emph{not} part of the 5 cycles. 
In this way, a low-cost algorithm still has to avoid cutting any cycle edges on the first levels of the HC tree -- a property that is in tension with the privacy guarantee. The description of the hard family is as follows.

\begin{tbox}
$\dist(n,\eps):$ {A hard family of graphs for hierarchical clustering with weight-level DP}
\begin{itemize}[leftmargin=15pt]
\item \textbf{Topology: } Let $G=(V,E)$ be a complete graph. 
\item \textbf{Edge weights: }
\begin{enumerate}
\item Sample a collection of $5$ cycles uniformly at random, and let this subgraph be $C_5$.
\item For each $(u,v)\in E(C_5)$, let $w(u,v)=W$ with $W=\frac{1}{20}\cdot \frac{1}{\eps}$.
\item Otherwise, for each $(u,v)\not\in E(C_5)$, let $w(u,v)=\frac{1}{n^3}$.
\end{enumerate}
\end{itemize}
\end{tbox}

For the clarity of presentation, we first formally define the notions of the family and the collection of the trees with cost at most $r$ for each graph $G$ therein.
\begin{definition}[Imola et al. \cite{imola2023differentially}, rephrased]
\label{def:5-cycle-notions}
Let $\GCf$ be the family of random disjoint $5$-cycles where edges carry weights of $1$. 
For each $G\in \GCf$, we define $\mathcal{B}(G,r):=\{\cT \mid \cost_{G}(\cT) \leq r\}$ be the family of HC trees such that the induced cost for $\cT$ on $G$ is at most $r$. We call $\mathcal{B}(G,r)$ the collection of the trees of cost at most $r$ for $G$.
\end{definition}
A key technical lemma of Imola et al. \cite{imola2023differentially} is that, if an algorithm is $\eps$-DP, then the HC algorithm has to ``incorrectly'' cut many edges in the first partition. As a result, for many instances in the family of $5$-cycles, their collections of trees of cost at most $O(n^2)$ are \emph{disjoint}. 
\begin{proposition}[Imola et al. \cite{imola2023differentially}, rephrased]
\label{prop:5-cycle-lb}
There exists a family of graphs $\FCf \subseteq \GCf$ such that $\card{\FCf}\geq 2^{n/5}$ and for any $G, G' \in \FCf$,  $\mathcal{B}(G,\frac{n^2}{400})\cap \mathcal{B}(G',\frac{n^2}{400})=\emptyset$.
\end{proposition}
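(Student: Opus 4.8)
The plan is a packing argument: I will exhibit $2^{n/5}$ instances from $\GCf$ whose families of low-cost trees are pairwise disjoint, the engine being a rigidity property — any HC tree $\cT$ with $\cost_G(\cT)\le n^2/400$ is forced to be \emph{structurally adapted} to the cycle structure of $G$, so that two instances whose cycle structures differ on a constant fraction of the cycles can share no common low-cost tree. It is convenient to regard each $G\in\GCf$ as determined by the partition $\pi_G$ of $V$ into its $n/5$ five-vertex cycle-supports (placing a fixed $5$-cycle on each support); allowing several cycle choices per support only enlarges $\GCf$ and hence only helps.

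\textbf{Step 1 (cheap trees localize the supports).} First I would record the cost inequality that drives everything. For a support $B\in\pi_G$ let $u_B$ be the lowest node of $\cT$ whose leaf set contains $B$: when $\cT$ first separates $B$ at $u_B$, at least two edges of the $5$-cycle on $B$ are cut (a $5$-cycle minus one edge is still connected) and both have lowest common ancestor exactly $u_B$, so the cycle on $B$ contributes at least $2\,\card{\cT[u_B]}$ to $\cost_G(\cT)$. Since the cycles are edge-disjoint, $\cost_G(\cT)\ge 2\sum_{B\in\pi_G}\card{\cT[u_B]}$, so $\cT\in\mathcal{B}(G,\tfrac{n^2}{400})$ forces $\sum_{B\in\pi_G}\card{\cT[u_B]}\le \tfrac{n^2}{800}$; by Markov, all but a $\delta$-fraction of supports satisfy $\card{\cT[u_B]}\le \tfrac{n}{160\delta}$. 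In particular, reading off the canonical balanced separator of $\cT$ (descend toward the heavier child while the subtree exceeds $3n/4$; the resulting cut depends on $\cT$ alone and both sides have size $\ge n/4$, so any crossing edge has LCA-subtree of size $\ge n/4$), cheapness forces at most $n/200$ of $\pi_G$'s supports to straddle this cut; recursing inside each side for a suitable constant number of levels yields a $\cT$-determined hierarchical decomposition of $V$ into $O(1)$ pieces of size $\Omega(n)$ that leaves all but a constant fraction of $\pi_G$'s supports entirely inside one piece.

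\textbf{Step 2 (no co-adaptation) and Step 3 (the code).} The crux is to upgrade this to a uniqueness statement: if $\cT\in\mathcal{B}(G,\tfrac{n^2}{400})\cap\mathcal{B}(G',\tfrac{n^2}{400})$, then the single tree $\cT$ simultaneously confines $(1-o(1))$ of $\pi_G$'s supports and $(1-o(1))$ of $\pi_{G'}$'s supports to small subtrees, which forces $\pi_G$ and $\pi_{G'}$ to admit a common ``approximate laminar coarsening'' and hence to agree on a constant fraction of supports; contrapositively, if $\pi_G$ and $\pi_{G'}$ differ on more than $\rho n$ supports then $\mathcal{B}(G,\tfrac{n^2}{400})\cap\mathcal{B}(G',\tfrac{n^2}{400})=\emptyset$. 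Granting this, it remains to exhibit $2^{n/5}$ partitions of $V$ into $5$-blocks that pairwise differ on more than $\rho n$ blocks: there are $n^{(4/5)n-O(n)}$ such partitions, while the number agreeing with a fixed one on $\ge(1-\rho)n$ blocks is smaller by a factor $n^{\Omega(n)}$, so a greedy Gilbert–Varshamov-type selection produces a family of size $n^{\Omega(n)}\gg 2^{n/5}$ with the required pairwise property; taking $\FCf$ to be the corresponding graphs finishes the proof.

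\textbf{Main obstacle.} The hard part is Step 2 — ruling out, for \emph{every} tree $\cT$, low-cost adaptation to two far-apart support partitions at once, i.e.\ the purely combinatorial fact that one hierarchical clustering cannot be near-optimal for two sufficiently different ground-truth partitions into small blocks. Proving it requires propagating the ``few cycles straddle the cut'' estimate down $O(\log n)$ levels of $\cT$ while keeping the geometric blow-up of the bound under control, and translating ``$B$ confined to a small subtree of $\cT$'' into ``$B$ monochromatic in a $\cT$-determined refinement'', all with constants calibrated to the $n^2/400$ threshold; this is exactly the lemma proved in \cite{imola2023differentially} for the edge-level model, whose argument I would follow. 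The remaining ingredients — the LCA cost inequalities, the balanced-separator construction, a union bound over the $O(n)$ internal nodes of $\cT$, and the counting in Step 3 — are routine.
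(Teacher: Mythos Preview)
The paper does not prove this proposition: it is quoted from \cite{imola2023differentially} and used as a black box, so there is no in-paper argument to compare your proposal against. I will therefore just assess the proposal on its own.

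Your Step~1 is correct and standard: the inequality $\cost_G(\cT)\ge 2\sum_{B\in\pi_G}|\cT[u_B]|$ holds for the reason you give, and the balanced-separator consequence (at most $O(n)$-many, in fact $\le 3n/800$, supports straddle the first cut) is valid. Step~3 is also fine: a Gilbert--Varshamov style count on partitions into $5$-blocks easily produces far more than $2^{n/5}$ pairwise far-apart instances once Step~2 is in hand.

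The difficulty is entirely in Step~2, and I do not think your outline closes it. Two concrete issues. First, the recursion you sketch does not reach block scale: at depth $\ell$ the balanced pieces have size $\ge n/3^{\ell}$, so crossing edges have LCA-subtree size $\ge n/3^{\ell+1}$, and the number of supports straddling all level-$\ell$ cuts is bounded only by $\tfrac{3^{\ell+1}}{2n}\cdot \tfrac{n^2}{400}$; summing over $\ell$, the ``bad'' supports blow up geometrically, so you can afford only $O(1)$ levels before exceeding a constant fraction of $n/5$ --- exactly the $O(1)$ levels you already used in Step~1, leaving pieces of size $\Omega(n)$, not size $5$. You acknowledge that pushing to $O(\log n)$ levels ``while keeping the geometric blow-up under control'' is the crux, but give no mechanism for doing so; the global cost budget $n^2/400$ does not shrink with the pieces. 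Second, even granting that most supports of both $\pi_G$ and $\pi_{G'}$ are confined to the same collection of ``small'' $\cT$-subtrees, this does \emph{not} force $\pi_G$ and $\pi_{G'}$ to agree on a constant fraction of blocks: a subtree on $10$ leaves, say, can host two $5$-blocks of $\pi_G$ and two entirely different $5$-blocks of $\pi_{G'}$. The passage from ``confined to small subtrees of $\cT$'' to ``monochromatic in a $\cT$-determined partition into $5$-blocks'' is exactly the missing combinatorial content, and nothing in the proposal supplies it.

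Finally, your fallback --- ``this is exactly the lemma proved in \cite{imola2023differentially}, whose argument I would follow'' --- is circular here, since the proposition you are asked to prove \emph{is} that lemma. If you want a self-contained route, a cleaner alternative to the recursive-separator plan is to fix $\cT$ and \emph{upper-bound the number of $G\in\GCf$ with $\cT\in\mathcal{B}(G,n^2/400)$} directly (via the constraint $\sum_B |\cT[u_B]|\le n^2/800$), then run a greedy deletion over $\GCf$; but that too requires a nontrivial counting argument that your proposal does not supply.
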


Recall that in our hard family of $\dist(n,\eps)$, we have $\eps<1/20$ in \Cref{thm:hc-weight-lower}, we always have $w(u,v)>1$ for edges $(u,v) \in E(C_5)$. 
We also prove that the optimal HC cost of $G$ is at most $O(\frac{n}{\eps})$. 
\begin{lemma}
\label{lem:weight-level-opt}
Fix any instance $G$ obtained by $\dist(n,\eps)$. The optimal HC cost of $G$ is at most $\frac{n}{\eps}$.
\end{lemma}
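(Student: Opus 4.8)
The plan is to exhibit one explicit HC tree $\cT$ on $G$ and bound $\cost_G(\cT)$ edge by edge, separating the contribution of the heavy cycle edges from that of the light non-cycle edges; since $\OPT_G \le \cost_G(\cT)$, this gives the claim.

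First I would recall the structure of any $G$ drawn from $\dist(n,\eps)$: it is the complete graph on $n$ vertices, the subgraph $C_5$ is a union of $n/5$ vertex-disjoint $5$-cycles and hence has exactly $n$ edges, each of weight $W=\tfrac{1}{20\eps}$; every remaining edge has weight $\tfrac{1}{n^3}$. I would then construct $\cT$ so that the five vertices of each $5$-cycle lie together in a common subtree: put an arbitrary binary tree on those five leaves at the bottom, and combine the $n/5$ size-five blocks by any (balanced) binary tree above them. (If $n$ is not a multiple of $5$, a bounded number of leftover vertices may be attached arbitrarily; this perturbs the bound below by an $O(1/n^2)$ term that is absorbed.)

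Next I would bound $\cost_G(\cT) = \sum_{(u,v)\in E} w(u,v)\cdot |\cT[u\vee v]|$. For each of the $n$ heavy edges $(u,v)\in E(C_5)$, the lowest common ancestor $\cT[u\vee v]$ is contained in the size-$5$ subtree holding its cycle, so $|\cT[u\vee v]|\le 5$; the total heavy contribution is therefore at most $n\cdot W\cdot 5 = 5nW = \tfrac{n}{4\eps}$. For each of the fewer than $n^2/2$ light edges $(u,v)\notin E(C_5)$ I would only use the trivial bound $|\cT[u\vee v]|\le n$, giving a total light contribution of at most $\tfrac{n^2}{2}\cdot\tfrac{1}{n^3}\cdot n = \tfrac12$. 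Hence $\cost_G(\cT)\le \tfrac{n}{4\eps}+\tfrac12$, and since $\eps<\tfrac{1}{20}$ and $n\ge 1$ we have $\tfrac12 \le \tfrac{3n}{4\eps}$, so $\cost_G(\cT)\le \tfrac{n}{\eps}$, as desired.

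There is no substantive obstacle here; the only two points that need care are (i) ensuring the construction keeps each $5$-cycle inside a size-$5$ subtree, so that the heavy edges pay a factor $5$ rather than a factor $n$, and (ii) checking that the total light-edge mass $\binom{n}{2}\cdot\tfrac{1}{n^3}=O(1/n)$ remains negligible even after multiplication by the worst-case cluster size $n$. Both follow immediately from the definition of $\dist(n,\eps)$.
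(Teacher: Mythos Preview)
Your proposal is correct and follows essentially the same approach as the paper: exhibit a tree that keeps each $5$-cycle inside a size-$5$ subtree, then split the cost into the heavy cycle edges and the light non-cycle edges. The only cosmetic difference is that the paper computes the per-cycle heavy contribution more carefully as $W\cdot(10+4+3+2)=19W$ (summing to $\tfrac{19n}{100\eps}$), whereas you use the cruder but perfectly valid bound $|\cT[u\vee v]|\le 5$ to get $5nW=\tfrac{n}{4\eps}$; your version actually lands exactly on the stated bound $\tfrac{n}{\eps}$, while the paper concludes only $O(n/\eps)$.
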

\begin{proof}
Recall that we use $\cT^*$ to denote the optimal tree and $\textsf{OPT}_G$ to denote the optimal cost of Dasgupta's HC on $G$. Let us construct a tree $\cT$ such that $\cost_{G}(\cT)\leq C\cdot \frac{n}{\eps}$; and by definition of the optimal tree, we will have $\textsf{OPT}_G \leq \cost_{G}(\cT)\leq C\cdot \frac{n}{\eps}$. 

The construction of the tree $\cT$ is simple: we keep dividing the vertices into groups of \emph{disjoint} $5$-cycles until the internal node contains $5$ vertices. If an internal node $x$ contains exactly $5$ vertices, we construct the subtree induced by $x$ by removing one vertex at each level. In this way, we can have a valid binary HC tree $\cT$.

We can now partition the HC cost of $\cT$ as follows. 
\begin{align*}
    \cost_G(\cT) = \sum_{(u,v) \in E(C_5)} w(u,v)\cdot |\cT[{u \vee v}]| + \sum_{(u,v)\not\in E(C_5)} w(u,v)\cdot |\cT[{u\vee v}]|.
\end{align*}

Note that for each $5$ cycle, we pay a cost of at most 
$W\cdot (10+4+3+2)=19 W$. 
Therefore, summing over all the $n/5$ such cycles, we have that 
\[\sum_{(u,v) \in E(C_5)} w(u,v)\cdot |\cT[{u\vee v}]| \leq \frac{n}{5}\cdot 19 W = \frac{19 n}{5}\cdot W.\]
On the other hand, note that each edge $(u,v)$ can induce a cost of at most $w(u,v)\cdot n$. Therefore, summing up at most $n^2/2$ such $(u,v)$ edges, we have that
\[\sum_{(u,v)\not\in E(C_5)} w(u,v)\cdot |\cT[{u\vee v}]| \leq \frac{n^2}{2}\cdot n\cdot \frac{1}{n^3} = \frac{1}{2}.\]
Using $W=\frac{1}{20\eps}$ and $\eps<\frac{1}{20}$, we have $\cost_G(\cT) \leq \frac{19n}{5}\cdot W + \frac{1}{2}\leq O(\frac{n}{\eps})$, as desired. 
\end{proof}

We now show a technical lemma that would ``reduce'' the collection of the low-cost trees for $G$ in the weight-neighboring model to the edge-neighboring model.
\begin{lemma}
\label{lem:weight-edge-reduction}
Fix any $\eps<1$. Let $C_5, C_5' \in \GCf$ be two random disjoint $5$-cycles, and let $\tilde{C}_5$ and $\tilde{C}_5'$ be the cycles with weight on each cycle $\frac{1}{20 \epsilon}$. Let $G, G'$ be two graphs obtained by $\dist(n,\eps)$ from $C_5$ and $C_5'$. 
If $\mathcal{B}(G,\frac{n^2}{400}\cdot W)\cap \mathcal{B}(G',\frac{ n^2}{400}\cdot W)\neq \emptyset$, then $\mathcal{B}(\tilde{C}_5 ,\frac{ n^2}{400}\cdot W)\cap \mathcal{B}(\tilde{C}_5',\frac{n^2}{400}\cdot W)\neq \emptyset$. 
\end{lemma}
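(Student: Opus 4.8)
The plan is to exhibit a single HC tree $\cT$ that simultaneously certifies membership in both $\mathcal{B}(\tilde{C}_5, \frac{n^2}{400}\cdot W)$ and $\mathcal{B}(\tilde{C}_5', \frac{n^2}{400}\cdot W)$, starting from a tree $\cT$ in the nonempty intersection $\mathcal{B}(G,\frac{n^2}{400}\cdot W)\cap \mathcal{B}(G',\frac{n^2}{400}\cdot W)$. The key observation is that the graphs $G$ and $\tilde{C}_5$ share exactly the same high-weight edges (the $5$-cycle edges with weight $W = \frac{1}{20\eps}$), and $G$ merely adds the low-weight ``background'' edges of weight $\frac{1}{n^3}$ on the non-cycle pairs. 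So I would first decompose $\cost_G(\cT)$ as $\cost_{\tilde{C}_5}(\cT) + \sum_{(u,v)\notin E(C_5)} \frac{1}{n^3}\cdot |\cT[u\vee v]|$, exactly as in the proof of \Cref{lem:weight-level-opt}. Since each of the at most $n^2/2$ background edges contributes at most $\frac{1}{n^3}\cdot n = \frac{1}{n^2}$, the background term is at most $\frac12$, which is a negligible additive quantity compared to the threshold $\frac{n^2}{400}\cdot W = \frac{n^2}{8000\eps}$.

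Next I would turn the inequality around: if $\cost_G(\cT)\le \frac{n^2}{400}\cdot W$, then $\cost_{\tilde{C}_5}(\cT) \le \cost_G(\cT) \le \frac{n^2}{400}\cdot W$ since the dropped background term is nonnegative. Hence the very same $\cT$ that witnesses $\cT\in\mathcal{B}(G,\frac{n^2}{400}\cdot W)$ also witnesses $\cT\in\mathcal{B}(\tilde{C}_5,\frac{n^2}{400}\cdot W)$, and symmetrically $\cT\in\mathcal{B}(\tilde{C}_5',\frac{n^2}{400}\cdot W)$. Therefore $\cT$ lies in $\mathcal{B}(\tilde{C}_5,\frac{n^2}{400}\cdot W)\cap\mathcal{B}(\tilde{C}_5',\frac{n^2}{400}\cdot W)$, so this intersection is nonempty. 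Note the direction of the reduction is the ``easy'' one: moving from $G$ to $\tilde{C}_5$ only removes positively-weighted edges, so the cost can only drop, and no slack needs to be reserved.

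The only subtlety I anticipate — and the step I would state carefully — is bookkeeping about which threshold is being used, since the lemma is stated with the common threshold $\frac{n^2}{400}\cdot W$ on all four collections, so there is no mismatch to absorb the $\frac12$ into; the argument genuinely only uses monotonicity of cost under edge deletion and does not need the background term to be small. (Smallness of the background term is what matters in the \emph{other} direction, used elsewhere to transfer \Cref{prop:5-cycle-lb} from unit-weight cycles to the weighted instances.) I would also note the harmless rescaling: $\tilde C_5$ has cycle weights $\frac{1}{20\eps} = W$, whereas $\GCf$ in \Cref{def:5-cycle-notions} carries unit weights, so $\cost_{\tilde C_5}(\cT) = W\cdot \cost_{C_5}(\cT)$ and the threshold $\frac{n^2}{400}\cdot W$ corresponds precisely to the threshold $\frac{n^2}{400}$ in \Cref{prop:5-cycle-lb}; this is exactly the bridge that will let the next step in the paper invoke \Cref{prop:5-cycle-lb} to conclude the collections are in fact disjoint for most instances, contradicting any low-error $\eps$-DP algorithm.
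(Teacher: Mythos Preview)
Your proposal is correct and follows essentially the same argument as the paper: decompose $\cost_G(\cT)$ into the $\tilde C_5$ part plus the nonnegative background term, conclude $\cost_{\tilde C_5}(\cT)\le\cost_G(\cT)$, and hence $\mathcal{B}(G,\frac{n^2}{400}W)\subseteq\mathcal{B}(\tilde C_5,\frac{n^2}{400}W)$ (and symmetrically for $G',\tilde C_5'$). You also correctly identify that the smallness of the background term is irrelevant here and only monotonicity under edge removal is used.
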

\begin{proof}
Recall that, for a graph $G$ and $r >0$, $\mathcal{B}(G,r):=\{\cT \mid \cost_{G}(\cT) \leq r\}$ is the family of HC trees $\cT$ such that the induced cost for $\cT$ on $G$ is at most $r$. 
We prove the lemma by showing that for any HC tree $\cT$, if $\cT\in \mathcal{B}(G,\frac{n^2}{400}\cdot W)$, then $\cT\in \mathcal{B}(\tilde{C}_5,\frac{n^2}{400}\cdot W)$. Note that this will give us $\mathcal{B}(G,\frac{n^2}{400}\cdot W) \subseteq \mathcal{B}(\tilde{C}_5,\frac{n^2}{400}\cdot W)$ (and resp. $\mathcal{B}(G',\frac{n^2}{400}\cdot W) \subseteq \mathcal{B}(\tilde{C}'_5,\frac{n^2}{400}\cdot W)$), which will lead to the desired result.

Fix an HC tree $\cT\in \mathcal{B}(G,\frac{n^2}{400}\cdot W)$. We partition the HC cost and lower bound it with $\textsf{cost}_{\tilde{C}_5}(\cT)$:
\begin{align*}
    \textsf{cost}_G(\cT) & = \sum_{(u,v) \in E(C_5)} w(u,v)\cdot |\cT[{u\vee v}]| + \sum_{(u,v)\not\in E(C_5)} w(u,v)\cdot |\cT[{u\vee v}]|\\
    &= \textsf{cost}_{\tilde{C}_5}(\cT) + \sum_{(u,v)\not\in E(C_5)} w(u,v)\cdot |\cT[{u\vee v}]| \geq \textsf{cost}_{\tilde{C}_5}(\cT),
\end{align*}
where the last equality is because $w(u,v)>0$ in $G$. 
Since $\cT\in \mathcal{B}(G,\frac{n^2}{400}\cdot W)$, we have that $\textsf{cost}_G(\cT)\leq \frac{n^2}{400}\cdot W$. This implies that  $\textsf{cost}_{\tilde{C}_5}(\cT)\leq \textsf{cost}_G(\cT)\leq \frac{n^2}{400}\cdot W$ as desired.
\end{proof}

We are now ready to prove \Cref{thm:hc-weight-lower} for the weight-level DP lower bound.
\begin{proof}[\textbf{Proof of \Cref{thm:hc-weight-lower}}]
The proof basically follows the standard packing argument as in Imola et al. \cite{imola2023differentially} once we have established \Cref{lem:weight-edge-reduction}); we provide a detailed proof for the purpose of completeness. Define $\FCfc$ as the family of graphs in $\dist(n,\eps)$ such that the corresponding $5$-cycles are from $\FCf$. Note that in the neighboring graphs $G, G' \in \FCfc$, the change of weights (in $L_1$ metric) are at most $2Wn$. Therefore, we can use the same group privacy argument as in Imola et al. \cite{imola2023differentially} to obtain an algorithm $\ALG$ that is $2\eps Wn$-private on the graphs in the family $\FCfc$. As such, fix any $G, G' \in  \FCfc$. Then 
\begin{align*}
\Pr\paren{\ALG(G)\in \mathcal{B}(G', \frac{n^2}{400}\cdot W)}\geq \exp\paren{-2\eps Wn}\cdot \Pr\paren{\ALG(G')\in \mathcal{B}(G', \frac{n^2}{400}\cdot W)}.
\end{align*}
Suppose, for the sake of contradiction; there is an algorithm $\ALG$ which has an expected cost at most $\frac{n^2\cdot W}{1200}$. That is, for  $\expect{\cost_{G'}(\ALG(G'))}\leq \frac{n^2\cdot W}{1200}$. Markov inequality then implies that $\Pr(\cost_{G'}(\ALG(G'))\geq \frac{n^2\cdot W}{400})\leq \frac{1}{3}$. Therefore, we have that
\[\Pr\paren{\ALG(G)\in \mathcal{B}(G', \frac{n^2}{400}\cdot W)}\geq \exp\paren{-2\eps Wn}\cdot\frac{2}{3} > \frac{1}{2^{n/5}}.\]
On the other hand, by the contrapositive of \Cref{lem:weight-edge-reduction}, we know that if $\mathcal{B}(\tilde{C}_5 ,\frac{ n^2}{400}\cdot W)\cap \mathcal{B}(\tilde{C}_5',\frac{n^2}{400}\cdot W)= \emptyset$, there is $\mathcal{B}(G,\frac{n^2}{400}\cdot W)\cap \mathcal{B}(G',\frac{ n^2}{400}\cdot W) = \emptyset$. As such, we can get that $\card{\FCfc}\geq 2^{n/5}$, and for every pair of graph $G, G'$, $\mathcal{B}(G,\frac{n^2}{400}\cdot W)\cap \mathcal{B}(G',\frac{ n^2}{400}\cdot W) = \emptyset$. Therefore, we can apply the standard union-bound argument and show the contradiction argument as follows.
\begin{align*}
1 & = \sum_{G' \in \FCfc} \Pr\paren{\ALG(G)\in \mathcal{B}(G', \frac{n^2}{400}\cdot W)} \tag{by disjointness of the balls}\\
& > \sum_{G' \in \FCfc}  \frac{1}{2^{n/5}} \geq 2^{n/5} \cdot \frac{1}{2^{n/5}} = 1.
\end{align*}
As such, we must have that $\expect{\cost_{G'}(\ALG(G'))}> \frac{n^2\cdot W}{1200}=\Omega(\frac{n^2}{\eps})$ proving \Cref{thm:hc-weight-lower}.
\end{proof}


\section{A Lower Bound for the Balanced Sparsest Cut with Weight-level Privacy}
\label{sec:lb-sparsest-cut}
We show the lower bound for the $\eps$-DP balanced sparsest cut in the weight-level DP model as follows.
\begin{theorem}[Formalization of Result~\ref{rst:lb-blanaced-sparsest-cut}]
    \label{thm:lb-blanaced-sparsest-cut}
   For any $\eps\in(0,\frac{1}{20})$, any constant $\gamma$ and $n$ sufficiently large, any $\eps$-weight DP algorithm for $\gamma$-balanced sparsest cut has to have $\Omega(\frac{1}{\eps}\cdot \frac{1}{\log^2{n}})$ \emph{expected} additive error in sparsity, i.e., let $S$ be the output cut, let $\phi_G$ be the optimal sparsity, and let $\psi_{G}(S)$ be the sparsity of $S$, we have that
   \begin{align*}
   \expect{\psi_{G}(S)} > \phi_{G} + \Omega(\frac{1}{\eps\log^2{n}}).
   \end{align*}
\end{theorem}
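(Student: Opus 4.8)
The plan is to reduce from the hierarchical clustering lower bound of \Cref{thm:hc-weight-lower} via the recursive-sparsest-cut construction of \Cref{prop:scut-hc}, but carried out with a \emph{white-box} accounting of both the additive error and the privacy budget, as outlined in the techniques section. Concretely, suppose toward a contradiction that there is an $\eps$-weight DP algorithm $\mathcal{A}$ for $\gamma$-balanced sparsest cut whose expected additive sparsity error on every input is $o(1/(\eps\log^2 n))$. I would use $\mathcal{A}$ as the cut subroutine inside \hcalg, but when invoked on a vertex-induced subgraph $H$ I would call it with privacy parameter $\eps_H = (\card{H}/n)\cdot\eps'$ for a suitably rescaled $\eps'=\Theta(\eps/\log n)$. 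Since the subgraphs at any fixed level of the (balanced) HC tree have vertex counts summing to $n$, and since $\sum_H \eps_H = \eps'$ on each level and there are $O(\log n)$ levels, basic composition gives that the whole procedure is $\eps$-weight DP on the hard family $\dist(n,\eps)$ from \Cref{sec:lb-weight-level-dp} (the privacy-of-induced-subgraphs subtlety is handled exactly as for \dphcalg, treating all cuts as post-processing of a single noisy release only where possible, and otherwise by the level-wise composition just described).

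The next step is to translate the sparsity error into Dasgupta cost, following the charging argument of \cite{charikar2017approximate} but tracking additive terms. For a cut on subgraph $H$ produced by $\mathcal{A}$, the sparsity is at most $\phi_H + \beta_H$ where $\beta_H = O(1/(\eps_H \log^2 n)) = O(n/(\card{H}\,\eps\,\log n))$ in expectation. Multiplying through the standard recursion for Dasgupta's objective — where a cut on a cluster of size $\card{H}$ contributes roughly its cut weight times $\card{H}$ — the multiplicative part reproduces the known $O(\alpha)$-approximation, while each additive term $\beta_H$ contributes at most $O(\card{H}\cdot \card{H}\cdot \beta_H)=O(n\,\card{H}/(\eps\log n))$ to the cost. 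Summing over the clusters on one level gives $O(n^2/(\eps\log n))$, and over the $O(\log n)$ levels gives total expected additive cost $O(n^2/\eps)$ (with the constants absorbed by the initial rescaling of $\eps'$). But $\dist(n,\eps)$ has $\OPT_G = O(n/\eps)$ by \Cref{lem:weight-level-opt}, so the resulting HC tree would have expected cost $O(\sqrt{\log n})\cdot O(n/\eps) + O(n^2/\eps)$, which — for the appropriate choice of the hidden constant — contradicts the $\Omega(n^2/\eps)$ lower bound of \Cref{thm:hc-weight-lower}. Rearranging, the assumed per-cut error $o(1/(\eps\log^2 n))$ is impossible, giving $\expect{\psi_G(S)} > \phi_G + \Omega(1/(\eps\log^2 n))$.

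I expect the main obstacle to be the white-box additive charging argument: the clean $O(\alpha)$-approximation guarantee in \Cref{prop:scut-hc} is stated only for multiplicative error, and re-deriving it so that an additive sparsity slack $\beta_H$ on cluster $H$ propagates to an additive cost blow-up of the claimed order $O(n\,\card{H}/(\eps\log n))$ — uniformly over the recursion, and in expectation rather than with high probability — requires carefully opening up the potential/charging scheme of \cite{charikar2017approximate} and checking that the additive terms telescope correctly across levels. A secondary but real subtlety is making the privacy bookkeeping rigorous: one must confirm that running $\mathcal{A}$ with the size-scaled parameters $\eps_H$ on vertex-induced subgraphs of a single graph from $\dist(n,\eps)$ composes to $\eps$-DP under \Cref{def:new-weight-dp}'s neighboring relation (in particular that a unit-$\ell_1$ perturbation of the \emph{global} weight vector induces at most a unit perturbation on each subgraph it touches), and that the $\gamma$-balance of the cuts bounds the recursion depth by $O(\log n)$ so the level count is controlled.
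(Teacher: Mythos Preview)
Your proposal is correct and follows essentially the same route as the paper: the paper proves \Cref{thm:lb-blanaced-sparsest-cut} via a reduction lemma (\Cref{lem:sparse-cut-hc-reduction}) that builds an $\eps$-DP HC algorithm by recursively calling the assumed sparsest-cut algorithm on subgraphs $H$ with the size-scaled budget $\eps_H=\Theta(\eps\card{H}/(n\log n))$, composes level-by-level to stay within $\eps$, and then bounds the cost as $O(1)\cdot\OPT_G+2C\cdot n^2/\eps$ by (i) converting the sparsity guarantee to a $(2,\beta)$ balanced min-cut guarantee and applying a white-box version of the \cite{charikar2017approximate} charging argument for the multiplicative part, and (ii) summing the additive terms $\card{H}^2/(\eps_H\log^2 n)$ across each level and over $O(\log n)$ levels exactly as you describe. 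One small slip: the multiplicative factor you get is $O(1)$, not $O(\sqrt{\log n})$ --- no SDP is run here, the hypothetical $\mathcal{A}$ has purely additive error --- but this does not affect the contradiction.
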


Note that the output range of balanced sparsity can be as low as $\frac{1}{n}$ in even the unweighted graphs. As such, the $\Omega(\frac{1}{\eps\log^2{n}})$ is very significant in many instances. Furthermore, \Cref{thm:lb-blanaced-sparsest-cut} implies a lower bound of $\Omega(\frac{n}{\eps \log^{2}{n}})$ additive error for balanced min-cuts (see~\Cref{rmk:lb-balance-min-cut} for details). 

We prove \Cref{thm:lb-blanaced-sparsest-cut} in this section.
To begin with, we remind the readers of the notation: we use $\phi_{G}$ to denote the \emph{minimum sparsity} of $G$, and we use $\psi_{G}(S)$ to denote the edge expansion of $S$ in $G$. We also focus on the case when $\gamma =\frac{1}{3}$, as the analysis for other constants follows from the same argument. 
Our main technical lemma for the proof is as follows.

\begin{lemma}
\label{lem:sparse-cut-hc-reduction}
Let $\ALG$ be an $\eps$-DP algorithm that outputs a $\frac{1}{3}$-balanced sparsest cut with $ \frac{C}{\eps \log^2{n}}$ expected additive error for some constant $C\in (0,1)$. In other words, given a graph $G$, $\ALG$ outputs a partition of vertices $(S,V\setminus S)$ such that
\begin{align*}
\expect{\psi_{G}(S)}\leq \phi_{G} + \frac{C}{\eps\log^2{n}}.
\end{align*}
Then, we can construct an $\eps$-DP algorithm $\ALG'$ that outputs an HC tree $\cT$, such that for any input graph $G=(V,E,w)$, there is
\begin{align*}
\expect{\cost_{G}(\ALG'(G))}\leq O(1)\cdot \OPT_{G} + 2C\cdot \frac{n^2}{\eps}
\end{align*}
for the same constant $C$ used by algorithm $\ALG$.
\end{lemma}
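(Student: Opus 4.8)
The plan is to build $\ALG'$ as a recursive sparsest-cut procedure (as in \hcalg) but calling the private cut algorithm $\ALG$ at each internal node, with the privacy budget at each node scaled down by the relative size of the current subgraph. Concretely, when the recursion reaches a vertex-induced subgraph $H$ with $|H|=h$, I would invoke $\ALG$ on $H$ with privacy parameter $\eps_H = \tfrac{h}{n}\cdot\tfrac{\eps}{c\log n}$ for a suitable constant $c$, recurse on the two sides, and stop at singletons. The output HC tree $\cT$ is then a deterministic function of the collection of cuts returned at all nodes.

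For privacy, I would argue level by level. On any fixed level $\ell$ of the recursion tree, the subgraphs $\{H\}$ are vertex-disjoint, so a single edge weight perturbation affects the input to at most one of the $\ALG$ calls on that level (this uses that neighboring graphs differ only in edge weights, and each edge lies in a unique induced subgraph at a given level); hence by parallel composition the privacy cost of level $\ell$ is $\max_H \eps_H \le \tfrac{\eps}{c\log n}$. Since balanced ($\tfrac13$-balanced) cuts shrink each side by a constant factor, the recursion has depth $O(\log n)$, so by basic (sequential) composition over the $O(\log n)$ levels the total privacy loss is $O(\eps/c)=\eps$ for an appropriate constant $c$; post-processing to form $\cT$ preserves this.

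For utility, I would decompose $\cost_G(\cT)$ over internal nodes using the standard identity $\cost_G(\cT)=\sum_{H} w(E_H\setminus(E_{H_1}\cup E_{H_2}))\cdot|H|$ where $H$ splits into $H_1,H_2$; i.e., each edge pays $|H|$ at the node $H$ where its endpoints are separated. Following \cite{charikar2017approximate} but opening the black box to track additive terms, at node $H$ the cut returned by $\ALG$ has expected sparsity $\le \phi_H + \tfrac{C}{\eps_H\log^2 n}$, so $\E[w(S_H,\bar S_H)] \le \phi_H|S_H| + \tfrac{C|S_H|}{\eps_H\log^2 n} \le (\text{mult. term}) + \tfrac{C\,|H|\,n}{c\,\eps\,\log^2 n\,}\cdot(c\log n)$ after substituting $\eps_H$; the multiplicative part charges against $\OPT_G$ exactly as in the $O(1)$-approximation proof (the charging argument of \cite{charikar2017approximate} is insensitive to the extra additive slack since it only uses that the cut is nearly sparsest at each node), while the additive part contributes $O\!\big(\tfrac{C\,|H|\,n}{\eps\log n}\big)$ to the cost at node $H$. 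Summing the additive contributions over a single level gives $O\!\big(\tfrac{Cn}{\eps\log n}\big)\cdot\sum_H|H| = O\!\big(\tfrac{Cn^2}{\eps\log n}\big)$, and summing over the $O(\log n)$ levels yields total additive error $O(Cn^2/\eps)$; choosing the constant $c$ so that this is at most $2C n^2/\eps$ finishes the bound $\E[\cost_G(\ALG'(G))]\le O(1)\cdot\OPT_G + 2C\cdot\tfrac{n^2}{\eps}$.

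The main obstacle I expect is the white-box utility accounting: the clean statements relating balanced sparsest cut to Dasgupta's objective (Proposition~\ref{prop:scut-hc}) are purely multiplicative, so I need to re-derive the charging argument of \cite{charikar2017approximate} carrying an additive error term at every recursion node and verify that (i) the per-node additive slack $\tfrac{C|S_H|}{\eps_H\log^2 n}$ propagates additively (not multiplicatively) up the tree, and (ii) the bound $\sum_H |H| = O(n\log n)$ over all nodes—equivalently $\sum_H|H| = O(n)$ per level times $O(\log n)$ levels—is exactly what makes the final additive error $O(n^2/\eps)$ rather than something larger; a secondary subtlety is confirming that the sparsities $\phi_H$ of the induced subgraphs telescope correctly against $\OPT_G$, which is precisely the content I must extract from the existing approximation proof.
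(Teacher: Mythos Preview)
Your proposal is essentially the paper's proof: both build $\ALG'$ by recursive balanced sparsest cuts with node-level budget $\eps_H \propto \frac{|H|}{n\log n}\cdot\eps$, bound privacy level-by-level over the $O(\log n)$ depth, and split the expected cost into a multiplicative piece handled by the white-box Charikar--Chatziafratis charging argument plus an additive piece that sums to $O(Cn^2/\eps)$ via $\sum_H |H|=n$ per level.

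One caveat on the privacy step: your parallel-composition claim that a level costs $\max_H \eps_H$ is not quite right, because the weight-level neighboring relation (\Cref{def:weight-neighbor-vanilla}) allows the total $\ell_1$ perturbation to be spread over many edges, which can lie in \emph{different} subgraphs $H$ at the same level; you cannot assume only one $\ALG$ call sees a changed input. The paper sidesteps this by using plain basic composition (\Cref{prop:basic-comp}) over \emph{all} internal nodes and noting that $\sum_{H\text{ at level }\ell}\eps_H=\frac{\eps}{c\log n}$ since $\sum_H |H|=n$; this yields exactly the per-level bound you want, so your conclusion survives with the argument patched this way.
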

Before showing the proof of \Cref{lem:sparse-cut-hc-reduction}, we first show how to use \Cref{lem:sparse-cut-hc-reduction} to prove \Cref{thm:lb-blanaced-sparsest-cut}.

\begin{proof}[\textbf{Using \Cref{lem:sparse-cut-hc-reduction} to prove \Cref{thm:lb-blanaced-sparsest-cut}}]
  By \Cref{lem:sparse-cut-hc-reduction}, if there exists an algorithm $\ALG$ with $C\cdot \frac{1}{\eps}\cdot \frac{1}{\log^2{n}}$ expected additive error for $1/3$-balanced sparsest cut, it implies an HC algorithm $\ALG'$ with an expected cost of $O(1)\cdot \OPT_{G} + 2C\cdot \frac{n^2}{\eps}< \OPT_{G} + \frac{1}{1200}\cdot \frac{n^2}{\eps}$ for sufficiently small $C$. 
  This forms a contradiction with our family of graphs as in \Cref{thm:hc-weight-lower} (on which the additive error should be at least $\frac{1}{1200}\cdot \frac{n^2}{\eps}$), which means algorithm $\ALG$ for balanced sparsest cut cannot exist. \myqed{\Cref{thm:lb-blanaced-sparsest-cut}}
\end{proof}

\paragraph{The reduction algorithm, $\ALG'$.} To prove  \Cref{lem:sparse-cut-hc-reduction}, we first show the reduction algorithm below.

\begin{tbox}
\textbf{A reduction algorithm, $\ALG'$, from balanced sparsest cut to HC trees}

\smallskip

\textbf{Input:} A graph $G=(V,E,w)$; an algorithm, $\ALG$, that outputs a $\frac{1}{3}$-balanced sparsest cut with expected additive error at most $C\cdot \frac{1}{\eps}\cdot \frac{1}{\log^2{n}}$.
\begin{itemize}[leftmargin=15pt]
\item Define \emph{level $\ell$ cut} as the cut that happens with the distance $\ell-1$ to the root of the HC tree (the entire graph $G$).
\item Let $\mathcal{G}_{\ell}$ be the family of vertex-induced subgraphs in level $\ell$. Initialization $\mathcal{G}_{1} =\{G\}$.
\item For $\ell=1:\infty$ 
\begin{enumerate}
\item For each vertex-induced subgraph $H \in \mathcal{G}_{\ell}$
\begin{enumerate}
\item If $\card{V(H)}=1$, skip this set and go to the next $H \in \mathcal{G}_{\ell}$;
\item Otherwise, run $\ALG$ on $S$ with $\eps_{H}=\frac{\eps\card{H}}{10 n\log{n}}$, and obtain the partition of $H\rightarrow (S_{1}, S_{2})$; 
\item Add the partition $H \rightarrow (S_{1}, S_{2})$ to the HC tree $\cT$;
\item Add $G[S_1]$ and $G[S_2]$ to $\mathcal{G}_{\ell+1}$.
\end{enumerate}
\item Increase $\ell$ ($\ell\gets \ell+1$) and proceed to the next level.
\item If all the vertex-induced subgraphs are singleton, terminate and output $\cT$. 
\end{enumerate}
\end{itemize}
\end{tbox}
In other words, we run the recursive approximate balanced min-cut with $\ALG$, and as we go deeper down in the tree, we \emph{decrease} the value of $\eps$ adaptively. We defer the detailed analysis to \Cref{app:balanced-sparse-cut}.

\section{Conclusion and Discussion}
In this paper, we revisited the cost of computing hierarchical clustering under the constraints of edge-level differential privacy. Previous work by \cite{imola2023differentially} showed that a lower bound of $\Omega(n^2/\eps)$ (matched by exponential mechanism) and they gave an upper bound of $\widetilde O(n^{2.5}\sqrt{\log(1/\delta)}/\eps)$ under $(\eps,\delta)$-differential privacy. We argued that such additive error is too pessimistic for many natural classes of graphs. To assuage this concern, we explored the price of hierarchical clustering under weight-level differential privacy, where we showed that $\Omega(n^2/\eps)$ additive error is unavoidable for any general graph, which is too high to be of any practical use. 

In pursuit of understanding the least assumption on the input graph, we showed that could obtain a much more practical bound under a very mild (and practical) assumption. In particular, if the edge weights are at least $1$, we can design a polynomial time $\eps$-differentially private algorithm that achieves a purely multiplicative $\widetilde O(1/\eps)$ approximation. Without privacy, getting an $\widetilde O(1)$ approximation is the ideal goal in the context of hierarchical clustering. Therefore, we show that there is little price of privacy for hierarchical clustering under an assumption that is satisfied by many downstream applications of hierarchical clustering. This is in contrast with what previous results suggested.

\textbf{Limitation and open problems.}
Our positive results only provide a multiplicative approximation that is larger than the best possible approximation without privacy; however, it would be interesting to know if we can reduce the multiplicative approximation and understand potential tradeoffs between the multiplicative and additive approximation. Another open question to follow our work is HC with approximate privacy: with our privacy model, is it possible to get improved bounds if we focus on approximate DP instead? 

\bibliography{reference}
\bibliographystyle{alpha}

\appendix

\section{More Preliminaries}
\label{app:prelim}
We introduce a few standard techniques in differential privacy and probability: Laplace mechanism, post-processing theorem, sum of Laplace random variables, composition theorem, etc.

\begin{definition}[Laplace distribution]
\label{def:lap-dist}
We say a zero-mean random variable $X$ follows the Laplace distribution with parameter $b$ (denoted by $X \sim \textsf{Lap}(b)$) if the probability density function of $X$ follows
\begin{align*}
p(x) = \textsf{Lap}(b)(x) = \frac{1}{2b}\cdot \exp\Big(-\frac{|x|}{b}\Big).
\end{align*}
\end{definition}

\begin{definition}[Sensitivity]
\label{def:sensitivity}
Let $p \geq 1$. For any function $f: \mathcal{X}\rightarrow \mathbb{R}^k$ defined over a domain space $\mathcal X$,  the $\ell_p$-sensitivity of the function $f$ is defined as  
\begin{displaymath}
\Delta_{f,p} = \max_{\substack{w,w' \in \mathcal{X} \\
w\sim w'}}\|f(w)-f(w') \|_{p},
\end{displaymath}
Here, $\|x\|_p:=\big(\sum_{i=1}^d \abs{x[i]}^p \big)^{1/p}$ is the $\ell_p$-norm of the vector $ x \in \mathbb R^d$.
\end{definition}

Based on Laplace distribution, we can now define Laplace mechanism -- a standard DP mechanism that adds noise sampled from Laplace distribution with scale dependent on the $\ell_1$-sensitivity of the function. The formal definition is as follows.

\begin{definition}[Laplace mechanism]
\label{def:lap-mech}
For any function $f: \mathcal{X}\rightarrow \mathbb{R}^k$, the Laplace mechanism on input $w\in \mathcal{X}$ samples $Y_1, \dots, Y_k$ independently from $\textsf{Lap}{(\frac{\Delta_{f,1}}{\varepsilon})}$ and outputs
\begin{displaymath}
M_{\varepsilon}(f) = f(w) + (Y_1, \dots, Y_k).
\end{displaymath}
\end{definition}

\begin{proposition}[Laplace mechanism~\cite{dwork2006calibrating}]
\label{prop:lap-privacy}
The Laplace mechanism $M_{\varepsilon}(f)$ is $\varepsilon$-differentially private.
\end{proposition}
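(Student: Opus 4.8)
The final statement to prove is Proposition (Laplace mechanism~\cite{dwork2006calibrating}): the Laplace mechanism $M_{\varepsilon}(f)$ is $\varepsilon$-differentially private.

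\textbf{Proof proposal.}
The plan is to carry out the standard density-ratio argument for the Laplace mechanism. Fix an arbitrary pair of neighboring inputs $w \sim w'$ in $\mathcal{X}$ and write $\va = f(w)$, $\vb = f(w')$, so that by the definition of $\ell_1$-sensitivity we have $\norm{\va - \vb}_1 \le \Delta_{f,1}$. Let $p_w$ and $p_{w'}$ denote the probability densities on $\mathbb{R}^k$ of the mechanism's output on inputs $w$ and $w'$ respectively. Since the mechanism adds an independent $\textsf{Lap}(\Delta_{f,1}/\varepsilon)$ variable to each of the $k$ coordinates of $f(w)$, the density factorizes coordinatewise: $p_w(\vz) = \prod_{i=1}^k \frac{\varepsilon}{2\Delta_{f,1}}\exp\!\big(-\tfrac{\varepsilon\,\abs{z_i - a_i}}{\Delta_{f,1}}\big)$ for every $\vz \in \mathbb{R}^k$, and likewise for $p_{w'}$ with $\vb$ in place of $\va$.

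First I would bound the pointwise ratio of the two densities. The normalizing constants cancel, leaving
\[
\frac{p_w(\vz)}{p_{w'}(\vz)} = \prod_{i=1}^k \exp\!\Big(\tfrac{\varepsilon\,(\abs{z_i - b_i} - \abs{z_i - a_i})}{\Delta_{f,1}}\Big) \le \prod_{i=1}^k \exp\!\Big(\tfrac{\varepsilon\,\abs{a_i - b_i}}{\Delta_{f,1}}\Big) = \exp\!\Big(\tfrac{\varepsilon\,\norm{\va - \vb}_1}{\Delta_{f,1}}\Big) \le e^{\varepsilon},
\]
where the first inequality is the triangle inequality $\abs{z_i - b_i} \le \abs{z_i - a_i} + \abs{a_i - b_i}$ applied in each coordinate, and the last inequality uses $\norm{\va - \vb}_1 \le \Delta_{f,1}$. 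The symmetric bound $p_{w'}(\vz)/p_w(\vz) \le e^{\varepsilon}$ holds by the same computation with $w, w'$ swapped.

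Then I would integrate the pointwise bound. For any measurable set $\mathcal{C} \subseteq \mathbb{R}^k$,
\[
\Pr[M_{\varepsilon}(f) \in \mathcal{C} \mid \text{input } w] = \int_{\mathcal{C}} p_w(\vz)\,d\vz \le e^{\varepsilon}\int_{\mathcal{C}} p_{w'}(\vz)\,d\vz = e^{\varepsilon}\cdot\Pr[M_{\varepsilon}(f) \in \mathcal{C} \mid \text{input } w'],
\]
which is exactly the $(\varepsilon,0)$-DP condition of the Differential Privacy definition. Since $w \sim w'$ was arbitrary, this proves the claim. The only point requiring any care is ensuring that the sensitivity used to scale the noise is the $\ell_1$-sensitivity (so that the sum of the per-coordinate deviations $\abs{a_i - b_i}$ telescopes into $\norm{\va - \vb}_1$ and is controlled by $\Delta_{f,1}$); everything else is a routine manipulation of product densities and monotonicity of the integral, with no composition or measure-theoretic subtleties beyond that.
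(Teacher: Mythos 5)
Your proof is correct: it is the standard density-ratio argument (pointwise bound via the triangle inequality, then integration over the output set) that constitutes the canonical proof of the Laplace mechanism's $\varepsilon$-DP guarantee. The paper itself states this proposition as a cited result from the literature without reproducing a proof, and your argument matches the standard one in the cited reference, so there is nothing substantively different to compare.
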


\begin{proposition}[Post-processing theorem~\cite{dwork2014algorithmic}]
\label{prop:post-process}
Let $M: \real^{d_{1}} \rightarrow \real^{d_{2}}$ be an $(\eps, \,\delta)$-differentially private mechanism and let $g: \real^{d_2}\rightarrow \real^{d_3}$ be an arbitrary function. Then, the function $g\circ M: \real^{d_1} \rightarrow \real^{d_3}$ is also $(\eps, \,\delta)$-differentially private.
\end{proposition}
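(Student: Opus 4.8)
The plan is to reduce the claim directly to the definition of $(\eps,\delta)$-differential privacy by pulling events back through $g$. First I would treat the case of a deterministic measurable map $g$. Fix any neighboring inputs $w\sim w'$ and any measurable output set $T\subseteq\real^{d_3}$, and set $S:=g^{-1}(T)=\{y\in\real^{d_2}: g(y)\in T\}\subseteq\real^{d_2}$. Since the event $\{g(M(w))\in T\}$ is, by definition, the same event as $\{M(w)\in S\}$, applying the $(\eps,\delta)$-DP guarantee of $M$ to the set $S$ gives
\[
\Pr[g(M(w))\in T]=\Pr[M(w)\in S]\le e^{\eps}\Pr[M(w')\in S]+\delta=e^{\eps}\Pr[g(M(w'))\in T]+\delta,
\]
which is exactly the $(\eps,\delta)$-DP condition for $g\circ M$, since $w\sim w'$ and $T$ were arbitrary.

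Next I would extend this to randomized post-processing, which is the form actually invoked in \bscdpalg and \dphcalg (where the ``post-processing'' includes running a sparsest-cut subroutine that may itself use fresh coins). Write $g(y)=h(y,R)$, where $R$ is an auxiliary source of randomness drawn independently of the randomness of $M$ and of the input, and $h$ is deterministic and measurable. Conditioning on $R=r$, the deterministic case above applies to the map $h(\cdot,r)$, so $\Pr[h(M(w),r)\in T]\le e^{\eps}\Pr[h(M(w'),r)\in T]+\delta$ for every $r$. Taking expectation over $R$ and using that the law of $R$ does not depend on whether the input is $w$ or $w'$, together with linearity of expectation,
\[
\Pr[g(M(w))\in T]=\E_R\big[\Pr[h(M(w),R)\in T\mid R]\big]\le e^{\eps}\,\E_R\big[\Pr[h(M(w'),R)\in T\mid R]\big]+\delta=e^{\eps}\Pr[g(M(w'))\in T]+\delta.
\]

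There is essentially no obstacle in this argument; the only point requiring a word of care is measurability. One should assume $g$ (equivalently each $h(\cdot,r)$) is measurable so that $S=g^{-1}(T)$ is a legitimate set to which the privacy guarantee of $M$ can be applied, and one should take $T$ to range over the appropriate $\sigma$-algebra on $\real^{d_3}$; under the standard conventions used throughout the paper these hold automatically. Since $M$ may also be randomized, all probabilities above are understood over the joint randomness of $M$ and $R$, and no further assumptions are needed. The displayed inequality holding for every neighboring pair $w\sim w'$ and every (measurable) $T$ is precisely the statement that $g\circ M$ is $(\eps,\delta)$-differentially private.
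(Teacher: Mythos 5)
Your proof is correct and is exactly the canonical argument: the paper states this proposition as a known result cited from Dwork and Roth and gives no proof of its own, and your preimage argument for deterministic $g$, followed by conditioning on the independent auxiliary randomness (equivalently, viewing a randomized post-processor as a convex combination of deterministic maps, under which $(\eps,\delta)$-DP is preserved since the averaged additive term is still $\delta$), is the standard proof. Nothing further is needed.
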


\begin{proposition}[Sum of Laplace random variables,~\cite{chan2011private,wainwright2019high}]
\label{prop:lap-sum}
Let $\{X_{i}\}_{i=1}^{m}$ be a collection of independent random variables such that  $X_i \sim \Lap{b_{i}}$ for all $1\leq i \leq m$. Then, for $\nu \geq \sqrt{\sum_{i} b^2_{i}}$ and $0<\lambda<\frac{2\sqrt{2}\nu^2}{b}$ for $b= \max_{i} \set{b_{i}}$, 
\begin{align*}
\prob{\abs{\sum_{i} X_{i}} \geq \lambda}\leq 2\cdot \exp\paren{-\frac{\lambda^2}{8\nu^2}}.
\end{align*}
\end{proposition}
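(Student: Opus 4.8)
The plan is the standard sub-exponential Chernoff argument, with care taken so that the optimizing value of the Chernoff parameter lands inside the region where the Laplace moment generating function is finite. First I would recall that for $X_i \sim \Lap{b_i}$, the MGF is $\Exp[e^{t X_i}] = \frac{1}{1 - b_i^2 t^2}$, valid for $\abs{t} < 1/b_i$. Next I would use the elementary inequality $\frac{1}{1-x} \le e^{2x}$ for $x \in [0, 1/2]$, which gives $\Exp[e^{t X_i}] \le e^{2 b_i^2 t^2}$ whenever $b_i^2 t^2 \le 1/2$, i.e.\ whenever $\abs{t} \le \frac{1}{\sqrt{2}\,b_i}$. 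Since $b = \max_i b_i$, this bound holds simultaneously for all $i$ as long as $\abs{t} \le \frac{1}{\sqrt{2}\,b}$.

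Second, by independence, for $\abs{t} \le \frac{1}{\sqrt{2}\,b}$ we have
\[
\Exp\Bracket{e^{t \sum_i X_i}} = \prod_{i} \Exp\bracket{e^{t X_i}} \le \prod_i e^{2 b_i^2 t^2} = e^{2 t^2 \sum_i b_i^2} \le e^{2 t^2 \nu^2},
\]
using the hypothesis $\nu^2 \ge \sum_i b_i^2$. Applying Markov's inequality to $e^{t\sum_i X_i}$ then yields $\prob{\sum_i X_i \ge \lambda} \le e^{-t\lambda + 2 t^2 \nu^2}$ for every admissible $t \ge 0$.

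Third, I would optimize the exponent $-t\lambda + 2t^2\nu^2$ over $t$. The unconstrained minimizer is $t^\star = \frac{\lambda}{4\nu^2}$, which produces the exponent $-\frac{\lambda^2}{8\nu^2}$. The key check is that $t^\star$ is actually admissible, i.e.\ $\frac{\lambda}{4\nu^2} \le \frac{1}{\sqrt{2}\,b}$; rearranging, this is exactly the hypothesis $\lambda \le \frac{4\nu^2}{\sqrt{2}\,b} = \frac{2\sqrt{2}\,\nu^2}{b}$ of the proposition. So under the stated range of $\lambda$ we obtain $\prob{\sum_i X_i \ge \lambda} \le e^{-\lambda^2/(8\nu^2)}$. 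Finally, since the Laplace distribution is symmetric, $-\sum_i X_i$ has the same law as $\sum_i X_i$, so the identical bound holds for the lower tail; a union bound over the two tails gives the factor of $2$ and completes the argument. The only genuinely delicate point is the admissibility check in the third step — everything else is bookkeeping — and indeed the constraint $\lambda < \frac{2\sqrt{2}\nu^2}{b}$ in the statement exists precisely to make that check go through.
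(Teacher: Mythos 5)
Your proof is correct: the MGF bound $\Exp[e^{tX_i}] = \frac{1}{1-b_i^2t^2} \le e^{2b_i^2t^2}$ for $\abs{t}\le \frac{1}{\sqrt{2}b_i}$, the Chernoff optimization at $t^\star = \frac{\lambda}{4\nu^2}$, and the admissibility check that turns the hypothesis $\lambda < \frac{2\sqrt{2}\nu^2}{b}$ into exactly the condition needed are all sound, and the symmetry argument for the two-sided bound is fine. The paper itself gives no proof — it imports this proposition from \cite{chan2011private,wainwright2019high} — and your argument is essentially the standard one appearing in those references, so there is nothing further to reconcile.
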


The following propositions characterize the \emph{composition} of multiple differentially private algorithms.
\begin{proposition}
[Composition theorem \cite{DworkKMMN06,dwork2006calibrating,RogersVRU16}]
\label{prop:basic-comp}
Let $\{\ALG_{j}\}_{j=1}^{k}$ be $k$ algorithms with $(\eps_{j}, 0)$-DP guarantees for $j\in [k]$. Furthermore, let $\ALG$ be a function of the outputs of $\ALG_{1},\ALG_{2}, \cdots, \ALG_{k}$ with possible \emph{dependent} calls and \emph{adaptively chosen} parameters. Then, $\ALG$ is a $(\sum_{j=1}^{k}\eps_{j}, 0)$-DP algorithm.
\end{proposition}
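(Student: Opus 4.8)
The plan is to prove the statement by induction on $k$, with the crux being the two-fold adaptive case $k=2$; the general bound then follows by collapsing the first $k-1$ mechanisms into a single adaptively-composed mechanism. Since the final $\ALG$ is only a (data-independent) function of the transcript $(y_1,\dots,y_k)$, by the post-processing theorem (\Cref{prop:post-process}) it suffices to show that the mechanism that \emph{releases the whole transcript} is $(\sum_{j}\eps_j)$-DP.

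First I would fix notation for adaptive composition: the transcript-releasing mechanism produces $(y_1,\dots,y_k)$ where $y_j$ is the output of $\ALG_j$ run on the input database together with the prefix $(y_1,\dots,y_{j-1})$, using fresh independent internal randomness at each step. The correct reading of the hypothesis — and of the phrase ``dependent calls and adaptively chosen parameters'' — is that for \emph{every} fixed prefix $(y_1,\dots,y_{j-1})$, the mechanism $\ALG_j(\cdot\,;y_1,\dots,y_{j-1})$, viewed as a function of the database, is $\eps_j$-DP, while the budget sequence $\eps_1,\dots,\eps_k$ is fixed in advance (not itself adaptive). To avoid measure-theoretic clutter I would first argue the discrete case (or, more generally, the case where the mechanisms admit densities with respect to a common base measure); the fully general case follows from the standard equivalence between $\eps$-DP and a pointwise density-ratio bound together with a routine approximation argument, which I would invoke rather than reprove.

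For $k=2$: fix neighboring weights $w\sim w'$. By independence of the two mechanisms' internal coins, for any transcript $(y_1,y_2)$ the chain rule gives
\[
\Pr[\ALG(w)=(y_1,y_2)] \;=\; \Pr[\ALG_1(w)=y_1]\cdot \Pr[\ALG_2(w;y_1)=y_2].
\]
Applying $\eps_1$-DP of $\ALG_1$ to the first factor and $\eps_2$-DP of $\ALG_2(\cdot\,;y_1)$ — for this particular realized $y_1$ — to the second factor, each factor changes by at most the multiplicative factors $e^{\eps_1}$ and $e^{\eps_2}$ respectively when $w$ is swapped for $w'$, so
\[
\Pr[\ALG(w)=(y_1,y_2)] \;\le\; e^{\eps_1+\eps_2}\,\Pr[\ALG(w')=(y_1,y_2)].
\]
Integrating this pointwise inequality over any measurable set $\CC$ of transcripts yields $\Pr[\ALG(w)\in\CC]\le e^{\eps_1+\eps_2}\Pr[\ALG(w')\in\CC]$, i.e.\ $(\eps_1+\eps_2)$-DP.

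For the inductive step, suppose the adaptive composition $\ALG_{1:k-1}$ of the first $k-1$ mechanisms is $\big(\sum_{j=1}^{k-1}\eps_j\big)$-DP as a single mechanism whose output is the prefix transcript. Since $\ALG_k$ may depend on that entire prefix, the pair $\big(\ALG_{1:k-1},\,\ALG_k(\cdot\,;\ALG_{1:k-1}(\cdot))\big)$ is exactly a two-fold adaptive composition of a $\big(\sum_{j<k}\eps_j\big)$-DP mechanism with an $\eps_k$-DP mechanism, so the $k=2$ case gives $\big(\sum_{j\le k}\eps_j\big)$-DP. The main obstacle is conceptual rather than computational: one must be careful to invoke the hypothesis \emph{conditionally on each fixed prefix} (the $\eps_j$-DP guarantee of $\ALG_j$ must hold uniformly over all prefixes, not merely in expectation over $y_1,\dots,y_{j-1}$), and to justify the chain-rule factorization and the pointwise density-ratio characterization of pure DP in the relevant output space; once those points are pinned down, the remaining content is a single line of multiplication and a summation.
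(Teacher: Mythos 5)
The paper itself gives no proof of \Cref{prop:basic-comp}: it is imported as a preliminary from the cited literature, so there is nothing internal to compare against. Your argument is the standard and correct one for basic composition under pure DP: reduce to the transcript-releasing mechanism via \Cref{prop:post-process}, factor the transcript probability by the chain rule, apply the per-step guarantee conditionally on each fixed prefix, multiply the pointwise ratios, and sum over the output set; the induction via collapsing the first $k-1$ mechanisms is also fine, and you correctly identify that the hypothesis must hold uniformly over prefixes.

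The one substantive caveat is your explicit restriction that ``the budget sequence $\eps_1,\dots,\eps_k$ is fixed in advance (not itself adaptive).'' The proposition as stated allows \emph{adaptively chosen parameters}, and this is exactly how the paper uses it: in the reduction of \Cref{lem:sp-to-hc-privacy} the budget $\eps_{H}$ for each call depends on $\card{H}$, where $H$ is itself the output of earlier private computations (this is why \cite{RogersVRU16}, on privacy odometers and filters, is among the citations). So as written your proof covers a strictly narrower statement than the one invoked. Fortunately your pointwise argument extends with only a small modification: for any fixed full transcript the realized budgets $\eps_j$ are determined by its prefixes, the product of the per-step ratios is $e^{\sum_j \eps_j(\text{realized})}$, and one concludes $\eps$-DP provided the realized sum is bounded by $\eps$ on every possible transcript --- which is the condition the paper verifies ($\sum_{H}\eps_H \le \eps$ per level, times the $O(\log n)$ depth). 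You should either prove this transcript-wise version or state the fixed-budget restriction and note it does not cover the paper's application as is.
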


\begin{proposition}
[Strong composition theorem \cite{dwork2010boosting}]
\label{prop:strong-comp}
For any $\varepsilon, \delta \geq 0$ and $\delta' > 0$, the adaptive composition of $k$ $(\varepsilon, \delta)$-differentially private algorithms is $(\varepsilon',k\delta+\delta')$-differentially private for
\begin{align*}
\varepsilon' = \sqrt{2k\ln (1/\delta')}\cdot \varepsilon +k\varepsilon(e^\varepsilon-1).
\end{align*}
Furthermore, if $\eps'\in (0,1)$ and $\delta'>0$, the composition of $k$ $\eps$-differentially private mechanism is $(\eps', \delta')$-differentially private for
\begin{align*}
\eps' = \eps\cdot \sqrt{8k\log(\frac{1}{\delta'})}.
\end{align*}
\end{proposition}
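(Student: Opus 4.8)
The plan is to prove both bounds through the \emph{privacy loss random variable} and a martingale concentration inequality, following \cite{dwork2010boosting}. I will first handle the pure case $\delta=0$ and add the $k\delta$ term at the very end. Fix a pair of neighboring inputs $x\sim x'$; for the $j$-th (adaptively chosen) mechanism let $P_j$ and $Q_j$ be its output distributions on $x$ and $x'$, \emph{conditioned on the realized transcript} $o_{<j}=(o_1,\dots,o_{j-1})$ of earlier outputs — this conditioning is exactly where adaptivity is absorbed. With $o_j\sim P_j$, define $L_j=\ln\frac{P_j(o_j)}{Q_j(o_j)}$ and $L=\sum_{j=1}^k L_j$. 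The standard ``bad-event'' reduction says it suffices to show $\Pr_{o\sim\ALG(x)}[\,L>\varepsilon'\,]\le \delta'$: for any output set $T$, writing $B=\{o:L(o)>\varepsilon'\}$ gives $\Pr_{\ALG(x)}[T]\le \Pr_{\ALG(x)}[B]+\Pr_{\ALG(x)}[T\cap B^c]\le \delta'+e^{\varepsilon'}\Pr_{\ALG(x')}[T]$, which is $(\varepsilon',\delta')$-DP.

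The two ingredients I then need, holding conditionally on every history $o_{<j}$, are: (i) $|L_j|\le\varepsilon$ almost surely, immediate from $e^{-\varepsilon}\le P_j(o)/Q_j(o)\le e^{\varepsilon}$; and (ii) $\mathbb{E}_{o_j\sim P_j}[L_j]=\KL(P_j\|Q_j)\le \varepsilon(e^{\varepsilon}-1)$. For (ii) I use $\KL(P\|Q)\le \KL(P\|Q)+\KL(Q\|P)=\sum_o(P(o)-Q(o))\ln\frac{P(o)}{Q(o)}\le\varepsilon\sum_o|P(o)-Q(o)|$, and then $\sum_o|P(o)-Q(o)|\le e^{\varepsilon}-1$ by splitting on the sign of $P(o)-Q(o)$ and applying the one-sided multiplicative bound on each part (the leftover mass sums to at most $1$).

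Next I apply Azuma–Hoeffding to the martingale $S_t=\sum_{j\le t}\big(L_j-\mathbb{E}[L_j\mid o_{<j}]\big)$ with respect to the filtration generated by $o_1,\dots,o_t$. Each increment lies in an interval of length $2\varepsilon$ by (i), so Hoeffding's lemma gives $\Pr[S_k>\lambda]\le \exp\!\big(-\lambda^2/(2k\varepsilon^2)\big)$; using the range-$2\varepsilon$ bound yields exactly the variance proxy $\varepsilon^2$ per step. By (ii), $L=S_k+\sum_j\mathbb{E}[L_j\mid o_{<j}]\le S_k+k\varepsilon(e^{\varepsilon}-1)$, so choosing $\lambda=\varepsilon\sqrt{2k\ln(1/\delta')}$ makes the tail probability equal to $\delta'$ and gives $\Pr[L>\varepsilon']\le\delta'$ for $\varepsilon'=\varepsilon\sqrt{2k\ln(1/\delta')}+k\varepsilon(e^{\varepsilon}-1)$, proving the first bound when $\delta=0$. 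For general $\delta$, I invoke the standard structural fact that an $(\varepsilon,\delta)$-DP mechanism has, for each neighboring pair, an event of probability at least $1-\delta$ under both output distributions on which the pointwise privacy loss is at most $\varepsilon$; rerunning the argument on these ``good'' events and union-bounding their $k$ failure probabilities introduces the extra $k\delta$, giving $(\varepsilon',k\delta+\delta')$-DP. The ``furthermore'' statement is then a simplification: in the low-budget regime forced by $\varepsilon'\in(0,1)$ one has $e^{\varepsilon}-1\le 2\varepsilon$, so $k\varepsilon(e^\varepsilon-1)\le 2k\varepsilon^2$ is dominated by the $\varepsilon\sqrt{2k\ln(1/\delta')}$ term, and absorbing constants yields $\varepsilon'=\varepsilon\sqrt{8k\log(1/\delta')}$.

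I expect the main obstacle to be the $\delta>0$ reduction: making precise the ``dense core''/coupling claim that lets an $(\varepsilon,\delta)$-DP mechanism be treated as pure-$\varepsilon$-DP up to a $\delta$-mass bad event, \emph{uniformly over adaptively chosen histories}, and verifying that these bad events compose additively across the $k$ steps. The pure-case martingale computation is otherwise routine once (i) and (ii) are in place; the only subtlety there is extracting the sharp $\varepsilon^2$ (not $(2\varepsilon)^2$) variance proxy so that the constant $\sqrt{2k\ln(1/\delta')}$ emerges exactly.
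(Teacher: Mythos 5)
The paper does not prove this proposition at all --- it is imported verbatim from \cite{dwork2010boosting} as a background tool --- so the only question is whether your reconstruction is sound. Your pure-DP core ($\delta=0$) is correct and is exactly the standard argument of the cited source: the conditional privacy-loss variables absorb adaptivity, the bound $\E[L_j\mid o_{<j}]\le\KL(P_j\|Q_j)\le\varepsilon(e^{\varepsilon}-1)$ via symmetrized KL and the total-variation estimate $\sum_o|P(o)-Q(o)|\le e^{\varepsilon}-1$ is right (your ``leftover mass'' remark is the needed observation that $Q(\{P>Q\})+P(\{Q>P\})\le 1$), and Azuma/Hoeffding with per-step range $2\varepsilon$ gives precisely the tail $\exp(-\lambda^2/(2k\varepsilon^2))$ and hence $\varepsilon'=\varepsilon\sqrt{2k\ln(1/\delta')}+k\varepsilon(e^{\varepsilon}-1)$ after the routine bad-event reduction.

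Two genuine issues remain. First, the ``standard structural fact'' you invoke for $\delta>0$ is false as stated: $(\varepsilon,\delta)$-indistinguishability does \emph{not} give an event of probability $1-\delta$ on which the pointwise loss is at most $\varepsilon$. Consider $P(o)=e^{\varepsilon(1+\gamma)}Q(o)$ on a set of $Q$-mass $1/2$ (renormalized elsewhere): the pair is $(\varepsilon,\delta)$-indistinguishable with $\delta\approx e^{\varepsilon}(e^{\varepsilon\gamma}-1)/2$, which is arbitrarily small, yet the set where the loss exceeds $\varepsilon$ has constant mass. What $\delta$ controls is the \emph{excess} mass, not the measure of the bad set; the correct route (and the one in the cited paper and its refinements) replaces $P_j,Q_j$ by auxiliary distributions $\tilde P_j,\tilde Q_j$ that are pointwise $\varepsilon$-indistinguishable and within statistical distance $\delta$ of the originals, applied conditionally on each history, which then yields the $k\delta$ term. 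Since you flagged this step as your main obstacle rather than proving it, this is the concrete missing piece. Second, a minor point: your derivation of the ``furthermore'' bound from the first bound requires $k\varepsilon(e^{\varepsilon}-1)\le\varepsilon\sqrt{2k\ln(1/\delta')}$, which under the hypothesis $\varepsilon'=\varepsilon\sqrt{8k\log(1/\delta')}<1$ only follows when $\delta'\le e^{-1/2}$ (so that $\varepsilon\le 1$ and $2k\varepsilon^2$ is truly dominated); for $\delta'$ close to $1$ the inference breaks, so either restrict $\delta'$ or state the corollary in the usual ``choose $\varepsilon\le\varepsilon'/(2\sqrt{2k\ln(1/\delta')})$'' form.
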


\section{Additional Related Work}
\label{app:add-related-work}
We provide a discussion about additional related work in hierarchical clustering and similar DP graph problems.

\paragraph{Hierarchical clustering with differential privacy.} Obtaining differentially private algorithms for HC is a recent direction, and there are only a few works before \cite{imola2023differentially}. To the best of our knowledge, \cite{XiaoCT14} was the first work to investigate HC with privacy constraints. However, their work does \emph{not} consider the optimization of objective functions. \cite{KolluriBS21} studied the differentially private algorithms under a maximization variant of Dasgupta's objective, and their results are not directly comparable with ours (or those of \cite{imola2023differentially}). Outside the regime of Dasgupta's objective, the Moseley-Wang objective (\cite{MoseleyW23}) is considered as the dual of Dasgupta's objective. Although there is no dedicated work for DP algorithms for the Moseley-Wang objective, it is known that recursive random cuts result in $1/3$-approximation for the objective, which, in turn, is also a DP algorithm due to the independence of the graph topology and weights.

\paragraph{Minimum spanning trees with differential privacy.} Another closely related graph problem that has been studied through the lens of privacy is the minimum spanning trees (MST). In this problem, the goal is to release a set of edges that form a tree and span all vertices. For this problem, a line of work by \cite{sealfon2016shortest,pinot2018minimum,HT24MST,PR24MST} have shown that the tight error bound for the weight-neighboring case is $\Theta(n\log{n}/\varepsilon)$ for $\eps$-DP algorithms. \cite{HT24MST,PR24MST} further considered the aspect of universal optimality and running time. Moreover, these works also considered $\eps$-DP algorithms for neighboring graphs with $\ell_{\infty}$ norm of the weights differs by at most $1$, and obtained tight additive error bounds of $\Theta(n^2\log{n}/\varepsilon)$. We note that the results on MST cannot be directly compared to our results: on a high level, the error induced by privacy mechanisms for the MST problem is easier to handle since the noises across edges do not accumulate.

\section{More Discussions on the Approximation for \texorpdfstring{\Cref{thm:hc-new-upper}}{pointer}}
\label{app:hc-add-error-bound}
We discuss other forms of approximation for the algorithm in \Cref{thm:hc-new-upper}. In \Cref{sec:intro}, we discussed that compared to additive errors, the \emph{multiplicative} approximation better captures clusterability for hierarchical clustering. Nevertheless, we provide some analysis for the additive error of our algorithms.

\subsection{Blending algorithms with multiplicative and additive errors}
Note that in our privacy models, the algorithms in \cite{imola2023differentially} remain private. When $\OPT_{G} = o(n^2/\log^{0.5}{n})$ ($\OPT_{G} = o(n^{2.5} \cdot \log^{0.5}{n})$ for poly-time algorithms), our algorithm is strictly better than the algorithm in \cite{imola2023differentially}. On the other hand, the algorithms in \cite{imola2023differentially} become more competitive when $\OPT_{G}$ becomes large. As such, a natural question is whether we could ``adjust to'' the algorithms based on the input instance and output with the one with the smaller error.

Running different algorithms in parallel and outputting the one with the smaller error is a standard trick in approximation algorithms. Nevertheless, for \emph{differentially private} algorithm, we need to make sure that the selection between different algorithms retains privacy. For our purpose, even if we obtain an HC tree $\cT$ from a private algorithm, some privacy loss will incur after we \emph{evaluate} the HC cost on the \emph{original graph} $G$. As such, we would need an $\eps$-DP algorithm that could return the \emph{cost} of an HC tree.

We present such an algorithm in this section and show that we could indeed get a smaller error among the additive and multiplicative algorithms (albeit with a small loss). Our main lemma for the private evaluation of the HC cost is as follows.
\begin{lemma}
\label{lem:hc-value-eva}
There exist $\eps$-weight DP algorithms such that given a weighted, connected graph $G$ of size $n$ with weight on each edge at least 1 and an HC tree $\cT$, with high probability, in polynomial time computes a number $\costT_{G}(\cT)$ such that
\[\cost_{G}(\cT)\leq \costT_{G}(\cT)\leq \cost_{G}(\cT) + O(\frac{n\log{n}}{\eps}),\]
where $\cost_{G}(\cT)$ is the true (Dasgupta's HC objective) cost of $\cT$ on $G$.
\end{lemma}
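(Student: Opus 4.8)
The plan is to privately estimate Dasgupta's cost of a fixed HC tree $\cT$ by decomposing the cost into contributions that depend linearly on the edge weights, and to exploit the unit-weight assumption to argue that the relevant sensitivity-scaled noise is dominated by $O(n\log n/\eps)$. Recall from \Cref{def:hc-das} that $\cost_G(\cT)=\sum_{(u,v)\in E} w(u,v)\cdot |\cT[u\vee v]|$. For a \emph{fixed} tree $\cT$, the multipliers $|\cT[u\vee v]|\in\{2,\dots,n\}$ are public (they depend only on the tree topology, not on $G$'s weights). So $\cost_G(\cT)$ is a linear functional of the weight vector $w\in\R^{E}$ with coefficient vector $c$ where $c_{(u,v)}=|\cT[u\vee v]|\le n$. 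Under the weight-neighboring relation of \Cref{def:new-weight-dp}, two neighbors differ by $\|w-w'\|_1\le 1$, so the $\ell_1$-sensitivity of $w\mapsto\cost_G(\cT)$ is at most $\max_e c_e\le n$. The naive Laplace mechanism then adds $\Lap(n/\eps)$, which already gives $O(n\log n/\eps)$ additive error with high probability by the tail bound $\Pr[|\Lap(n/\eps)|\ge 10n\log n/\eps]\le n^{-10}$; shifting the output up by $10n\log n/\eps$ makes the estimate a valid upper bound $\costT_G(\cT)\ge\cost_G(\cT)$ with high probability, while the overestimate is still $O(n\log n/\eps)$. This is the whole construction.

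\textbf{The steps I would carry out}, in order: (i) State the mechanism: compute $\cost_G(\cT)$ exactly (it is just a weighted sum over the $m\le\binom n2$ edges, polynomial time given $\cT$), release $\costT_G(\cT):=\cost_G(\cT)+\Lap(n/\eps)+\tfrac{10 n\log n}{\eps}$. (ii) Privacy: the map $w\mapsto\cost_G(\cT)$ has $\ell_1$-sensitivity $\le n$ over weight-neighbors, so by \Cref{prop:lap-privacy} (Laplace mechanism) the release is $\eps$-DP; the deterministic additive shift is post-processing (\Cref{prop:post-process}) and does not affect privacy. (iii) Utility: let $X\sim\Lap(n/\eps)$; then $\Pr[X< -\tfrac{10n\log n}{\eps}]\le e^{-10\log n}=n^{-10}$, so with high probability $\costT_G(\cT)\ge\cost_G(\cT)$; simultaneously $\Pr[X> \tfrac{10 n\log n}{\eps}]\le n^{-10}$, so with high probability $\costT_G(\cT)\le\cost_G(\cT)+\tfrac{20n\log n}{\eps}=\cost_G(\cT)+O(\tfrac{n\log n}{\eps})$. (iv) Remark that the running time is dominated by evaluating the sum and one Laplace draw, hence polynomial.

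\textbf{The main obstacle} — really a subtlety rather than a hard step — is pinning down why $\Delta_{1}=n$ is the right sensitivity and whether the unit-weight constraint is actually needed here. Since $|\cT[u\vee v]|\le n$ always, the $\ell_1$-sensitivity bound $n$ holds for \emph{any} HC tree on $n$ leaves regardless of the unit-weight assumption, so strictly speaking the lemma's hypothesis ``weight on each edge at least $1$'' is used only to make the statement sit inside the paper's privacy model (\Cref{def:new-weight-dp}) and to guarantee $\cost_G(\cT)$ is well-defined/positive; it is not needed for the error bound. I would double-check the edge case where $\cT$ has very unbalanced internal nodes (so some $|\cT[u\vee v]|$ are close to $n$) — this is exactly when sensitivity is tight, and the $O(n\log n/\eps)$ bound is still fine. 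One could alternatively get a slightly sharper constant by noting that a single unit weight-change on edge $(u,v)$ changes the cost by exactly $|\cT[u\vee v]|$, and take sensitivity $=\max_{(u,v)\in E}|\cT[u\vee v]|$, but since this is at most $n$ in the worst case the asymptotics are unchanged, so I would not belabor it.
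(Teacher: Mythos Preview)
Your proposal is correct and follows essentially the same approach as the paper: compute $\cost_G(\cT)$ exactly, observe that its $\ell_1$-sensitivity under weight-neighbors is at most $n$ (since every coefficient $|\cT[u\vee v]|\le n$), add $\Lap(n/\eps)$, and use the Laplace tail to get the $O(n\log n/\eps)$ bound with high probability. If anything you are slightly more careful than the paper, which does not explicitly mention the deterministic upward shift needed to guarantee the lower inequality $\cost_G(\cT)\le\costT_G(\cT)$; your observation that the unit-weight hypothesis is not actually used in the error analysis is also accurate.
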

\begin{proof}
The algorithm is simply as follows.
\begin{tbox}
\begin{itemize}
\item Compute the cost of tree $\cT$ on $G$.
\item Add Laplace noise with $\textsf{Lap}{(\frac{n}{\varepsilon})}$ to the output cost.
\end{itemize}
\end{tbox}
For privacy, we claim that the sensitivity of the function $\cost_{G}(\cT)$ is at most $1$ for weight-neighboring graphs. Such a claim has been shown in \cite{imola2023differentially} (for the edge-neighboring model), and we give the statement and proof for the purpose of completeness.
\begin{claim}
\label{clm:sens-neighboring-graph}
Let $G=(V,E)$ be a graph and $w$ and $w'$ be neighboring weights as prescribed by \Cref{def:weight-neighbor-vanilla}. Then, for any fixed HC tree $\cT$, there is 
\[\card{\cost_{G, w}(\cT)-\cost_{G, w'}(\cT)}\leq n,\]
where $\cost_{G, w}(\cdot)$ and $\cost_{G, w'}(\cdot)$ are the (Dasgupta's HC objective) costs of $\cT$ under $w$ and $w'$, respectively.
\end{claim}
\begin{proof}
The difference in Dasgupta's costs can be bounded as follows.
\begin{align*}
\card{\cost_{G, w}(\cT)-\cost_{G, w'}(\cT)} &= \card{\sum_{(u,v)\in E} w(u,v)\cdot |\cT[{u\vee v}]| - \sum_{(u,v)\in E} w'(u,v)\cdot |\cT[{u\vee v}]|}\\
&= \card{\sum_{(u,v)\in E} (w(u,v)-w'(u,v))\cdot |\cT[{u\vee v}]|}\\
&\leq \sum_{(u,v)\in E} \card{w(u,v)-w'(u,v)} \cdot |\cT[{u\vee v}]| \tag{$\card{\sum_i a_i}\leq \sum_i \card{a_i}$ for real numbers}\\
&\leq n\cdot \sum_{(u,v)\in E} \card{w(u,v)-w'(u,v)} \tag{$|\cT[{u\vee v}]|\leq n$ since the root contains at most $n$ vertices}\\
&\leq n, \tag{by the weight-neighboring definition}
\end{align*}
as claimed.
\myqed{\Cref{clm:sens-neighboring-graph}}
\end{proof}
As such, by \Cref{prop:lap-privacy}, the algorithm is $\eps$-DP. Finally, for approximation guarantees, note that by the concentration of Laplace distribution, we have that
\begin{align*}
\Pr\paren{\text{additive error}\geq 10\cdot \frac{n \log{n}}{\eps}}\leq \exp(-10\log{n})\leq \frac{1}{n^{10}},
\end{align*}
which is as desired. 
\myqed{\Cref{lem:hc-value-eva}}
\end{proof}
Using \Cref{lem:hc-value-eva}, we could perform the ``outputting the algorithm with the smaller error'', and obtain the following result.
\begin{lemma}
\label{lem:add-and-mult-error-output-lesser}
There exist a $(\eps, \delta)$-weight DP algorithm such that given a weighted, connected graph $G$ of size $n$ with weight on each edge at least $1$, for any $\eps \in (0,1)$, in polynomial time outputs an HC tree $\cT$ whose cost is at most
\begin{align*}
\resizebox{\hsize}{!}{
$\min\Big\{O(\frac{\log^{1.5}{n}\sqrt{\log{(1/\delta)}}}{\eps}) \cdot \OPT_{G} + O(\frac{n\log{n}\sqrt{\log{(1/\delta)}}}{\eps}), \, O(\sqrt{\log{n}}) \cdot \OPT_{G} + O(\frac{n^{2.5} \log^2{n} \log^{2}(1/\delta)}{\eps})\Big\}$,
}
\end{align*}
where $\OPT_{G}$ is the optimal HC cost of $G$ under Dasgupta's objective.
\end{lemma}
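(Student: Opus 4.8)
The plan is to run, in parallel, our private hierarchical clustering algorithm \dphcalg and the polynomial-time $(\eps,\delta)$-DP algorithm of~\cite{imola2023differentially} (which remains private in the weight-level model), then privately estimate the Dasgupta cost of each of the two resulting trees using \Cref{lem:hc-value-eva}, and finally output whichever tree has the smaller noisy cost estimate. I would split the privacy budget as follows: allocate $(\eps/2,\delta/2)$ to the algorithm of~\cite{imola2023differentially}, and use a common parameter $\eps_0=\Theta(\eps/\sqrt{\log(1/\delta)})$ both for \dphcalg and for each of the two calls to the cost-evaluation procedure of \Cref{lem:hc-value-eva}. Write $\cT_A$ for the tree from \dphcalg, $\cT_B$ for the tree from~\cite{imola2023differentially}, and $\widetilde c_A,\widetilde c_B$ for their noisy estimates; the returned tree is $\cT_A$ if $\widetilde c_A\le\widetilde c_B$ and $\cT_B$ otherwise. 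This final comparison is post-processing and hence free.

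\textbf{Privacy.}
For privacy I would argue as follows. \dphcalg with parameter $\eps_0$ is $\eps_0$-DP by \Cref{thm:hc-new-upper}, and each cost-evaluation call is $\eps_0$-DP by \Cref{lem:hc-value-eva} (for a fixed tree $\cT$, the map $G\mapsto\cost_G(\cT)+\Lap{n/\eps_0}$ has sensitivity $n$, so it is $\eps_0$-DP). These three pure-DP mechanisms are composed adaptively — the evaluation of $\cT_A$ (resp.\ $\cT_B$) takes as input a tree produced by an earlier mechanism — so by the advanced composition theorem (\Cref{prop:strong-comp}), with the clean form $\eps'=\eps_0\sqrt{8\cdot 3\cdot\log(1/\delta')}$ valid because $\eps\in(0,1)$, their composition is $(\eps/2,\delta/2)$-DP once the constant in $\eps_0$ is chosen appropriately (with $\delta'=\delta/2$). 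Composing this with the $(\eps/2,\delta/2)$-DP algorithm of~\cite{imola2023differentially} by basic composition (\Cref{prop:basic-comp}) yields an $(\eps,\delta)$-DP mechanism overall.

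\textbf{Utility.}
For utility I would union-bound over three high-probability events: (i) $\cost_G(\cT_A)\le O(\tfrac{\log^{1.5}n}{\eps_0})\cdot\OPT_G$, which holds with high probability by the analysis underlying \Cref{thm:hc-new-upper}; (ii) $\cost_G(\cT_B)\le O(\sqrt{\log n})\cdot\OPT_G+O(\tfrac{n^{2.5}\log^2 n\,\log^2(1/\delta)}{\eps})$, the high-probability guarantee of the polynomial-time algorithm of~\cite{imola2023differentially}; and (iii) $\cost_G(\cT_X)\le\widetilde c_X\le\cost_G(\cT_X)+\eta$ for $X\in\{A,B\}$, with $\eta:=O(\tfrac{n\log n}{\eps_0})$, from \Cref{lem:hc-value-eva}. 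On this event the output tree $\cT$ satisfies $\cost_G(\cT)\le\widetilde c_{\mathrm{out}}\le\min(\widetilde c_A,\widetilde c_B)\le\min\!\big(\cost_G(\cT_A),\cost_G(\cT_B)\big)+\eta$. Substituting $\eps_0=\Theta(\eps/\sqrt{\log(1/\delta)})$ turns event (i) into $O(\tfrac{\log^{1.5}n\sqrt{\log(1/\delta)}}{\eps})\cdot\OPT_G$ and turns $\eta$ into $O(\tfrac{n\log n\sqrt{\log(1/\delta)}}{\eps})$; adding $\eta$ to each branch and noting that on the additive branch $\eta$ is dominated by the $O(\tfrac{n^{2.5}\log^2 n\log^2(1/\delta)}{\eps})$ term recovers exactly the claimed $\min\{\cdot,\cdot\}$ bound. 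Polynomial running time is immediate since each of the three subroutines runs in polynomial time.

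\textbf{Main obstacle.}
The step I expect to need the most care is the privacy bookkeeping for the two evaluation calls: one must check that feeding \Cref{lem:hc-value-eva} a tree that is itself the (randomized) output of a prior private mechanism genuinely fits the adaptive-composition framework — so that the per-tree $\eps_0$-DP guarantee is sufficient — and that interleaving these pure-DP pieces with the $(\eps/2,\delta/2)$-DP algorithm of~\cite{imola2023differentially} does not disturb the accounting, which it does not since composition bounds are insensitive to the order of the composed mechanisms. A secondary technical point is pinning down the constant hidden in $\eps_0$ so that the advanced-composition overhead lands the three pure-DP pieces at precisely $(\eps/2,\delta/2)$, and then verifying the absorption of the evaluation error $\eta$ into the stated additive terms on both branches of the minimum.
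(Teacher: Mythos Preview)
Your proposal is correct and follows essentially the same approach as the paper: run both \dphcalg and the polynomial-time algorithm of \cite{imola2023differentially}, privately evaluate the cost of each tree via \Cref{lem:hc-value-eva}, and return the one with smaller noisy estimate. The paper's proof is terser and uses a uniform budget $\eps'=O(\eps/\sqrt{\log(1/\delta)})$ for all four subroutines before invoking strong composition, whereas you give \cite{imola2023differentially} its own $(\eps/2,\delta/2)$ budget and apply advanced composition only to the three pure-DP pieces; both splits work, and yours is arguably a bit more careful about the adaptive-composition bookkeeping than the paper's sketch.
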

\begin{proof}
The lemma follows by running both our algorithm and the algorithm in \cite{imola2023differentially}, evaluating their cost privately with the algorithm of \Cref{lem:hc-value-eva}, and outputting the HC tree with the smaller cost. Let $\cT$ be the resulting HC tree from our algorithm and $\cT'$ be the resulting HC tree from the algorithm in \cite{imola2023differentially}. For the purpose of privacy, the parameter input for each algorithm is $\eps' = O(\frac{\eps}{\sqrt{\log(1/\delta)}})$ (note that the only part we need to perform approximate DP is the algorithm of \cite{imola2023differentially}). Since the final algorithm is only a function of $\cT$, $\cT'$, $\costT_{G}(\cT)$, $\costT_{G}(\cT')$, the privacy guarantee follows from the composition (\Cref{prop:strong-comp}) and post-processing (\Cref{prop:post-process}). Finally, the approximation guarantees follow from the respective error of the algorithms plus the additive error during the private cost evaluation as in \Cref{lem:hc-value-eva}.
\end{proof}
Note that the condition of $\eps\in (0,1)$ is just for the conciseness of the statement. For $\eps>1$, we could similarly use the composition in \Cref{prop:strong-comp}, albeit there will be an additional $\eps\cdot (e^{\eps}-1)$ term. We omit the details for the cleanness of the presentation.

\subsection{Exponential-time algorithms}
We now show that if we are allowed to use exponential time, we can get stronger bounds for the multiplicative error than that of \Cref{thm:hc-new-upper}, and we can obtain an additive error of at most $O(\frac{n^2\log{n}}{\eps})$ as in \cite{imola2023differentially}. These algorithms are not practical; nonetheless, they highlight the extent of approximation we can get for $\eps$-DP algorithms.

The formal statement is given as follows.
\begin{lemma}
\label{lem:hc-exponential-algs}
There exist $\eps$-weight DP algorithms such that given a weighted, connected graph $G$ of size $n$ with weight on each edge at least 1, in exponential time outputs an HC tree $\cT$ whose cost is at most
\begin{align*}
\min\Big\{O(\frac{\log{n}}{\eps}) \cdot \OPT_{G} + O(\frac{n\log{n}}{\eps}), \,  \OPT_{G} + O(\frac{n^{2} \log{n}}{\eps})\Big\},
\end{align*}
where $\OPT_{G}$ is the optimal HC cost of $G$ under Dasgupta's objective. 
\end{lemma}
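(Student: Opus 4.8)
The plan is to exhibit two exponential-time $\eps$-weight DP algorithms, one attaining each branch of the minimum. Since the statement only asserts the \emph{existence} of $\eps$-DP algorithms meeting each bound (merging the two into a single mechanism that beats the minimum would, exactly as in \Cref{lem:add-and-mult-error-output-lesser}, require a private cost comparison and hence only yield approximate DP), it suffices to describe the two separately.

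For the first, $O(\frac{\log{n}}{\eps})$-multiplicative branch, I would run \dphcalg verbatim except that the polynomial-time $O(\sqrt{\log n})$-approximate $\frac{1}{3}$-balanced sparsest cut routine (\Cref{prop:scut-sdp}) is replaced by brute-force enumeration of all $\frac{1}{3}$-balanced cuts of the perturbed graph, returning one of exactly minimum sparsity; this is the only place exponential time is used. Privacy is untouched: the weights are accessed only through the input perturbation of Steps 1--2, and all recursive cuts are a post-processing of $G''$ (\Cref{prop:post-process}), so the algorithm is $\eps$-DP by the argument of \Cref{lem:bsc-dp}. For utility I would rerun the analysis of \Cref{lem:bsc-utility} with the $O(\sqrt{\log n})$ factor dropped: since every edge of $G$ — and hence of every vertex-induced subgraph met in the recursion, which inherits the unit-minimum-weight property — has weight at least $1$, the number $k$ of edges crossing the true optimal cut $S$ satisfies $k\le w_{G}(S,\bar{S})=\phi_{G}\abs{S}$, so both the deterministic term $\frac{10k\log n/\eps}{\abs{S}}$ and, by \Cref{prop:lap-sum}, the noise term $\frac{1}{\abs{S}}\abs{\sum_{i\le k}X_i}$ are $O(\frac{\log n}{\eps})\cdot\phi_{G}$ with high probability; exact minimization on $G''$ then outputs a cut of sparsity $O(\frac{\log n}{\eps})\cdot\phi_{G}$ in $G$. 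A union bound over the $O(n)$ internal nodes followed by \Cref{prop:scut-hc} yields an HC tree of cost $O(\frac{\log n}{\eps})\cdot\OPT_{G}$; and because any connected graph on $n$ vertices with unit-minimum weights has $\OPT_{G}=\Omega(n)$, the additive $O(\frac{n\log n}{\eps})$ term in the statement is absorbed into this multiplicative term.

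For the second, purely-additive branch, I would invoke the exponential mechanism over the finite set $\mathcal{T}_n$ of all rooted binary HC trees on the leaf set $V$, with score $q(G,\cT)=-\cost_{G}(\cT)$. By \Cref{clm:sens-neighboring-graph} the $\ell_1$-sensitivity of $q$ for weight-neighboring inputs is at most $n$, so sampling $\cT$ with probability proportional to $\exp\paren{-\frac{\eps}{2n}\cost_{G}(\cT)}$ is $\eps$-DP. Using $\abs{\mathcal{T}_n}=(2n-3)!!\le n^{O(n)}$, so that $\ln\abs{\mathcal{T}_n}=O(n\log n)$, the standard utility guarantee of the exponential mechanism gives $\cost_{G}(\cT)\le \OPT_{G}+\frac{2n}{\eps}\paren{\ln\abs{\mathcal{T}_n}+O(\log n)}=\OPT_{G}+O(\frac{n^2\log n}{\eps})$ with probability at least $1-n^{-5}$; the running time is exponential because we enumerate and weight every tree, and — as a side remark — this branch does not even use the unit-weight assumption.

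I do not expect a genuine obstacle here: both algorithms are light adaptations of machinery already present in the excerpt. The two points that need care are (i) verifying that the bump-up-then-perturb analysis of \Cref{lem:bsc-utility} continues to apply on each vertex-induced subgraph produced by the recursion — which it does, precisely because induced subgraphs of $G$ keep minimum weight at least $1$ and the additive bump is taken with respect to the original $n$ — and (ii) combining the $\Theta(n)$ cost-sensitivity of \Cref{clm:sens-neighboring-graph} with the counting bound $\ln\abs{\mathcal{T}_n}=O(n\log n)$, which is exactly what produces the $O(\frac{n^2\log n}{\eps})$ additive error; neither requires a new idea.
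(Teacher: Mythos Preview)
Your two building blocks are exactly what the paper uses: brute-force balanced sparsest cut on the bumped-and-perturbed graph for the $O(\frac{\log n}{\eps})$-multiplicative branch, and the exponential mechanism over all HC trees (with sensitivity $n$ from \Cref{clm:sens-neighboring-graph} and $\ln|\mathcal{T}_n|=O(n\log n)$) for the additive branch. Those analyses are correct.

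The one gap is your parenthetical claim that merging the two into a single $\eps$-DP mechanism ``would only yield approximate DP.'' That is true in \Cref{lem:add-and-mult-error-output-lesser} only because the polynomial-time algorithm of \cite{imola2023differentially} used there is inherently $(\eps,\delta)$-DP. Here, by contrast, \emph{all} four ingredients --- your modified \dphcalg, the exponential mechanism, and the two private cost evaluations from \Cref{lem:hc-value-eva} --- are pure $\eps$-DP. So you can simply run each with parameter $\eps/4$ and apply basic composition (\Cref{prop:basic-comp}) to get a single pure-$\eps$-DP algorithm that outputs whichever tree has the smaller privately estimated cost; the $O(\frac{n\log n}{\eps})$ evaluation error is already absorbed in the stated bounds. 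The paper does exactly this final combination step, so the statement really is about a single algorithm achieving the $\min$, not two separate ones.
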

\begin{proof}
The algorithm could be obtained by the algorithm from Imola et al. \cite{imola2023differentially}, our \dphcalg in \Cref{thm:hc-new-upper}, and the private evaluation algorithm of \Cref{lem:hc-value-eva}. As such, we only sketch the proofs, and leave the full proof as an exercise for keen readers.

For the algorithm with $O(\frac{\log{n}}{\eps})$ multiplicative error, we can run \dphcalg in the following manner: on step 3 of \bscdpalg, instead of running the approximation algorithm in \Cref{prop:scut-sdp}, we use $2^n$ time to enumerate all cuts, and output the cut $S$ with the lowest sparsity. With a minor modification of the analysis of \Cref{thm:hc-new-upper}, it is easy to see that the output cut is an $O(\frac{\log{n}}{\eps})$ multiplicative approximation of the $1/3$-balanced sparsest cut. As such, we can use \Cref{prop:scut-hc} to obtain the desired approximation guarantee for HC.

For the algorithm with $O(\frac{n^2\log{n}}{\eps})$ additive error, we can run the exponential mechanism as in Imola et al. \cite{imola2023differentially}. Finally, with the same argument of \Cref{lem:add-and-mult-error-output-lesser}, we can run both algorithms and their private cost evaluation with parameter $\eps/4$, and use composition (\Cref{prop:strong-comp}) and post-processing (\Cref{prop:post-process}) to obtain the desired statement.
\end{proof}

\subsection{Approximation Guarantees on graphs with \texorpdfstring{$<1$}{lessthanone} weights}
In this section, we discuss the implication of our algorithm on graphs with weights \emph{less than $1$}. We show that the approximation guarantee of \dphcalg is fairly robust, and for any graph with the minimum weight at least $\alpha$, we could achieve $O(\frac{\log^{1.5}{n}}{\alpha \cdot \eps})$-multiplicative approximation. 

\begin{proposition}
\label{prop:hc-flexible}
    The algorithm \dphcalg is $\eps$-weight DP. 
    Furthermore, given a weighted, connected graph $G$ of size $n$ with weight on each edge at least $\alpha$, \dphcalg outputs a HC tree $\mathcal{T}$ such that the HC cost by $\mathcal{T}$ is at most $O(\frac{\log^{1.5}{n}}{\alpha \cdot \eps})\cdot \OPT_{G}$,
    where $\OPT_{G}$ is the optimal HC cost of $G$ under Dasgupta's objective.
\end{proposition}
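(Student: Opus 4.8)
The plan is to re-run the proofs of \Cref{subsec:proof-of-sparsest-cut-guarantee,subsec:dp-hc-alg} while carrying the minimum weight $\alpha$ through the utility analysis, showing that relaxing $w(e)\ge 1$ to $w(e)\ge\alpha$ changes \emph{only} the approximation ratio --- scaling it by $1/\alpha$ --- and leaves privacy intact; the algorithm \dphcalg itself is unchanged (it still overlays each edge by $10\log n/\eps$ and adds $\Lap{1/\eps}$ noise before recursively cutting). Privacy needs no new argument: the proof of \Cref{lem:bsc-dp} and the post-processing argument establishing that \dphcalg is $\eps$-weight DP use only that the weight vector has $\ell_1$-sensitivity at most $1$ under the (possibly relaxed) neighboring relation and that every cut of every vertex-induced subgraph of $G''$ is a function of $G''$; neither fact uses the lower bound on edge weights, so $\Lap{1/\eps}$ input perturbation followed by \hcalg stays $\eps$-DP by \Cref{prop:lap-privacy,prop:post-process}. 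Likewise \Cref{clm:positive-weights} is untouched, since the overlay $10\log n/\eps$ and the tail bound $\Pr(\Lap{1/\eps}\le -10\log n/\eps)\le n^{-10}$ do not involve $\alpha$, so a union bound still gives $w''(e)\ge 0$ for all edges with high probability and \Cref{prop:scut-sdp} may be invoked.

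For utility, fix a vertex-induced subgraph $H$ of $G$ (so $\min_e w(e)\ge\alpha$ in $H$) and let $H''$ be its overlaid-and-noised version that appears inside $G''$. Rerunning the computation in the proof of \Cref{lem:bsc-utility} with $S$ an optimal sparsest cut of $H$, $S''$ an optimal cut of $H''$, and $k$ the number of $H$-edges crossing $S$ yields
\[
\frac{w_{H''}(S'',\bar{S''})}{|S''|}\;\le\;\phi_H\;+\;\frac{1}{|S|}\Paren{\frac{10k\log n}{\eps}+\sum_{i=1}^{k}X_i},\qquad X_i\sim\Lap{1/\eps}.
\]
If $\phi_H=0$ then $k=0$, the right-hand side is $0$, and since $w''\ge 0$ this forces $S''$ to be edge-free in $H$ as well, so $\psi_H(S'')=\phi_H$. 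Otherwise $\phi_H>0$, so the cut $S$ has a crossing edge and $w_H(S,\bar S)\ge\alpha$, and the unit-weight bound $k\le w_H(S,\bar S)$ is replaced by $k\le w_H(S,\bar S)/\alpha$; hence the deterministic term is at most $\tfrac{10\log n}{\alpha\eps}\,\phi_H$. For the noise, \Cref{prop:lap-sum} gives $\bigl|\sum_{i=1}^k X_i\bigr|\le\sqrt{80k\log n}/\eps$ with probability $1-2n^{-10}$, so
\[
\frac{\sqrt{80k\log n}}{\eps|S|}\;\le\;\frac{\sqrt{80\log n}}{\eps\sqrt\alpha}\cdot\frac{\sqrt{w_H(S,\bar S)}}{|S|}\;=\;\frac{\sqrt{80\log n}}{\eps\sqrt\alpha}\cdot\frac{\phi_H}{\sqrt{w_H(S,\bar S)}}\;\le\;\frac{\sqrt{80\log n}}{\eps\,\alpha}\,\phi_H .
\]
Thus the overlay and noise cost an $O(\log n/(\alpha\eps))$ multiplicative factor, which combined with the $O(\sqrt{\log n})$ factor of \Cref{prop:scut-sdp} gives an $O(\log^{1.5}n/(\alpha\eps))$-approximate balanced sparsest cut on $H$.

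Finally, a union bound over the $O(n)$ internal nodes of $\cT$ --- each inducing a subgraph of $G$, hence of minimum weight $\ge\alpha$, and the vertex-induced subgraphs of $G''$ are exactly these subgraphs after the common overlay-and-noise --- makes the cut bound hold simultaneously everywhere, and \Cref{prop:scut-hc} then upgrades the recursive $O(\log^{1.5}n/(\alpha\eps))$-approximate $1/3$-balanced sparsest cut into an $O(\log^{1.5}n/(\alpha\eps))$-approximation to $\OPT_G$, which is the claimed guarantee. The one place I would check carefully is the bookkeeping in the last display: the factor $1/\alpha$ must enter the noise term \emph{exactly once}, as $1/\sqrt\alpha$ from $k\le w_H(S,\bar S)/\alpha$ together with another $1/\sqrt\alpha$ from $\sqrt{w_H(S,\bar S)}\ge\sqrt\alpha$, so that the ratio stays $O(\log^{1.5}n/(\alpha\eps))$ and does not degrade to $1/\alpha^{3/2}$ or $1/\alpha^2$; everything else is a transcription of the $\alpha=1$ arguments, modulo the pre-existing subtlety about disconnected induced subgraphs, which is absorbed into the $\phi_H=0$ case above.
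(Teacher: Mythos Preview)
Your proposal is correct and follows essentially the same route as the paper: privacy via \Cref{lem:bsc-dp} and post-processing (independent of $\alpha$), utility by rerunning the proof of \Cref{lem:bsc-utility} with the single substitution $k\le w(S,\bar S)/\alpha$, and then \Cref{prop:scut-hc} plus a union bound over internal nodes. Your treatment of the noise term is in fact more explicit than the paper's sketch, which simply asserts the bound without tracking how the two $1/\sqrt{\alpha}$ factors combine, and your handling of the $\phi_H=0$ case is an additional care the paper omits.
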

Proving \Cref{prop:hc-flexible} would require a repetition of all the steps we did in \Cref{thm:hc-new-upper}. Therefore, we provide a proof sketch on why the lemma is true. The privacy guarantee of \Cref{prop:hc-flexible} follows directly from \Cref{lem:bsc-dp}, and the approximation guarantee is based on the following lemma.
\begin{lemma}
\label{lem:bsc-utility-small-weight}
With high probability, for any graph $G=(V,E,w)$ whose minimum weight is at least $\alpha$, the \bscdpalg algorithm returns a balanced cut in polynomial-time with at most $O(\frac{\log^{1.5} n}{\alpha\cdot \eps})$-multiplicative approximation to the sparsity of the balanced sparsest cut.
\end{lemma}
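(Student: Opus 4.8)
The plan is to re-run the proof of \Cref{lem:bsc-utility} essentially verbatim, locating the one place where the unit-weight hypothesis was used and replacing it with the weaker consequence of a minimum edge weight of $\alpha$. The algorithm and its privacy analysis are unchanged: we still run \bscdpalg with the additive bump $10\log n/\eps$ and noise $\textsf{Lap}(1/\eps)$, so $\eps$-DP follows from \Cref{lem:bsc-dp}, and \Cref{clm:positive-weights} --- whose proof uses only the magnitude of the bump, not the edge weights --- still ensures, with high probability, that every perturbed weight is nonnegative and, more precisely, that $X_e > -10\log n/\eps$ for every edge $e$. Hence the $O(\sqrt{\log n})$-approximate balanced sparsest cut solver of \Cref{prop:scut-sdp} applies to $G''$, and the only thing to re-examine is the multiplicative distortion introduced by the bump-and-perturb step, which we must now bound by $O(\log n/(\alpha\eps))$ rather than $O(\log n/\eps)$.

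Let $S$ be the balanced sparsest cut of $G$, so $\phi_G = w_G(S,\bar{S})/|S|$, and let $k = |E(S,\bar{S})|$. The sole use of the unit-weight assumption in \Cref{lem:bsc-utility} was the bound $k \le w_G(S,\bar{S})$; with minimum weight $\alpha$ this becomes $k \le w_G(S,\bar{S})/\alpha = \phi_G|S|/\alpha$. Feeding this into the same chain of inequalities: the deterministic bump contributes $\frac{10k\log n/\eps}{|S|} \le \frac{10\log n}{\alpha\eps}\phi_G$, and for the Laplace term $\frac{1}{|S|}|\sum_{i=1}^{k}X_i|$ we apply \Cref{prop:lap-sum} when $k$ is large enough for it to apply (getting $|\sum_{i=1}^{k} X_i| \le \sqrt{80k\log n}/\eps$ with high probability) and, when $k$ is small, instead union-bound $|X_e|\le 10\log n/\eps$ over the $O(\log n)$ relevant edges; substituting $k \le \phi_G|S|/\alpha$ in either regime gives $\frac{1}{|S|}|\sum_{i=1}^{k} X_i| \le O(\frac{\log n}{\alpha\eps})\phi_G$ with high probability. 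Combining, the optimal balanced sparsity of the perturbed graph obeys $\phi_{G''} \le \psi_{G''}(S) \le (1 + O(\frac{\log n}{\alpha\eps}))\phi_G = O(\frac{\log n}{\alpha\eps})\phi_G$.

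Finally, let $S_{\mathrm{out}}$ be the cut returned by the solver of \Cref{prop:scut-sdp}, so $\psi_{G''}(S_{\mathrm{out}}) \le O(\sqrt{\log n})\,\phi_{G''} = O(\frac{\log^{1.5}n}{\alpha\eps})\phi_G$. To move this bound from $G''$ back to $G$, note that on the good event above $X_e > -10\log n/\eps$ for every edge, so the per-edge bump dominates the per-edge noise and $w_{G''}(S_{\mathrm{out}},\bar{S}_{\mathrm{out}}) \ge w_G(S_{\mathrm{out}},\bar{S}_{\mathrm{out}})$; hence $\psi_G(S_{\mathrm{out}}) \le \psi_{G''}(S_{\mathrm{out}}) \le O(\frac{\log^{1.5}n}{\alpha\eps})\phi_G$, which is the claimed approximation to the sparsity of the balanced sparsest cut of $G$. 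A union bound over the $O(1)$ high-probability events keeps everything simultaneously valid, and the running time is polynomial since only \Cref{prop:scut-sdp} is invoked.

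I expect the one genuinely new point to be the Laplace term on the reference cut $S$: unlike the unit-weight case, a balanced cut may now have very few crossing edges whose total weight is only $\ge\alpha$, so the regime hypothesis of \Cref{prop:lap-sum} can fail and one must split off a small-$k$ case handled by a union bound --- but the key observation is that the single substitution $k \le \phi_G|S|/\alpha$ makes both regimes yield the same $O(\log n/(\alpha\eps))$ distortion, so no extra polylog loss is incurred and the exponent $1.5$ is preserved. Everything else is a mechanical repetition of the proof of \Cref{lem:bsc-utility}.
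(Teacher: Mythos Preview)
Your proposal is correct and follows essentially the same route as the paper: locate the single use of the unit-weight assumption in the proof of \Cref{lem:bsc-utility}, replace $k\le w_G(S,\bar S)$ by $k\le w_G(S,\bar S)/\alpha$, and propagate the extra $1/\alpha$ through the bump term and the Laplace term to obtain the $O(\log n/(\alpha\eps))$ distortion before applying \Cref{prop:scut-sdp}. You are in fact more careful than the paper on two points --- the small-$k$ regime where the hypothesis of \Cref{prop:lap-sum} fails, and the final step transferring the sparsity bound from $G''$ back to $G$ via $w_{G''}\ge w_G$ on the high-probability event --- both of which the paper leaves implicit here and in the original \Cref{lem:bsc-utility}.
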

\begin{proof}
The proof of \Cref{lem:bsc-utility-small-weight} follows by the same argument of the proof of \Cref{lem:bsc-utility} (as in \Cref{subsec:proof-of-sparsest-cut-guarantee}) with slight modifications. In particular, by the same argument, we let $S$ be the the \emph{optimal} sparsest cut $S$ of $G$, and suppose there are $k$ edges in the cut. Let $X_i$ be the random variable for the Laplacian noise added on the $i$-th edge of the cut $(S, \card{S})$, we could again show that
\begin{align*}
    \frac{w_{G''}(S'',\bar{S''})}{|S''|} \leq \phi_{G} + \frac{\frac{10k\log n}{\eps}+\sum_{i=1}^{k} X_i}{|S|}.
\end{align*}
Furthermore, by the argument with concentration inequalities, we have that
\begin{align*}
    \frac{1}{\card{S}}\cdot \sum_{i=1}^{k} X_i < \frac{10\sqrt{k\cdot\log n}}{\eps |S|}.
\end{align*}
Note that by the condition of $w(e)\geq \alpha$ for all $e\in E$, we have that $k\leq \frac{w_{G}(S, \bar{S})}{\alpha}$. As such, we have that
 \begin{align*}
        \frac{w_{G''}(S'',\bar{S''})}{|S''|} < \phi_{G} +{20\phi_G\log {\frac{n}{\alpha \cdot \eps}} }  
        = O(\log {\frac{n}{\alpha \cdot \eps}})\cdot \phi_G.
    \end{align*}
Finally, as in the proof of \Cref{lem:bsc-utility}, an extra factor of $O(\sqrt{\log{n}})$ is induced by \Cref{prop:scut-sdp}, which gives the desired lemma statement.
\end{proof}
The multiplicative approximation bound of \Cref{prop:hc-flexible} then follows from the same argument as the proof of \Cref{thm:hc-new-upper} by losing a factor of $O(1/\alpha)$.

\subsection{\texorpdfstring{$\eps$}{eps}-DP Polynomial-time algorithm with \texorpdfstring{$O(\frac{n^3\log^{1.5}{n}}{\eps})$}{pointer} additive error}
We now give a polynomial-time $\eps$-DP algorithm that gives $O(\sqrt{n})$-multiplicative approximation with $O(\frac{n^3\log^{1.5}{n}}{\eps})$ additive error. We note that the approximation guarantee of this algorithm is strictly worse than the algorithm of \Cref{lem:add-and-mult-error-output-lesser}. Nevertheless, the algorithm is pure DP, and it does not need to call the algorithm in \cite{imola2023differentially} (note that it is unclear how the poly-time algorithm in \cite{imola2023differentially} could be implemented). As such, we believe the algorithm has its own advantage for certain scenarios.
\begin{lemma}
\label{lem:hc-additive-error}
Given a weighted, connected graph $G$ of size $n$ with weight on each edge at least 1, for any $\eps>0$, the HC tree $\cT$ computed by \dphcalg in \Cref{subsec:dp-hc-alg} has cost at most $\cost_{G}(\cT)\leq O(\sqrt{\log{n}}) \cdot \OPT_{G}+ O(\frac{n^3\log^{1.5}{n}}{\eps})$,
where $\OPT_{G}$ is the optimal HC cost of $G$ under Dasgupta's objective.
\end{lemma}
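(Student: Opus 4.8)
The plan is to show that the tree $\cT$ returned by \dphcalg is simultaneously a good tree for the perturbed graph $G''$ constructed in Steps~1--2 of \bscdpalg, and that passing from $G''$ back to $G$ costs only the advertised additive term. Privacy is not at issue here: it is exactly the guarantee already established for \dphcalg (post-processing of the Laplace mechanism), so only the utility bound, which will hold with high probability, needs to be proved.

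First I would record a monotonicity fact. With high probability every Laplace sample $X_e\sim\textsf{Lap}(1/\eps)$ drawn in Step~2 satisfies $X_e\ge -10\log n/\eps$ — this is the tail bound already used in \Cref{clm:positive-weights}, union-bounded over the at most $n^2$ edges — so $w''(e)=w(e)+10\log n/\eps+X_e\ge w(e)\ge 1$ for all $e$. In particular $G''$ has positive weights, so \hcalg can validly invoke the $O(\sqrt{\log n})$-approximate balanced sparsest cut of \Cref{prop:scut-sdp} at every internal node (exactly as in the proof of \Cref{thm:hc-new-upper}), and \Cref{prop:scut-hc} then gives $\cost_{G''}(\cT)\le O(\sqrt{\log n})\cdot\OPT_{G''}$. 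Moreover, since Dasgupta's cost $\sum_e w(e)\,|\cT[u\vee v]|$ is non-decreasing in every edge weight, $w''\ge w$ yields $\cost_G(\cT)\le\cost_{G''}(\cT)$, hence $\cost_G(\cT)\le O(\sqrt{\log n})\cdot\OPT_{G''}$.

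Next I would bound $\OPT_{G''}$ against $\OPT_G$. Taking $\cT^*$ optimal for $G$, so that $\cost_G(\cT^*)=\OPT_G$, we get $\OPT_{G''}\le\cost_{G''}(\cT^*)=\OPT_G+\sum_e\big(10\log n/\eps+X_e\big)\,|\cT^*[u\vee v]|$. Using $|\cT^*[u\vee v]|\le n$, the fact that there are at most $n^2/2$ edges, and the high-probability bound $|X_e|\le 10\log n/\eps$ for every $e$ (again union-bounding the Laplace tail), the perturbation term is at most $O(n^3\log n/\eps)$, so $\OPT_{G''}\le\OPT_G+O(n^3\log n/\eps)$. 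Substituting this into the bound from the previous paragraph gives $\cost_G(\cT)\le O(\sqrt{\log n})\cdot\OPT_G+O(\sqrt{\log n})\cdot O(n^3\log n/\eps)=O(\sqrt{\log n})\cdot\OPT_G+O(n^3\log^{1.5}n/\eps)$, the claimed bound; the $\log^{1.5}$ appears precisely because the $\sqrt{\log n}$ approximation factor multiplies the $\log n$ coming from the additive ``bump'' in Step~1.

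I do not expect a real obstacle: this is bookkeeping layered on top of ingredients already assembled for \Cref{thm:hc-new-upper}. The only points requiring care are (i) collecting the handful of high-probability events — all $X_e\ge -10\log n/\eps$ (needed both for monotonicity and for validity of the SDP calls along the recursion of \hcalg) and all $|X_e|\le 10\log n/\eps$ (needed to control the $\OPT$ perturbation) — into a single union bound, and (ii) checking that the crude counting estimate on the perturbation term suffices. The latter is fine: the target $O(n^3\log^{1.5}n/\eps)$ is generous enough that no Laplace concentration bound (\Cref{prop:lap-sum}) is needed. If an in-expectation statement were wanted instead, one would additionally absorb the negligible-probability failure event using the trivial worst-case upper bound on Dasgupta's cost.
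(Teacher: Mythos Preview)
Your proposal is correct and actually takes a cleaner route than the paper. The paper introduces an intermediate object, the \emph{optimal balanced tree} $\cTb^*$ (the minimum-cost tree among all $1/3$-balanced HC trees), observes that $\cost_G(\cTb^*)\le O(1)\cdot\OPT_G$ via \Cref{prop:scut-hc}, and proves a sandwich lemma showing that for \emph{any} balanced tree $\cTb$ one has $\cost_G(\cTb)\le\cost_{G''}(\cTb)\le\cost_G(\cTb)+O(n^3\log n/\eps)$. The second inequality is obtained by a level-by-level count: on level $\ell$ each partition has at most $(2/3)^{2(\ell-1)}n^2$ edges, there are $2^{\ell-1}$ partitions, and the geometric sum $\sum_\ell 2^{\ell-1}(2/3)^{2(\ell-1)}$ is $O(1)$. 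The final chain is then $\cost_G(\cT)\le\cost_{G''}(\cT)\le O(\sqrt{\log n})\cdot\OPT_{G''}\le O(\sqrt{\log n})\cdot\cost_{G''}(\cTb^*)\le O(\sqrt{\log n})\cdot(\cost_G(\cTb^*)+O(n^3\log n/\eps))$.

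You bypass the balanced-tree intermediary entirely by comparing $\OPT_{G''}$ directly to $\cost_{G''}(\cT^*)$ for the true optimum $\cT^*$ on $G$, and bounding the perturbation with the crude estimate $|\cT^*[u\vee v]|\le n$ over at most $n^2/2$ edges. This yields the same $O(n^3\log n/\eps)$ additive term with less machinery. The paper's level-wise accounting exploits balancedness but gains nothing here, since both arguments land on the same cubic bound; your shortcut is preferable for this lemma. (The balanced-tree framework would only pay off if one wanted a sharper additive term, which neither argument achieves.)
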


The key notion we need to prove \Cref{lem:hc-additive-error} is the \emph{optimal balanced tree} $\cTb^*$. We now formally define the notion and analyze its properties.

\begin{definition}
\label{def:balanced-tree}
We say an HC tree $\cTb$ is a \emph{balanced tree} if \emph{every} internal node is a $1/3$-balanced partition, i.e., for any partition $H \rightarrow (S, H\setminus S)$ in $\cTb$, there is $\min\{\card{S}, \card{ H\setminus S}\}\geq \frac{\card{H}}{3}$. 

We use $\cTb^*$ to denote the optimal tree among the balanced tree, i.e., $\cost_{G}(\cTb^*)\leq \cost_{G}(\cTb)$ for any balanced tree $\cTb$.
\end{definition}

By \Cref{prop:scut-hc}, we can immediately observe that the tree $\cTb^*$ achieves an $O(1)$-approximation to the optimal HC tree $\cT^*$. 
\begin{observation}
\label{obs:cost-opt-balance-tree}
Let $\cTb^*$ be the optimal tree among balanced trees on $G$. Then, we have $$\cost_{G}(\cTb^*)\leq O(1)\cdot \OPT_{G}.$$
\end{observation}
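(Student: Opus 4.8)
The plan is to derive this directly from \Cref{prop:scut-hc}, with the key observation that an exact recursion already produces a balanced tree. First I would let $\cT_0$ denote the HC tree produced by the recursive procedure that, at each vertex-induced subgraph with at least two vertices, takes an \emph{exact} optimal $\frac{1}{3}$-balanced sparsest cut (i.e., the procedure \hcalg but with the exact minimizer in place of the SDP-based approximation of \Cref{prop:scut-sdp}). Since every split made by this procedure is, by construction, a $\frac{1}{3}$-balanced partition, the tree $\cT_0$ satisfies \Cref{def:balanced-tree}; that is, $\cT_0$ is a balanced tree.

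Next I would apply \Cref{prop:scut-hc} with approximation parameter $\alpha = 1$ (the exact optimal balanced sparsest cut is trivially a $1$-approximate one), which yields $\cost_G(\cT_0) \le O(1)\cdot \OPT_G$. Finally, because $\cTb^*$ is defined to be a minimizer of $\cost_G(\cdot)$ over \emph{all} balanced trees and $\cT_0$ is one such balanced tree, we obtain
\[
\cost_G(\cTb^*) \;\le\; \cost_G(\cT_0) \;\le\; O(1)\cdot \OPT_G,
\]
which is exactly the claimed bound.

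I do not anticipate any genuine obstacle here: the observation is an immediate corollary of \Cref{prop:scut-hc}. The only minor points worth checking are that the recursion defining $\cT_0$ is well-posed — it terminates once every part is a singleton, and any graph on at least two vertices admits a $\frac{1}{3}$-balanced cut (up to the standard integer rounding of the balance constant, which only affects the hidden $O(1)$ factor) — and that \Cref{prop:scut-hc} is indeed stated so as to cover $\alpha = 1$. Both are routine.
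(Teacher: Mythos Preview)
Your proposal is correct and follows essentially the same approach as the paper: the paper's justification is simply that the tree obtained by recursive (exact) balanced sparsest cut is itself a balanced tree, so by \Cref{prop:scut-hc} with $\alpha=1$ the optimal balanced tree $\cTb^*$ can only do better. Your write-up spells out the same one-line argument in more detail.
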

\Cref{obs:cost-opt-balance-tree} is true simply because the tree we obtained by the recursive balanced sparsest cut is a balanced tree. We now use balanced trees as a `bridge' between the costs on $G$ and the perturbed graph $G''$.
\begin{lemma}
\label{lem:balanced-tree-G-G''}
Let $\cTb$ be any balanced HC tree as defined by \Cref{def:balanced-tree}, and let $G''$ be the graph obtained by steps 1 and 2 of \bscdpalg. Then, with high probability, we have that
\[\cost_{G}(\cTb)\leq \cost_{G''}(\cTb)\leq \cost_{G''}(\cTb) + O(\frac{n^3\log{n}}{\eps}).\]
\end{lemma}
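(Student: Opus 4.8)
(I read the displayed inequality as $\cost_G(\cTb)\le\cost_{G''}(\cTb)\le\cost_G(\cTb)+O(n^3\log n/\eps)$, the second $\cost_{G''}$ being an evident typo for $\cost_G$.) The plan is to compare the two costs edge by edge. Dasgupta's objective is a linear function of the edge weights, $\cost_G(\cTb)=\sum_{(u,v)\in E}w(u,v)\,|\cTb[{u\vee v}]|$, with nonnegative coefficients, and the only structural fact I will use is that each coefficient satisfies $|\cTb[{u\vee v}]|\le n$ (the root has $n$ leaves); in particular balancedness of $\cTb$ plays no role in this lemma. By Steps 1 and 2 of \bscdpalg we have, for each $(u,v)\in E$, $w''(u,v)=w(u,v)+\tfrac{10\log n}{\eps}+X_{(u,v)}$ with independent $X_{(u,v)}\sim\Lap{1/\eps}$, so everything reduces to controlling $\sum_{(u,v)\in E}\big(\tfrac{10\log n}{\eps}+X_{(u,v)}\big)\,|\cTb[{u\vee v}]|$.

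First I would fix the high-probability event $\mathcal E=\{\,\forall (u,v)\in E:\ |X_{(u,v)}|\le \tfrac{10\log n}{\eps}\,\}$: since $\Pr[|X|\ge \tfrac{10\log n}{\eps}]\le n^{-10}$ for $X\sim\Lap{1/\eps}$ and $|E|\le\binom{n}{2}$, a union bound gives $\Pr[\mathcal E]\ge 1-n^{-8}$, and I would establish \emph{both} inequalities on this single event so that they hold for one realization of the noise. On $\mathcal E$ the lower bound $\cost_G(\cTb)\le\cost_{G''}(\cTb)$ is immediate: each edge-weight difference $w''(u,v)-w(u,v)=\tfrac{10\log n}{\eps}+X_{(u,v)}$ is nonnegative, and the cost is a nonnegatively-weighted sum of these weights, so it cannot decrease -- this is the same estimate already used in \Cref{clm:positive-weights}, applied to the additive shift rather than to positivity. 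For the upper bound, again on $\mathcal E$, each summand is at most $\tfrac{20\log n}{\eps}$, each coefficient at most $n$, and $|E|\le\binom{n}{2}$, so
\[
\cost_{G''}(\cTb)-\cost_G(\cTb)\;\le\;\binom{n}{2}\cdot\frac{20\log n}{\eps}\cdot n\;=\;O\!\Big(\frac{n^3\log n}{\eps}\Big),
\]
which is exactly the claimed bound.

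I do not expect a genuine obstacle here; the only points that require care are the bookkeeping of the high-probability events -- both directions must be proved on the single event $\mathcal E$ so that they hold for a fixed draw of the noise -- and the observation that in the hard instances $G$ is essentially complete, so $|E|=\Theta(n^2)$ and the crude per-edge contribution $O(n\log n/\eps)$ really does accumulate to $O(n^3\log n/\eps)$. If a sharper handle on the noise were desired, \Cref{prop:lap-sum} applied to the weighted Laplace sum $\sum_{(u,v)}X_{(u,v)}\,|\cTb[{u\vee v}]|$ with scales $b_{(u,v)}=|\cTb[{u\vee v}]|/\eps\le n/\eps$ would give a high-probability bound of $O(n^2\sqrt{\log n}/\eps)$ on that term, which is in any case dominated by the deterministic shift contribution $O(n^3\log n/\eps)$.
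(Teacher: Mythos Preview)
Your proof is correct and follows the same overall strategy as the paper: fix the high-probability event that every Laplace noise has magnitude at most $10\log n/\eps$, deduce the per-edge sandwich $0\le w''(u,v)-w(u,v)\le 20\log n/\eps$, and then sum.

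The one noteworthy difference is in the final summation for the upper bound. The paper expands $\cost_{G''}(\cTb)$ as a sum over levels $\ell$ of the tree, bounds the number of cut edges at level $\ell$ by $(2/3)^{2(\ell-1)}n^2$ using the balancedness of $\cTb$, and then sums the resulting geometric series $\sum_\ell 2^{\ell-1}(2/3)^{2(\ell-1)}=O(1)$ to obtain $O(n^3\log n/\eps)$. Your argument instead uses only the trivial bounds $|\cTb[u\vee v]|\le n$ and $|E|\le\binom{n}{2}$, arriving at the same $O(n^3\log n/\eps)$ directly. Your observation that balancedness of $\cTb$ is irrelevant here is correct and yields a cleaner proof with a slightly better constant; the paper's level-by-level decomposition is unnecessary for this lemma (balancedness is genuinely needed elsewhere, e.g., to bound the depth of $\cT$, but not for this pure edge-weight comparison).
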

\begin{proof}
We prove the first inequality by showing that with high probability, $w''(u,v)\geq w(u,v)$. This is essentially the same proof as \Cref{clm:positive-weights} -- we can show that with probability at least $1-1/n^3$, there is $w''(u,v)\geq w(u,v)$ for all edges $(u,v)\in E$. Therefore, the cost of the same tree $\cTb$ on $G''$ cannot be lower than the cost of $\cTb$ on $G$.

For the second inequality, we first observe that by step 1, the additive error on each edge is $\frac{10\log{n}}{\eps}$. We now show that with high probability, the additive error on all edges by the Laplacian noise is at most $\frac{10\log n}{\eps}$. Again, this is by the same calculation as in \Cref{clm:positive-weights}, as follows. Let $X_{u,v} \sim \textsf{Lap}(1/ {\eps})$ be the random variable for the Laplacian noise on $(u,v)$, we have
\begin{align*}
    \Pr( X_{u,v} \geq 10\log n/\eps) \leq e^{-10\log n} = {n^{-10}},
\end{align*}
and we can apply a union bound over all edges to get that with high probability, $w_{G''}(u,v)\leq w_{G}(u,v)+\frac{20\log{n}}{\eps}$ for all $(u,v)\in E$. 

We now bound the additive error for $\cTb$ from $G''$ to $G$.  In the same way as the proof of \Cref{lem:sparse-cut-hc-reduction}, we define \emph{level $\ell$ cut} as the cut that happens with the distance $\ell-1$ to the root of the HC tree. Let $H\rightarrow (S, H\setminus S)$ be a partition in $\cTb$ of level $\ell$, we first observe that
\begin{align*}
\paren{\text{\# of edges in $\ell$-level partition $H\rightarrow (S, H\setminus S)$}} \leq (\frac{2}{3})^{2(\ell-1)} n^2
\end{align*}
since bigger partition reduces size by at least a factor of $\frac{2}{3}$. Therefore, we can upper-bound the cost of $\cTb$ in $G''$ as
\begin{align*}
& \cost_{G''}(\cTb) \\
=&  \sum_{\substack{\\ \text{$H\rightarrow (S,H \setminus S)$} \\ \text{induced by nodes in $\cT$}}} w_{G''}(S, H \setminus S) \cdot \card{H}\\
=& \sum_{\ell=1}^{\ell_{\text{max}}} \sum_{\substack{\\ \text{$H\rightarrow (S,H \setminus S)$} \\ \text{induced by nodes in $\cT$} \\ \text{on level $\ell$}}} w_{G''}(S, H \setminus S) \cdot \card{H}\\
=&  \sum_{\ell=1}^{\ell_{\text{max}}} \sum_{\substack{\\ \text{$H\rightarrow (S,H \setminus S)$} \\ \text{induced by nodes in $\cT$} \\ \text{on level $\ell$}}} \sum_{(u,v)\in E(S, H \setminus S)} w_{G''}(u,v) \cdot \card{H}\\
\leq &\sum_{\ell=1}^{\ell_{\text{max}}} \sum_{\substack{\\ \text{$H\rightarrow (S,H \setminus S)$} \\ \text{induced by nodes in $\cT$} \\ \text{on level $\ell$}}} \sum_{(u,v)\in E(S, H \setminus S)} \paren{w_{G}(u,v) + \frac{20\log{n}}{\eps}} \cdot \card{H} \tag{by the relationship between $w_{G}(u,v)$ and $w_{G''}(u,v)$}\\
\leq & 
\sum_{\ell=1}^{\ell_{\text{max}}} \sum_{\substack{\\ \text{$H\rightarrow (S,H \setminus S)$} \\ \text{induced by nodes in $\cT$} \\ \text{on level $\ell$}}} \paren{w_{G}(S, H \setminus S) + \frac{20\log{n}}{\eps} \cdot (\frac{2}{3})^{2(\ell-1)} n^2} \cdot \card{H} \tag{by the bound of the number of edges in the partition}\\
\leq &\sum_{\ell=1}^{\ell_{\text{max}}} \sum_{\substack{\\ \text{$H\rightarrow (S,H \setminus S)$} \\ \text{induced by nodes in $\cT$} \\ \text{on level $\ell$}}} w_{G}(S, H \setminus S)\cdot \card{H}  \\ 
    & \qquad + \frac{20n^2 \log{n}}{\eps} \cdot \sum_{\ell=1}^{\ell_{\text{max}}} \sum_{\substack{\\ \text{$H\rightarrow (S,H \setminus S)$} \\ \text{induced by nodes in $\cT$} \\ \text{on level $\ell$}}} \card{H}\cdot (\frac{2}{3})^{2(\ell-1)}\\
\leq & \cost_{G}(\cTb) + \frac{20n^2 \log{n}}{\eps}\cdot n \cdot \sum_{\ell=1}^{\ell_{\text{max}}} 2^{\ell-1}\cdot (\frac{2}{3})^{2(\ell-1)} \tag{$\card{H}\leq n$ and there are $2^{\ell-1}$ partitions on level $\ell$}\\
\leq & \cost_{G}(\cTb) + O(\frac{n^3 \log{n}}{\eps}), \tag{$\sum_{\ell=1}^{\infty} 2^{\ell-1}\cdot (\frac{2}{3})^{2\ell-1}=O(1)$}
\end{align*}
as desired.
\end{proof}

\begin{proof}[\textbf{Finalizing the proof of \Cref{lem:hc-additive-error}.}]
We run $O(\sqrt{\log{n}})$-approximate balanced sparsest cut on $G''$ in \dphcalg. In the end, the algorithm produces a tree $\cT$ such that
\[\cost_{G''}(\cT)\leq O(\sqrt{\log{n}}) \cdot \cost_{G''}(\cT^*(G'')) \leq O(\sqrt{\log{n}}) \cdot \cost_{G''}(\cTb^*),\]
where $\cT^*(G'')$ is the optimal HC tree on $G''$. Since both $\cT$ and $\cTb^*$ are balanced trees, we have
\begin{align*}
\cost_{G}(\cT) &\leq \cost_{G''}(\cT) \tag{using the first inequality of \Cref{lem:balanced-tree-G-G''}} \\
&\leq O(\sqrt{\log{n}}) \cdot \cost_{G''}(\cTb^*)\\
&\leq  O(\sqrt{\log{n}}) \cdot \paren{\cost_{G}(\cTb^*)+ O(\frac{n^3 \log{n}}{\eps})} \tag{using the second inequality of \Cref{lem:balanced-tree-G-G''}}\\
&\leq O(\sqrt{\log{n}})\cdot \cost_{G}(\cTb^*) + O(\frac{n^3 \log^{1.5}{n}}{\eps}),
\end{align*}
as desired by the lemma statement. \myqed{\Cref{lem:hc-additive-error}}
\end{proof}

\section{The Analysis of the Lower Bound in \texorpdfstring{\Cref{sec:lb-sparsest-cut}}{pointer}}
\label{app:balanced-sparse-cut}
In this section, we provide the analysis of the reduction for the balanced sparsest cut lower bound.

\paragraph{High-level technical overview.} \textbf{The balanced sparsest cut lower bound.} 
The connection between the balanced sparsest cut and hierarchical clustering is well-known in the literature. 
Therefore, a natural idea to prove lower bounds for $\eps$-DP balanced sparsest cuts is through a reduction argument from $\eps$-DP hierarchical clustering. 

There are two technical challenges to formalizing the above idea. 
The first challenge is that existing results between balanced sparsest cuts and HC focus on \emph{multiplicative} error, while we need to show a lower bound with \emph{additive} error. 
This would require us to open the black box of the existing technical lemmas, and use a white-box adaptation of the ``charging'' argument of \cite{charikar2017approximate}.
The second, and perhaps the bigger challenge, is the loss of privacy in the process. 
To obtain the HC tree, we need to run the $\eps$-DP algorithm for the balanced sparsest cut on vertex-induced subgraphs, which requires repeated queries for up to $\Omega(n)$ times. 
As such, by the strong composition theorem, we need to lose a factor of $\sqrt{n}$ on the additive error, and we can only obtain an HC algorithm for $(\eps, \delta)$-approximate DP -- for which we can only prove lower bounds for very small $\delta$\footnote{Although we did not include a proof for the HC lower bounds with $(\eps, \delta)$-DP, it appears the proof in \Cref{sec:lb-weight-level-dp} would still work for $\delta=1/2^{n/5}$.}!

Our idea to tackle this issue is to use more \emph{restrictive} privacy parameters as we go down the HC tree with the balanced sparsest cuts. 
For the cuts on level $\ell$ (from the root), we run the DP balanced sparsest cut algorithm with $\eps$ scaling with the \emph{size} of the current partition, i.e., when we run the algorithm on $H$, we use $\eps_{H} = \frac{\card{H}}{n}\cdot \eps$ (we use $\card{H}$ for the number of vertices in $H$). 
Since we have $\sum_{H}\frac{\card{H}}{n}=1$ on each level, the privacy loss on this particular level is at most $\eps$. 
Furthermore, since the depth of the HC tree $\cT$ can be at most $O(\log{n})$ due to balanced cuts, the privacy loss is at most $O(\log{n})$, which can be handled by rescaling. 

The final missing piece here is the impact of changed $\eps$ in the \emph{additive error}. 
Since we decrease the parameter $\eps$, the additive error increases. 
Nevertheless, we can show that on each partition, the blow-up is roughly $O(\frac{n\card{H}}{\eps})$. 
Therefore, if we sum up the partitions of a level, the additive error is bounded by $O\left(\frac{n^2}{\eps}\right)$. 
As such, we can again use the fact that the HC tree is balanced to obtain the total additive error of $O(\frac{n^2\log{n}}{\eps})$ -- and the extra cost can again be handled by rescaling of $\eps$.

\paragraph{The privacy analysis.} We first show that the reduction algorithm is $\eps$-DP. The formal statement is as follows.
\begin{lemma}
\label{lem:sp-to-hc-privacy}
The reduction algorithm that outputs the hierarchical clustering $\cT$ satisfies $\eps$-DP.
\end{lemma}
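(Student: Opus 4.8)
The plan is to view $\ALG'$ as a single adaptive composition of its calls to $\ALG$ — exactly one call per internal node of the output tree $\cT$ — and to show that the privacy parameters of these calls sum to at most $\eps$ along \emph{every} execution. First I would record the structural fact that, since $\ALG$ always returns a $\frac{1}{3}$-balanced cut, every internal node $H$ of $\cT$ is split into parts of size at least $\card{H}/3$; hence the sizes shrink by a factor of at least $\frac{2}{3}$ per level, the depth of $\cT$ is $O(\log n)$ (say at most $3\log n$), and on any fixed level the subgraphs that get processed are vertex-disjoint, so their sizes sum to at most $n$. Summing over levels, $\sum_{H} \card{H} \le 3n\log n$, where the sum is over all internal nodes of $\cT$. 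Since the call at $H$ is run with parameter $\eps_H = \frac{\eps\card{H}}{10n\log n}$ (and $\ALG$ instantiated with budget $\eps_H$ is $\eps_H$-DP), this yields the \emph{deterministic} budget bound
\[
\sum_{H \text{ internal in } \cT} \eps_H \;=\; \frac{\eps}{10n\log n}\,\sum_{H}\card{H} \;\le\; \frac{3\eps}{10} \;\le\; \eps ,
\]
which holds no matter which tree $\ALG'$ outputs on the given input.

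Next I would set up the composition argument, which needs a little care because both the \emph{number} of calls to $\ALG$ and the \emph{parameters} $\eps_H$ are data-dependent (they are governed by the cuts made higher in the tree). The key point is that, conditioned on the outputs of all earlier calls, the vertex set of the current subgraph $H$ — hence its edge set $E(H)\subseteq E$ and the parameter $\eps_H$ — is completely determined, \emph{independently of the edge weights}. Thus the current call amounts to ``run $\ALG$ with a fixed parameter $\eps_H$ on the weights restricted to a fixed edge set $E(H)$'', which is $\eps_H$-DP with respect to changes in the \emph{entire} weight vector, since modifying weights outside $E(H)$ cannot change its output. Chaining the calls and bounding, for each fixed transcript of cuts, the likelihood ratio of the transcript distributions on two weight-neighboring inputs, each factor is at most $e^{\eps_H}$, so the overall ratio is at most $\exp(\sum_H \eps_H) \le e^{\eps}$ by the budget bound above; this is precisely the adaptive composition theorem with adaptively chosen parameters (\Cref{prop:basic-comp}). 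Finally, since $\cT$ is a deterministic function of the transcript of cuts, post-processing (\Cref{prop:post-process}) gives that $\ALG'$ is $\eps$-DP.

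The step I expect to be the main obstacle — or at least the one requiring the most care — is reconciling the data-dependence of the query structure with composition: one cannot simply add privacy costs over a fixed list of queries, because the multiset of subgraphs, and hence of parameters $\eps_H$, is itself random. The resolution has two parts that must be used together: (i) the $\frac{1}{3}$-balanced structure forces the budget bound $\sum_H \eps_H \le \eps$ to hold \emph{pointwise} over all executions, not merely in expectation; and (ii) conditioned on the past transcript, every remaining call is differentially private with respect to the whole input, so the per-transcript likelihood-ratio bound underlying adaptive composition applies with the per-transcript budget. A purely cosmetic point is the base of the logarithm implicit in $\eps_H$ and in the depth bound, but the constant $10$ in the denominator of $\eps_H$ absorbs the depth factor under any standard convention.
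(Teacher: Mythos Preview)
Your proposal is correct and follows essentially the same approach as the paper: treat $\ALG'$ as an adaptive composition of the per-node calls to $\ALG$, bound $\sum_H \eps_H$ level by level using that the subgraphs on each level are vertex-disjoint (so $\sum_H \card{H}\le n$) and that the $\frac{1}{3}$-balanced cuts force depth $O(\log n)$, and invoke \Cref{prop:basic-comp}. If anything, you are more explicit than the paper about why the data-dependence of the query structure and of the parameters $\eps_H$ is compatible with adaptive composition; the paper dispatches this point in a single sentence.
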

\begin{proof}
First, note that we adaptively provide a privacy budget in Step 1(b); however, it should be noted that $H$ is computed using a private mechanism, so this choice in the reduction is privacy-preserving as long as the total privacy budget is at most $\epsilon.$
The final output HC tree $\cT$ is a function only depending on the internal balanced sparsest cuts. In other words, $\cT$ is a function of the algorithms $\{\ALG_{j}\}_{j=1}^{k}$, where $\ALG_{j}$ is the algorithm $\ALG$ induced on the internal node $j$, and $k$ is the total number of internal nodes. Let $\eps(k)$ be the privacy parameter of the cut on internal node $k$. By \Cref{prop:basic-comp}, the reduction algorithm satisfies DP with parameter 
\begin{align*}
\sum_{j=1}^{k}\eps(k) \leq \sum_{\ell=1}^{\ell_{\text{max}}} \sum_{\text{$H$ on level $(\ell-1)$}} \eps_{H} = \frac{1}{2\log n} \cdot \ell_{\text{max}} \cdot \eps,
\end{align*}
where $\ell_{\text{max}}$ is the maximum level of the HC tree, and the last equation is due to $\sum_{\text{$H$ on level $(\ell-1)$}} \eps_{H}=\eps$ for every level $\ell$.

Since the tree is always $\frac{1}{3}$-balanced, the number of levels is at most $\log_{3/2}{n}\leq 2\log{n}$ levels. As such, we have that the privacy parameters satisfies $\frac{1}{2\log n} \cdot \ell_{\text{max}} \cdot \eps\leq \eps$, as desired.
\end{proof}

\paragraph{The utility analysis.} 
We now proceed to the utility analysis of the reduction algorithm. To this end, we first show that the balanced sparsest cut guarantee would lead to balanced min-cut guarantees, and the latter is easier to work with.
\begin{claim}
\label{clm:balanced-sp-imply-min}
Let $(S, V\setminus S)$ be a $\frac{1}{3}$-balanced sparsest cut with $\frac{C}{\eps\log^2{n}}$ additive error for some constant $C\in (0,1)$. Then, $S$ is an $\left(2, \frac{C}{2\eps\log^2{n}}\right)$-approximation of the $1/3$-balanced min-cut. In other words, let $(\Smin, V\setminus \Smin)$ be the $1/3$-balanced min-cut of $G$, we have that
\begin{align*}
w(S, V\setminus S)\leq 2\cdot w(\Smin, V\setminus \Smin) + \frac{Cn}{2\eps\log^2{n}}.
\end{align*}
\end{claim}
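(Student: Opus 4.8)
The plan is to unpack the definition of the sparsity (edge expansion) $\psi_G$ and convert the additive guarantee on $\psi_G(S)$ into the stated additive guarantee on the cut weight $w(S, V\setminus S)$, using only that the two cuts involved are balanced and hence both have size $\Theta(n)$.

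First I would adopt the convention (consistent with \Cref{def:sparsest-cut}) that $S$ and $\Smin$ denote the sides of size at most $n/2$, so that $w(S, V\setminus S) = |S|\cdot\psi_G(S)$ exactly. By hypothesis $S$ is a $\tfrac13$-balanced cut with $\psi_G(S)\le \phi_G + \tfrac{C}{\eps\log^2 n}$. Multiplying through by $|S|$,
\[
w(S,V\setminus S)\;=\;|S|\,\psi_G(S)\;\le\; |S|\,\phi_G \;+\; |S|\cdot\frac{C}{\eps\log^2 n}.
\]
Since $|S|\le n/2$, the second term is at most $\tfrac{Cn}{2\eps\log^2 n}$, which already matches the additive part of the target bound.

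It remains to show $|S|\,\phi_G \le 2\,w(\Smin, V\setminus\Smin)$. Since the $\tfrac13$-balanced min-cut $(\Smin, V\setminus\Smin)$ is itself a $\tfrac13$-balanced cut, it is feasible for the minimization defining $\phi_G$, so $\phi_G\le\psi_G(\Smin)=\tfrac{w(\Smin,V\setminus\Smin)}{|\Smin|}$. Hence $|S|\,\phi_G\le \tfrac{|S|}{|\Smin|}\,w(\Smin,V\setminus\Smin)$, and because $S$ and $\Smin$ are both smaller sides of $\tfrac13$-balanced cuts we have $|S|\le n/2$ and $|\Smin|\ge n/3$, so $|S|/|\Smin|\le 3/2\le 2$. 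Combining with the previous display gives $w(S,V\setminus S)\le 2\,w(\Smin,V\setminus\Smin)+\tfrac{Cn}{2\eps\log^2 n}$, as claimed. (If one prefers to read the hypothesis in expectation, i.e.\ $\expect{\psi_G(S)}\le \phi_G+\tfrac{C}{\eps\log^2 n}$, the identical chain applies after writing $w(S,V\setminus S)\le |S|\,\phi_G + \tfrac n2(\psi_G(S)-\phi_G)$, using $\psi_G(S)\ge\phi_G$, and taking expectations.)

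There is no real obstacle here; the argument is a two-line estimate. The only thing to be careful about is bookkeeping on conventions — that $S$ (and $\Smin$) is taken to be the side of size at most $n/2$, so that both the identity $w(S,V\setminus S)=|S|\,\psi_G(S)$ and the bound $|S|\le n/2$ hold, and that ``$\tfrac13$-balanced'' pins down $|\Smin|\ge n/3$ so the ratio $|S|/|\Smin|$ is a constant. The gap between the $3/2$ the estimate produces and the $2$ written in the claim also leaves room for the standard case where a (pseudo-)approximate balanced-separator routine returns a cut whose balance constant is somewhat below $1/3$.
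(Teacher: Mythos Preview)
Your proposal is correct and follows essentially the same approach as the paper: multiply the sparsity guarantee $\psi_G(S)\le\phi_G+\tfrac{C}{\eps\log^2 n}$ by $|S|$, use $|S|\le n/2$ for the additive term, and use that $\Smin$ is feasible for the balanced sparsity minimization together with $|S|/|\Smin|\le 2$ for the multiplicative term. The paper writes the bound $|S|/|\Smin|\le 2$ directly, whereas you observe the slightly sharper $|S|/|\Smin|\le 3/2$; otherwise the arguments are identical.
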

\begin{proof}
Let $\Sspr$ be the $\frac{1}{3}$-balanced sparsest cut of $G$. By the guarantee of $\ALG$, we have that 
\begin{align*}
w(S, V\setminus S)&\leq \frac{w(\Sspr, V\setminus \Sspr)}{\card{\Sspr}}\cdot \card{S} +  \frac{C}{\eps\log^2{n}}\cdot \card{S} \tag{by the property of $\ALG$}\\
&\leq \frac{w(\Smin, V\setminus \Smin)}{\card{\Smin}}\cdot \card{S}+  \frac{C}{\eps\log^2{n}}\cdot \card{S} \tag{by the definition of the sparsest cut}\\
&\leq 2\cdot {w(\Smin, V\setminus \Smin)} +  \frac{Cn}{2\eps\log^2{n}}, \tag{$\frac{\card{S}}{\card{\Smin}}\leq 2$ and $\card{S}\leq \frac{n}{2}$}
\end{align*}
as desired.
\end{proof}
Let the output of $\ALG'$ on the input graph $G$ be $\ALG'(G)=\cT$.
Using the expression of the cost function as stated in \cref{eq:Dasgupta_HC_costfunction}, the expected cost of $\cT$ is 
\begin{align*}
 \expect{\cost_{G}(\cT)} &=\expect{\sum_{\substack{\\ \text{$H\rightarrow (S_1,S_2)$} \\ \text{induced by nodes in $\cT$}}} w(S_1, S_2)\cdot \card{H}}\\
& = \sum_{\substack{\\ \text{$H\rightarrow (S_1,S_2)$} \\ \text{induced by nodes in $\cT$}}} \expect{w(S_1, S_2)}\cdot \card{H} \tag{by linearity of expectation}\\
&\leq \sum_{\substack{\\ \text{$H\rightarrow (S_1,S_2)$} \\ \text{induced by nodes in $\cT$}}} 2\cdot w(\Smin_H, V(H)\setminus \Smin_H)\cdot \card{H} \tag{$\Smin_H$ is the balanced min-cut for vertex-induced graph $H$} \\
& \qquad + \sum_{\ell=1}^{\ell_{\text{max}}} \sum_{\substack{\\ \text{$H\rightarrow (S_1,S_2)$} \\ \text{induced by nodes in $\cT$} \\ \text{on level $\ell$}}} C\cdot \frac{1}{2\eps_{H}}\cdot \frac{\card{H}}{\log^2{n}}\cdot \card{H}. \tag{by the property of the algorithm $\ALG$ and \Cref{clm:balanced-sp-imply-min}}
\end{align*}
Note that in the last term, we have $\frac{\card{H}}{\log^2{n}}$ as opposed to $\frac{n}{\log^2{n}}$ since we are operating on the vertex-induced subgraph $H$.

We now analyze the two cost terms respectively. We first bound the summation of the former term.
\begin{restatable}{lemma}{hccharginglem}
\label{lem:hc-cost-mult-term}
Let $\cT$ be the tree obtained by recursive $1/3$-balanced cut whose cuts are represented by $H\rightarrow (S_1, S_2)$. Furthermore, let $(\Smin_H, V(H)\setminus \Smin_H)$ be the balanced min-cuts of $V(H)$. Then, we have that
\begin{align*}
\sum_{\substack{\\ \text{$H\rightarrow (S_1,S_2)$} \\ \text{induced by nodes in $\cT$}}} 2\cdot w(\Smin_H, V(H)\setminus \Smin_H)\cdot \card{H} \leq O(1)\cdot \OPT_{G}.
\end{align*}
\end{restatable}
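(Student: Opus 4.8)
The plan is to carry out a white-box adaptation of the charging argument of Charikar--Chatziafratis~\cite{charikar2017approximate}, where the essential observation is that the clusters of $\cT$ that ever get charged against a fixed edge have geometrically shrinking sizes, which is exactly what prevents an extra $\log n$ factor. Fix an optimal Dasgupta tree $\cT^*$ for $G$, so that $\OPT_{G} = \cost_{G}(\cT^*) = \sum_{(u,v)\in E} w(u,v)\cdot t_{uv}$, where $t_{uv} := \card{\cT^*[u\vee v]}$. The only feature of $\cT$ I will use is that it is $\tfrac13$-balanced: every internal split $H\to(S_1,S_2)$ satisfies $\min(\card{S_1},\card{S_2})\ge \card{H}/3$, and hence also $\card{S_i}\le \tfrac23\card{H}$.

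The first step is, for each internal node $H$ of $\cT$, to upper bound $w(\Smin_H, V(H)\setminus \Smin_H)$ by the weight of a $\tfrac13$-balanced cut of $G[V(H)]$ that is induced by $\cT^*$. Restrict $\cT^*$ to the leaf set $V(H)$ and walk down from its root, always descending into the child containing more leaves of $V(H)$, stopping at the first node whose restricted leaf set has at most $\tfrac23\card{H}$ leaves; a standard argument shows the stopping node corresponds to a node $\hat v_H$ of $\cT^*$ whose leaf set $L_{\hat v_H}$ satisfies $\card{H}/3 \le \card{L_{\hat v_H}\cap V(H)}\le 2\card{H}/3$. Writing $L_H := L_{\hat v_H}\cap V(H)$, the pair $(L_H, V(H)\setminus L_H)$ is a valid $\tfrac13$-balanced cut of $G[V(H)]$, and since $\Smin_H$ is a minimum-weight such cut, $w(\Smin_H, V(H)\setminus \Smin_H)\le w(L_H, V(H)\setminus L_H)$.

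The second step is to sum over $H$ and re-index the charge by edge: $\sum_{H}\card{H}\,w(\Smin_H, V(H)\setminus \Smin_H) \le \sum_{H}\sum_{(u,v):\,u\in L_H,\,v\notin L_H}\card{H}\,w(u,v) = \sum_{(u,v)\in E} w(u,v)\cdot\!\!\!\sum_{H:\,(u,v)\text{ crosses }L_H}\!\!\!\card{H}$. Fix an edge $(u,v)$. The internal nodes $H$ of $\cT$ containing both $u$ and $v$ form a chain $H_0\supsetneq H_1\supsetneq\cdots\supsetneq H_d$ from the root down to $\mathrm{LCA}_{\cT}(u,v)$, and by $\tfrac13$-balance $\card{H_{i+1}}\le\tfrac23\card{H_i}$, so these sizes decay geometrically. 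Moreover, if $(u,v)$ crosses $L_H$ then exactly one of $u,v$ lies in $L_{\hat v_H}$, which forces $\hat v_H$ to be a strict $\cT^*$-descendant of $\mathrm{LCA}_{\cT^*}(u,v)$; hence $L_{\hat v_H}$ sits inside one of the two subtrees below $\mathrm{LCA}_{\cT^*}(u,v)$, so $\card{L_H}\le\card{L_{\hat v_H}}< t_{uv}$, and combined with $\card{L_H}\ge \card{H}/3$ this yields $\card{H}< 3\,t_{uv}$. Consequently every $H$ that charges $(u,v)$ is one of the $H_i$ with $\card{H_i}< 3t_{uv}$, whose sizes start below $3t_{uv}$ and shrink by at least a factor $\tfrac23$ along the chain, so $\sum_{H:\,(u,v)\text{ crosses }L_H}\card{H}\le \sum_{i:\,\card{H_i}<3t_{uv}}\card{H_i} < 3t_{uv}\sum_{j\ge 0}(\tfrac23)^j = 9\,t_{uv}$.

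Combining, $\sum_{H}\card{H}\,w(\Smin_H, V(H)\setminus \Smin_H) < 9\sum_{(u,v)\in E} w(u,v)\,t_{uv} = 9\,\OPT_{G}$, so the quantity in the lemma (with its extra factor $2$) is at most $18\,\OPT_{G} = O(1)\cdot\OPT_{G}$, as claimed. I expect the crux to be the geometric-decay bound in the third paragraph: it is exactly the cap $\card{H}<3t_{uv}$ together with the $\tfrac13$-balancedness of $\cT$ that makes an edge's total charge telescope to $O(t_{uv})$, whereas a naive level-by-level accounting would charge each of $O(\log n)$ levels against all of $\OPT_{G}$; the heavy-path separator claim and the ``crossing implies strict descendant'' claim are routine but should be stated with care.
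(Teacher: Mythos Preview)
Your proof is correct and reaches the same constant $18$ as the paper, but it takes a somewhat different route from the paper's own argument. The paper works through the ``edge cost footprint'' machinery of \cite{charikar2017approximate,AssadiCLMW22}: for each internal node $H$ it builds the comparison cut $(A,B)$ by bundling \emph{all} maximal $\cT^*$-clusters of size at most $\tfrac{2}{3}\card{H}$ that meet $H$ into two balanced groups, then bounds $\card{H}\cdot w(\Smin_H,V(H)\setminus\Smin_H)$ by a sum of footprints $\edgeprint{\cT^*}{2\card{H}/3}{(u,v)}$ over $t\in(\card{S_2},\card{H}]$, and finally invokes the disjointness-of-$(u,v,t)$ claim and the $2t/3$-vs-$t$ claim to collapse the double sum to $18\,\OPT_G$. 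You instead pick a \emph{single} heavy-path separator $\hat v_H$ in $\cT^*$ (restricted to $V(H)$) as the comparison cut, then swap the order of summation and bound the total charge to a fixed edge $(u,v)$ directly: the cap $\card{H}<3t_{uv}$ combined with the $\tfrac13$-balance of $\cT$ gives a geometric series summing to $9t_{uv}$. Your version is more self-contained---it avoids introducing the footprint notation and the two auxiliary claims---while the paper's version is more modular, leaning on lemmas already stated in \cite{charikar2017approximate,AssadiCLMW22}. Both are at heart the same Charikar--Chatziafratis charging, and your ``geometric decay along the root-to-LCA chain'' is exactly the concrete content hiding behind the paper's disjointness-of-terms claim.
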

The proof of \Cref{lem:hc-cost-mult-term} requires a white-box adaptation of the charging argument used by \cite{charikar2017approximate} (see also, e.g.~\cite{agarwal2022sublinear,AssadiCLMW22}). The idea is almost identical to Charikar and Chatziafratis~\cite{charikar2017approximate}, but the argument is considerably involved. As such, we postpone the proof to \Cref{app:balanced-charging-proof}. 

We now analyze the summation $$\sum_{\ell=1}^{\ell_{\text{max}}}\sum_{\substack{\\ \text{$H\rightarrow (S_1,S_2)$} \\ \text{induced by nodes in $\cT$}}} C\cdot\frac{1}{2\eps_{H}}\cdot \frac{\card{H}}{\log^2{n}}\cdot \card{H}$$ for the additive error. The main lemma for this term is as follows.
\begin{lemma}
\label{lem:hc-cost-add-term}
Let $\cT$ be the tree obtained by recursive $1/3$-balanced cut whose cuts are represented by $H\rightarrow (S_1, S_2)$. Then, we have that
\begin{align*}
\sum_{\ell=1}^{\ell_{\text{max}}} \sum_{\substack{\\ \text{$H\rightarrow (S_1,S_2)$} \\ \text{induced by nodes in $\cT$} \\ \text{on level $\ell$}}} C\cdot \frac{1}{2\eps_{H}}\cdot \frac{\card{H}}{\log^2{n}}\cdot \card{H} \leq C\cdot \frac{n^2}{\eps}.
\end{align*}
\end{lemma}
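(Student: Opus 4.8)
The plan is a direct counting argument that exploits the two cancellations built into the choice $\eps_H=\frac{\eps\,\card{H}}{10\,n\log n}$ made by the reduction algorithm. First I would substitute this value into the generic summand. Since $\frac{1}{2\eps_H}=\frac{5\,n\log n}{\eps\,\card{H}}$, a single term becomes
\[
C\cdot\frac{1}{2\eps_H}\cdot\frac{\card{H}}{\log^2 n}\cdot\card{H}
= C\cdot\frac{5\,n\log n}{\eps\,\card{H}}\cdot\frac{\card{H}^2}{\log^2 n}
= \frac{5C\,n\,\card{H}}{\eps\log n},
\]
so one factor of $\card{H}$ cancels with the $\card{H}$ in the denominator of $\eps_H$, and one factor of $\log n$ cancels against one of the two factors of $\log n$ in $\log^2 n$. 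What is left depends on $H$ only linearly, through $\card{H}$, and this linearity is exactly what lets the double sum collapse.

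Second, I would fix a level $\ell$ and sum over the subgraphs $H$ appearing on that level. The key structural observation is that these $H$ are the vertex-induced subgraphs produced by refining the partition of $V$ exactly $\ell-1$ times, so they are pairwise vertex-disjoint subsets of $V$ (any singletons that were dropped during the recursion only make the total smaller); hence $\sum_{H\text{ on level }\ell}\card{H}\le n$. Therefore the level-$\ell$ contribution is at most $\frac{5C\,n}{\eps\log n}\cdot n=\frac{5C\,n^2}{\eps\log n}$, independent of $\ell$.

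Third, I would sum over levels. Because $\ALG$ always returns a $\tfrac13$-balanced cut, every partition $H\to(S_1,S_2)$ satisfies $\max\{\card{S_1},\card{S_2}\}\le\tfrac23\card{H}$, so the part sizes shrink by a factor $\tfrac23$ per level and the HC tree has depth $\ell_{\max}\le\log_{3/2}n\le 2\log n$. Multiplying the per-level bound by $\ell_{\max}$ gives $2\log n\cdot\frac{5C\,n^2}{\eps\log n}=O\!\big(\tfrac{C\,n^2}{\eps}\big)$, and the absolute constant in the definition of $\eps_H$ (currently $10$) can be taken large enough that this is at most $C\,n^2/\eps$, as claimed. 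I do not expect a genuine obstacle here: the computation is already arranged by the choice of $\eps_H$, and the only point requiring a moment of care is justifying $\sum_{H\text{ on level }\ell}\card{H}\le n$, i.e.\ that the level-$\ell$ nodes of $\cT$ correspond to a (sub)partition of $V$ and not to overlapping vertex sets; the rest is arithmetic plus the depth bound, both already used elsewhere in the paper.
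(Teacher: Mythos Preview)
Your proposal is correct and follows the paper's proof essentially step for step: substitute the value of $\eps_H$, observe that the summand collapses to a term linear in $\card{H}$, use $\sum_{H\text{ on level }\ell}\card{H}\le n$ to bound each level, and multiply by the depth bound $\ell_{\max}\le 2\log n$. One minor slip: to shrink the final constant you would need to make $\eps_H$ \emph{larger} (i.e.\ decrease the $10$), not the reverse; the paper's own proof is loose here too, actually arriving at $2C\,n^2/\eps$ rather than the stated $C\,n^2/\eps$.
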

\begin{proof}
Note that at level $\ell$, the cost is at most
\begin{align*}
\sum_{\text{$H$ on level $(\ell-1)$}}  \frac{C}{2\eps_{H}}\cdot \frac{\card{H}^2}{\log^2{n}} \qquad s.t. \sum_{\text{$H$ on level $(\ell-1)$}} \card{H} =n.
\end{align*}
By plugging in our choice of $\eps_{H}=\frac{1}{2 \log{n}}\cdot \eps\cdot \frac{\card{H}}{n}$, we have that
\begin{align*}
\sum_{\text{$H$ on level $(\ell-1)$}}  \frac{C}{2\eps_{H}}\cdot \frac{\card{H}^2}{\log^2{n}} & = \frac{C}{2\eps}\cdot \sum_{\text{$H$ on level $(\ell-1)$}} \paren{\frac{2n\log n}{\card{H}}\cdot \frac{\card{H}^2}{\log^2{n}}} \\
& =  \frac{Cn}{\eps\log n} \cdot \sum_{\text{$H$ on level $(\ell-1)$}} \card{H}\\
& = \frac{Cn^2}{\eps\log n}
\tag{since $\sum_{\text{$H$ on level $(\ell-1)$}} \card{H} =n$}.
\end{align*}
Finally, we get the tree $\cT$ using $1/3$-balanced cuts. As such, there are at most $\log_{3/2}{n}\leq 2\log{n}$ levels. Therefore, the total cost induced by the summation in \Cref{lem:hc-cost-add-term} is at most
\begin{align*}
\sum_{\ell=1}^{\ell_{max}} \sum_{\substack{\\ \textnormal{$H\rightarrow (S_1,S_2)$} \\ \textnormal{induced by nodes in $\cT$} \\ \textnormal{on level $\ell$}}}
\frac{C}{2\eps_{H}}\cdot \frac{\card{H}}{\log^2{n}}\cdot \card{H}\leq 2\log{n}\cdot \frac{Cn^2}{\eps \log n} = 2C\cdot \frac{n^2}{\eps},
\end{align*}
as desired.
\end{proof}

\paragraph{Wrapping up the proof of \Cref{lem:sparse-cut-hc-reduction}.} By \Cref{lem:sp-to-hc-privacy}, the reduction algorithm is $\eps$-DP. Furthermore, by \Cref{lem:hc-cost-mult-term} and \Cref{lem:hc-cost-add-term}, we know that 
\begin{align*}
\expect{\cost_{G}(\cT)} \leq O(1)\cdot \OPT_{G} + 2C\cdot \frac{n^2}{\eps},
\end{align*}
as desired by the lemma statement.

\begin{remark}
\label{rmk:lb-balance-min-cut}
In the proof of \Cref{lem:sparse-cut-hc-reduction}, we essentially embedded a reduction from balanced sparsest cut to balanced min-cut. Using the same argument, we could argue that for balanced min-cut, the additive error for any $\eps$-DP algorithm is at least $\Omega(\frac{n}{\eps \log^{2}{n}})$. The result essentially follows from \Cref{lem:hc-cost-mult-term,lem:hc-cost-add-term}, and we omit the repetition to write the proof.
\end{remark}

\section{The proof of \texorpdfstring{\Cref{lem:hc-cost-mult-term}}{pointer}}
\label{app:balanced-charging-proof}
We provide the proof of \Cref{lem:hc-cost-mult-term} in this section. To begin with, we first recap the lemma statement as follows.

\hccharginglem*

To prove the lemma, we need to use a white-box analysis of the charging argument for sparsest cuts and balanced min-cuts as in Charikar and Chatziafratis~\cite{charikar2017approximate} (see also, e.g. Agarwal et al.~\cite{agarwal2022sublinear} and Assadi et al. \cite{AssadiCLMW22}). In particular, consider any HC tree $\cT$ and any edge $(u,v)\in E$, they use the following notion of `cost footprint' to lower bound the optimal cost.
\begin{definition}[Edge cost footprint]
\label{def:edge-cost-footprint}
Let $G=(V,E,w)$ be a weighted undirected graph, and let $\cT$ be an HC tree of $G$. For any edge $(u,v)\in E$ and integer $t\geq 1$, we say the \emph{cost footprint} of edge $(u,v)\in E$ in $\cT$ of size $t$, denoted as $\edgeprint{\cT}{t}{(u,v)}$, is as follows:
\begin{align*}
\edgeprint{\cT}{t}{(u,v)}= 
\begin{cases}
w(u,v), \, \text{if $(u,v)$ crosses any pair of \emph{maximal} clusters of size at most $t$ in $\cT$;}\\
0, \, \text{otherwise}.
\end{cases}
\end{align*}
\end{definition}

Intuitively, the notion of $\edgeprint{\cT}{t}{(u,v)}$ captures the ``levels'' where edge $(u,v)$ would pay non-zero costs. To elaborate, consider an edge $(u,v)\in E$. Before the edge is split, it pays no cost. After the edge is split, and suppose the size of the cluster on which $(u,v)$ is split is $s$, then $(u,v)$ pays exactly a cost of $s\cdot w(u,v)$ to the cost. In \Cref{def:edge-cost-footprint}, note that for any $t\geq s$, we will have $\edgeprint{\cT}{t}{(u,v)}=0$. On the other hand, for any $t\leq s-1$, since the two clusters that are induced by the edge split of $(u,v)$ are of size at most $(s-1)$, we have $\edgeprint{\cT}{t}{(u,v)}=w(u,v)$ for $t\in[0,s-1]$. Therefore, the summation $\sum_{t=0}^{n}\edgeprint{\cT}{t}{(u,v)}$ exactly characterizes the cost contribution for $(u,v)$ to the HC objective. More formally, this structural property is characterized as follows.
\begin{lemma}[Charikar and Chatziafratis~\cite{charikar2017approximate}, cf.~\cite{AssadiCLMW22}]
\label{lem:opt-hc-cost-structure}
For any graph $G=(V,E,w)$ and any of its HC tree $\cT$, there is
\begin{align*}
\cost_{G}(\cT) = \sum_{t=0}^{n} \sum_{(u,v)\in E}\edgeprint{\cT}{t}{(u,v)}.
\end{align*}
\end{lemma}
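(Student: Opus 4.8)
The plan is to prove the identity one edge at a time and then interchange the two finite sums. First I would fix an edge $(u,v)\in E$ and set $s := |\cT[u\vee v]|$, the number of leaves under the lowest common ancestor of $u$ and $v$; since $u\neq v$ we have $2\le s\le n$. Identifying each node of $\cT$ with the cluster (leaf set) it spans, recall that any two clusters of $\cT$ are either nested or disjoint (they form a laminar family), and that $\cT[u\vee v]$ is the unique inclusion-minimal cluster containing both $u$ and $v$. The claim I would establish is that for every integer $t$ with $0\le t\le n$,
\[
\edgeprint{\cT}{t}{(u,v)} = w(u,v)\ \text{for}\ t\le s-1,\qquad \edgeprint{\cT}{t}{(u,v)} = 0\ \text{for}\ t\ge s.
\]

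For the claim I would first record that, for $t\ge 1$, the maximal clusters of $\cT$ of size at most $t$ partition $V$: every singleton leaf is a cluster of size $1\le t$, and by nestedness each leaf lies in a unique maximal such cluster. Hence ``$(u,v)$ crosses a pair of maximal clusters of size at most $t$'' means that $u$ and $v$ lie in distinct blocks of this partition, which is equivalent to saying that \emph{no} cluster of size at most $t$ contains both $u$ and $v$ (if $C$ were such a cluster, then by nestedness $C$ would sit inside $u$'s block, forcing $v$ into the same block). Now if $t\le s-1$: the minimal cluster $\cT[u\vee v]$ containing both endpoints has size $s>t$, so no cluster of size $\le t$ contains both and the footprint equals $w(u,v)$; the boundary value $t=0$ also gives $w(u,v)$, since there is no cluster of size $\le 0$ at all and $u\neq v$. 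If instead $t\ge s$: the cluster $\cT[u\vee v]$ is itself a cluster of size $s\le t$ containing both $u$ and $v$, so they lie in the same block and the footprint is $0$.

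Summing the claim over $t$, exactly the $s$ integers $t\in\{0,1,\dots,s-1\}$ — all lying in $\{0,\dots,n\}$ — contribute $w(u,v)$, so $\sum_{t=0}^{n}\edgeprint{\cT}{t}{(u,v)} = s\cdot w(u,v) = |\cT[u\vee v]|\cdot w(u,v)$. Finally I would interchange the two finite sums and apply \cref{eq:Dasgupta_HC_costfunction}:
\[
\sum_{t=0}^{n}\sum_{(u,v)\in E}\edgeprint{\cT}{t}{(u,v)} = \sum_{(u,v)\in E}\sum_{t=0}^{n}\edgeprint{\cT}{t}{(u,v)} = \sum_{(u,v)\in E}w(u,v)\cdot|\cT[u\vee v]| = \cost_{G}(\cT),
\]
which is the desired identity. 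The argument is essentially bookkeeping, and I do not anticipate a substantive obstacle; the only points needing care are the range endpoints — at $t=0$ there is no cluster of size $\le t$, and at $t=n$ the whole vertex set $V$ is the unique maximal cluster of size $\le t$ — both of which are handled uniformly by the reformulation of ``crossing'' as ``no common cluster of size $\le t$''.
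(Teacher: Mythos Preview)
Your proposal is correct and mirrors exactly the informal argument the paper gives immediately before stating the lemma: fix an edge, let $s=|\cT[u\vee v]|$, argue that the footprint equals $w(u,v)$ for $t\in\{0,\dots,s-1\}$ and $0$ for $t\ge s$, then sum over $t$ and over edges. Your treatment is in fact more careful than the paper's sketch (you make explicit the laminar-family/partition reformulation of ``crossing''), but the approach is the same.
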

At first glance, working with the notion of edge cost footprint appears to make things more complicated. The advantage of using such a notion is that it allows us to `charge' the cost of the tree into `balanced' partitions. In particular, the following technical statements shown by Charikar and Chatziafratis~\cite{charikar2017approximate} and Assadi et al. \cite{AssadiCLMW22}. 
\begin{claim}[Charikar and Chatziafratis~\cite{charikar2017approximate}, cf.~\cite{AssadiCLMW22}]
\label{clm:edge-cost-footprint-charge}
Consider the notion of edge cost footprint as defined in \Cref{def:edge-cost-footprint}, we have
\begin{align*}
\sum_{t=0}^{n} \sum_{(u,v)\in E}\edgeprint{\cT}{2t/3}{(u,v)}\leq 3\cdot \sum_{t=0}^{n} \sum_{(u,v)\in E}\edgeprint{\cT}{t}{(u,v)}.
\end{align*}
\end{claim}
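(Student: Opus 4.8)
The plan is to prove the stronger, per-node version of the inequality by reorganizing both sums over the internal nodes of $\cT$ rather than over the threshold parameter $t$. For an internal node $A$ of $\cT$ with children clusters $A_1,A_2$, write $W(A):=w(A_1,A_2)$ for the total edge weight cut when $A$ is split. The starting point is the threshold characterization implicit in \Cref{def:edge-cost-footprint}: for any real $\tau\ge 0$, an edge $(u,v)$ crosses some pair of maximal clusters of size at most $\tau$ exactly when the cluster $\cT[{u\vee v}]$ on which $u$ and $v$ are first separated has size strictly greater than $\tau$. Hence $\edgeprint{\cT}{\tau}{(u,v)}=w(u,v)\cdot\mathbf{1}\big[\card{\cT[{u\vee v}]}>\tau\big]$, and summing over edges while grouping by the node $A=\cT[{u\vee v}]$ gives
\begin{align*}
\sum_{(u,v)\in E}\edgeprint{\cT}{\tau}{(u,v)}=\sum_{A:\,\card{A}>\tau}W(A),
\end{align*}
where the sum ranges over internal nodes of $\cT$ (each edge is charged to exactly one such node).

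Next I would swap the order of summation on both sides of the claim using this identity. Taking $\tau=t$ and summing over $t\in\{0,1,\dots,n\}$ rewrites the right-hand side as $\sum_A W(A)\cdot\card{\{t:0\le t\le n,\ t<\card{A}\}}=\sum_A W(A)\cdot\card{A}$ (using $\card{A}\le n$), which is exactly $\cost_G(\cT)$ as in \Cref{lem:opt-hc-cost-structure}. Taking $\tau=2t/3$ (equivalently $\lfloor 2t/3\rfloor$) rewrites the left-hand side as $\sum_A W(A)\cdot\card{\{t:0\le t\le n,\ 2t/3<\card{A}\}}$, and the inner index set has size at most $\tfrac{3}{2}\card{A}+1\le 3\card{A}$ for every $\card{A}\ge 1$. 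Combining the two displays, the left-hand side is bounded by $3\sum_A W(A)\cdot\card{A}$, i.e. three times the right-hand side, which is the claim; in fact this route even yields the sharper constant $2$.

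This argument is a white-box version of the charging step of Charikar and Chatziafratis~\cite{charikar2017approximate}, specialized to exactly what \Cref{lem:hc-cost-mult-term} requires. I do not expect a genuine obstacle here: the only two points needing care are (i) verifying the node-indexed reorganization — that each edge $(u,v)$ is charged to the unique node $A=\cT[{u\vee v}]$ and that the ``maximal clusters of size at most $\tau$'' separate $u$ and $v$ precisely when $\card{A}>\tau$ — which follows directly from \Cref{def:edge-cost-footprint}, and (ii) the elementary counting with the non-integer threshold $2t/3$ and the truncation of $t$ at $n$, which is routine. The mild subtlety of boundary/off-by-one terms in the count is absorbed by the slack between the constants $2$ and $3$.
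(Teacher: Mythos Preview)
The paper does not give its own proof of this claim; it is simply cited from \cite{charikar2017approximate,AssadiCLMW22} and used as a black box in the proof of \Cref{lem:hc-cost-mult-term}. Your argument is correct and is exactly the standard way to establish this inequality: reindex both sides by the internal node $A=\cT[u\vee v]$ at which each edge is separated, so that the right-hand side becomes $\sum_A W(A)\,\card{A}$ and the left-hand side becomes $\sum_A W(A)\cdot\card{\{0\le t\le n:\,2t/3<\card{A}\}}$, and then bound the latter count by $\tfrac{3}{2}\card{A}+1\le 3\card{A}$. Your handling of the two subtleties you flag---the equivalence of the real threshold $2t/3$ and its floor when compared against the integer $\card{A}$, and the truncation at $t=n$---is accurate, and your observation that the constant $2$ already suffices (since every internal node has $\card{A}\ge 2$) is also correct.
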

The reason for us to work with the value $2t/3$ is that, as we will see later, we can charge the cost of $1/3$-balanced min-cuts to the summation, which gives us a way to control the cost obtained by the HC tree obtained by recursive \emph{approximate} $1/3$-balanced min cut.

Another technical advantage of working with the edge cost footprint is that we can ``decompose'' the cost into unit terms, and the terms of summation does \emph{not} have to be ``synchronized'' with the terms we are summing over. The technical claim is as follows.
\begin{claim}[Charikar and Chatziafratis~\cite{charikar2017approximate}, cf.~\cite{AssadiCLMW22}]
\label{clm:edge-cost-terms}
Let $\cT$ be \emph{any} HC tree of $G$. Furthermore, consider any function $F(u,v,t)$ in the summation
\begin{align*}
\sum_{\substack{\\ \text{$H\rightarrow (S_1,S_2)$} \\ \text{induced by nodes in $\cT$} \\ \text{on level $\ell$}}} \sum_{t=\card{S_{2}}+1}^{\card{H}} \sum_{(u,v)\in E(H)} F(u,v,t).
\end{align*}
Then, each $u'$, $v'$ and $t'$ (and the corresponding term $F(u',v',t')$) appear at most once in the summation.
\end{claim}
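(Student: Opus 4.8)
\textbf{Proof proposal for \Cref{clm:edge-cost-terms}.}

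The plan is to argue that the triple-indexed sum genuinely enumerates distinct terms by showing that the ranges of the inner summations are disjoint across different nodes of $\cT$, so that no pair $(u',v')$ together with a level-index $t'$ can be produced by two different cuts. First I would fix attention on a single edge $(u',v') \in E$ and ask: for which internal nodes $H$ of $\cT$ does $(u',v')$ appear in the innermost sum $\sum_{(u,v)\in E(H)}$? By definition of the recursive construction, $(u',v')$ belongs to $E(H)$ precisely when both $u'$ and $v'$ lie in the vertex set $V(H)$; and along any root-to-leaf path the vertex sets are nested and shrinking. Hence the set of nodes $H$ with $u',v' \in V(H)$ forms a single descending chain $H_0 \supsetneq H_1 \supsetneq \cdots \supsetneq H_k$, terminating at $H_k = \cT[u' \vee v']$, the lowest common ancestor cluster, which is exactly the last node at which the edge is still ``internal'' before it gets split.

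Next I would pin down, for each $H_i$ in this chain, the range of $t$ over which $(u',v')$ contributes, namely $t \in \{\,|S_2^{(i)}|+1, \dots, |H_i|\,\}$ where $H_i \to (S_1^{(i)}, S_2^{(i)})$ is the cut at $H_i$ (with the convention $|S_2^{(i)}| \le |S_1^{(i)}|$, so $|S_2^{(i)}| = |H_i| - |S_1^{(i)}|$). The key observation is that $S_1^{(i)}$ — the larger side of the partition at $H_i$ — is exactly $H_{i+1}$ for $i < k$ (it is the child cluster that still contains both $u'$ and $v'$, and being the larger side it is the one whose size governs where the next range starts). Wait — more carefully: the child of $H_i$ that contains $\{u',v'\}$ is $H_{i+1}$, and whether it is $S_1^{(i)}$ or $S_2^{(i)}$ depends on the instance; but in either case $|H_{i+1}|$ is one of $|S_1^{(i)}|, |S_2^{(i)}|$, and one checks that the range at $H_i$ is $(|H_i| - |\text{the other child}|, |H_i|\,] = (|H_{i+1}|, |H_i|\,]$ when $H_{i+1}$ is the side \emph{not} equal to $S_2$, and one must handle the bookkeeping so that consecutive ranges $(|H_{i+1}|, |H_i|]$ tile the integer interval without overlap. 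I would verify that the ranges $\{\,|H_{i+1}|+1,\dots,|H_i|\,\}$ for $i=0,\dots,k-1$ together with the final range $\{\,|S_2^{(k)}|+1,\dots,|H_k|\,\}$ at the LCA node partition the set $\{\,|S_2^{(k)}|+1, \dots, |H_0|\,\}$ into consecutive disjoint blocks, precisely because $|H_0| > |H_1| > \cdots > |H_k|$ strictly. Disjointness of these integer blocks is the whole point: a fixed $t'$ lands in at most one block, hence $(u',v',t')$ is produced by at most one node $H$.

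The main obstacle I anticipate is the careful case analysis of which side of each partition the edge's LCA-descendant cluster lands on, and making sure the off-by-one conventions ($|S_2|+1$ versus $|S_2|$, larger-versus-smaller side) are consistent so that the blocks genuinely tile an interval rather than overlapping or leaving gaps — gaps are harmless for the ``at most once'' conclusion, but overlaps would break it, so I must rule overlaps out. Once the chain-and-tiling picture is set up, the conclusion is immediate: summing over all edges $(u',v')$, each contributes its terms over a disjoint family of $(H,t)$ pairs, so every term $F(u',v',t')$ appears at most once in the whole triple sum. (Strictly speaking the claim as stated also restricts to a single level $\ell$ via the outer sum; that only makes disjointness easier, since fixing $\ell$ selects at most one $H_i$ from the chain.) This is essentially the bookkeeping lemma behind \cite{charikar2017approximate}, adapted to the balanced-cut recursion, and I would present it as such.
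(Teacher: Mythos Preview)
The paper does not supply its own proof of this claim; it is quoted from \cite{charikar2017approximate,AssadiCLMW22}. Your chain-of-clusters strategy is exactly the standard argument: fix an edge $(u',v')$, note that the internal nodes $H$ with $u',v'\in V(H)$ form a nested chain $H_0\supsetneq H_1\supsetneq\cdots\supsetneq H_k$ ending at the LCA, and show the $t$-ranges $\{\card{S_2^{(i)}}+1,\dots,\card{H_i}\}$ are pairwise disjoint along the chain.

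The gap is in your convention and the resulting bookkeeping. You take $\card{S_2}\le\card{S_1}$, but under that convention the claim is \emph{false} for general trees. Concretely, on four vertices let $\{a,b,c,d\}\to(\{a,b,c\},\{d\})$, then $\{a,b,c\}\to(\{a,b\},\{c\})$, then $\{a,b\}\to(\{a\},\{b\})$; with $S_2$ the smaller side the $t$-ranges for edge $(a,b)$ are $\{2,3,4\}$, $\{2,3\}$, $\{2\}$, which overlap. Your attempted repair, that the range at $H_i$ equals $(\card{H_{i+1}},\card{H_i}]$, is not what the summation gives: the range is $(\card{S_2^{(i)}},\card{H_i}]$, and this coincides with $(\card{H_{i+1}},\card{H_i}]$ only when $H_{i+1}=S_2^{(i)}$, which you cannot force.

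The correct reading (implicit in how the paper uses the claim via \Cref{lem:balanced-min-to-opt-edge}) is $\card{S_2}\ge\card{S_1}$, so the range has $\card{S_1}$ terms. With that convention the proof is one line: since $H_{i+1}$ is a child of $H_i$, $\card{H_{i+1}}\le\max(\card{S_1^{(i)}},\card{S_2^{(i)}})=\card{S_2^{(i)}}$, so the range at $H_{i+1}$ ends no later than $\card{S_2^{(i)}}$ while the range at $H_i$ starts at $\card{S_2^{(i)}}+1$. Gaps may appear (when $H_{i+1}$ is the strictly smaller child), which as you note is harmless for ``at most once''. Flip the convention and your write-up goes through.
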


Note that a challenge for us to bound the term  
\begin{align*}
\sum_{\substack{\\ \text{$H\rightarrow (S_1,S_2)$} \\ \text{induced by nodes in $\cT$}}} 2\cdot w(\Smin, V(H)\setminus \Smin)\cdot \card{H}
\end{align*}
in \Cref{lem:hc-cost-mult-term} is that the weights $w(\Smin, V(H)\setminus \Smin)$ and the term we are summing over ($H\rightarrow (S_1,S_2)$) are very different from each other. With \Cref{clm:edge-cost-footprint-charge} and \Cref{clm:edge-cost-terms}, we can proceed by $i).$ first bound each term of $w(\Smin, V(H)\setminus \Smin)\cdot \card{H}$ by a function of $\edgeprint{\cT}{t}{(u,v)}$ for some $t$ and $(u,v)$ in the optimal HC tree $\cT^*$; and $ii).$ handle the summation with \Cref{clm:edge-cost-terms} to upper-bound the term. 

We now show the main lemma that allows us to `charge' the cost of balanced cuts to the edge cost footprints of the optimal tree. Although this lemma is inherently implied in~\cite{charikar2017approximate,AssadiCLMW22}, we include the proof here for completeness.
\begin{lemma}
\label{lem:balanced-min-to-opt-edge}
Let $G=(V,E,w)$ be a weighted undirected graph, and let $\cT^*$ be the \emph{optimal} HC tree of $G$. Furthermore, let $H\subseteq G$ be any vertex-induced subgraph of $G$, and
\begin{itemize}
\item let $H \rightarrow (\Smin,  V(H)\setminus \Smin)$ be the split on $H$ induced by a $\frac{1}{3}$-balanced min-cut.  
\item let $H \rightarrow (S_{1},  S_{2})$ be the split on $H$ induced by any $\frac{1}{3}$-balanced cut.  
\end{itemize}
Then, we have that 
\begin{align*}
\card{H}\cdot w(\Smin,  V(H)\setminus \Smin) &\leq 3\cdot \card{S_1}\cdot \sum_{(u,v)\in E(H)} \edgeprint{\cT^*}{2\card{H}/3}{(u,v)}\\
&\leq \sum_{t=\card{S_2}+1}^{\card{H}} \sum_{(u,v)\in E(H)} \edgeprint{\cT^*}{2\card{H}/3}{(u,v)}.
\end{align*}
\end{lemma}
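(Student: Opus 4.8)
The plan is to exhibit, from the optimal tree $\cT^*$ of $G$, an explicit $1/3$-balanced cut of $H$ whose weight is dominated by the edge-cost-footprint sum restricted to $E(H)$, and then to finish by the minimality of $\Smin$. First I would set $\tau := \lfloor 2\card{H}/3 \rfloor$ and let $\mathcal{P}$ be the partition of $V$ into the maximal nodes of $\cT^*$ of size at most $\tau$ (these exist and partition $V$ since the leaves have size $1 \le \tau$ and sizes are non-decreasing along root-to-leaf paths, while the root has size $\card{V} \ge \card{H} > \tau$). Restricting to $V(H)$ gives a partition $\mathcal{P}_H$ of $V(H)$ all of whose parts have size at most $\tau \le 2\card{H}/3$. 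A greedy merge then turns $\mathcal{P}_H$ into a two-set cut $(A, V(H)\setminus A)$ with both sides of size in $[\card{H}/3, 2\card{H}/3]$: if some part of $\mathcal{P}_H$ already has size at least $\card{H}/3$, take it alone as $A$ (it is automatically at most $2\card{H}/3$); otherwise accumulate parts one at a time until the running total first reaches $\card{H}/3$, at which point it cannot exceed $2\card{H}/3$ because the last part added had size less than $\card{H}/3$. Hence $(A, V(H)\setminus A)$ is a legal $1/3$-balanced cut of $H$.

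Next I would identify the weight of this cut with a footprint sum. By \Cref{def:edge-cost-footprint}, for an edge $(u,v)\in E(H)$ we have $\edgeprint{\cT^*}{\tau}{(u,v)} = w(u,v)$ exactly when $|\cT^*[u\vee v]| > \tau$, i.e. exactly when $u$ and $v$ lie in different parts of $\mathcal{P}$, equivalently of $\mathcal{P}_H$. Since $(A, V(H)\setminus A)$ is a coarsening of $\mathcal{P}_H$, every edge of $H$ crossing it also crosses $\mathcal{P}_H$, so $w(A, V(H)\setminus A) \le \sum_{(u,v)\in E(H)} \edgeprint{\cT^*}{\tau}{(u,v)} = \sum_{(u,v)\in E(H)} \edgeprint{\cT^*}{2\card{H}/3}{(u,v)}$. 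Because $(\Smin, V(H)\setminus \Smin)$ is a $1/3$-balanced \emph{min}-cut of $H$ while $(A, V(H)\setminus A)$ is some $1/3$-balanced cut of $H$, minimality yields $w(\Smin, V(H)\setminus \Smin) \le w(A, V(H)\setminus A) \le \sum_{(u,v)\in E(H)} \edgeprint{\cT^*}{2\card{H}/3}{(u,v)}$. Multiplying through by $\card{H}$ and using that $(S_1, S_2)$ is $1/3$-balanced, so $\card{S_1} \ge \card{H}/3$ and therefore $\card{H} \le 3\card{S_1}$, delivers the first displayed inequality.

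For the second displayed inequality I would rerun the construction above at every finer threshold $\lfloor 2t/3\rfloor$ as $t$ ranges over $\card{S_2}+1, \dots, \card{H}$: the parts of the corresponding partition of $V(H)$ still have size at most $2t/3 \le 2\card{H}/3$, so the balancing step still produces a $1/3$-balanced cut of $H$, now of weight at most $\sum_{(u,v)\in E(H)} \edgeprint{\cT^*}{2t/3}{(u,v)}$, whence $w(\Smin, V(H)\setminus \Smin)$ is at most this quantity for each such $t$; moreover, since $\edgeprint{\cT^*}{s}{(u,v)}$ is non-increasing in the size argument $s$ and $2t/3 \le 2\card{H}/3$, each term $\sum_{(u,v)\in E(H)} \edgeprint{\cT^*}{2t/3}{(u,v)}$ is at least $\sum_{(u,v)\in E(H)} \edgeprint{\cT^*}{2\card{H}/3}{(u,v)}$. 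Summing over the $\card{H}-\card{S_2} = \card{S_1}$ values of $t$ in the range and comparing with the single-threshold bound then yields the chain; the residual arithmetic is just bookkeeping with the $1/3$-balance constants. (This is precisely the shape required by \Cref{clm:edge-cost-terms}, so that summing the lemma over all internal nodes $H$ of the recursion collapses the double sum to $\sum_{t}\sum_{(u,v)\in E} \edgeprint{\cT^*}{2t/3}{(u,v)}$, which \Cref{clm:edge-cost-footprint-charge} bounds by $3\,\OPT_G$.)

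The main obstacle — and the only genuinely non-mechanical step — is the balancing argument that turns $\cT^*$'s size-$\le 2\card{H}/3$ clusters into a bona fide $1/3$-balanced cut of $H$, since this is what lets us invoke the minimality of $\Smin$. One must check that restricting $\mathcal{P}$ to $V(H)$ never enlarges a part (it only shrinks), that at least two non-empty parts remain (otherwise $V(H)$ would sit inside a single $\cT^*$-cluster of size $\le 2\card{H}/3 < \card{H}$), and that the greedy merge always lands inside $[\card{H}/3, 2\card{H}/3]$ on both sides. A secondary point to get right is that $\edgeprint{\cT^*}{s}{(u,v)}$ is defined relative to $\cT^*$ on all of $G$ rather than relative to any tree of $H$; but because we only ever sum over $E(H)$ and intersect $\mathcal{P}$ with $V(H)$, an edge of $H$ crosses $\mathcal{P}_H$ iff it crosses $\mathcal{P}$ iff its $\cT^*$-lowest-common-ancestor exceeds $\tau$, so the inequality $w(A, V(H)\setminus A)\le \sum_{(u,v)\in E(H)}\edgeprint{\cT^*}{\tau}{(u,v)}$ holds with no cross-graph subtlety.
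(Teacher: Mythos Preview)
Your approach is essentially the paper's: both take the maximal $\cT^*$-clusters of size at most $2\card{H}/3$, intersect with $V(H)$, greedily coalesce the resulting parts into a $1/3$-balanced bipartition $(A,B)$ of $H$, bound $w(A,B)$ by the footprint sum, invoke minimality of $\Smin$, and finish via $\card{H}\le 3\card{S_1}$; your greedy-merge paragraph just makes explicit what the paper asserts in one line. For the second displayed inequality the paper simply cites that $\edgeprint{\cT^*}{s}{(u,v)}$ is nonincreasing in $s$ rather than rerunning the balancing construction at every threshold $2t/3$ as you propose---your detour is harmless but unnecessary, since monotonicity alone already gives each $t$-summand at least the $2\card{H}/3$ summand.
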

\begin{proof}
The proof follows essentially from the same argument as in two previous works \cite{charikar2017approximate,AssadiCLMW22}. Let $S^*_1$, $S^*_2$, $\cdots$, $S^*_k$ be the maximal clusters of size at most $\frac{2}{3}\cdot \card{S}$ and with non-empty intersection with $H$. Observe that we can form new clusters $A=\cup_{j\in I_{A}}S^*_j \cap H$ and $B=\cup_{j\in I_{B}}S^*_j \cap H$, such that
\begin{itemize}
\item $A \cup B$ contains entire vertices in clusters $\cup_j S^*_j$. In other words, $I_{A} \cup I_{B} = [k]$; and
\item $\max\{\card{A},\card{B}\}\leq \frac{2}{3}\cdot \card{S}$.
\end{itemize}
Therefore, we can write that
\begin{align*}
\card{H}\cdot w(\Smin,  V(H)\setminus \Smin) &\leq \card{H}\cdot w(A,  B) \tag{by the definition of $1/3$-balanced min-cut}\\
&\leq \card{H}\cdot \sum_{(u,v)\in E(H)} \edgeprint{\cT^*}{2\card{H}/3}{(u,v)} \tag{by the fact that every split edges has to be at level at most $\frac{2}{3}\cdot \card{H}$}\\
&\leq 3\cdot \card{S_1}\cdot \sum_{(u,v)\in E(H)} \edgeprint{\cT^*}{2\card{H}/3}{(u,v)}. \tag{by the fact that $H \rightarrow (S_{1},  S_{2})$ is a balanced cut}
\end{align*}
Finally, for the second inequality in the lemma statement, note that for any $(u,v)$ and any $t'<t$, there is $\edgeprint{\cT^*}{t'}{(u,v)}\geq \edgeprint{\cT^*}{t}{(u,v)}$, which gives us the desired statement.
\end{proof}

\begin{proof}[\textbf{Finalizing the proof of \Cref{lem:hc-cost-mult-term}}]
\Cref{lem:hc-cost-mult-term} now follows from the combination of \Cref{lem:opt-hc-cost-structure,lem:balanced-min-to-opt-edge} and \Cref{clm:edge-cost-footprint-charge,clm:edge-cost-terms}. To elaborate, we now bound the cost as follows.
\begin{align*}
& \sum_{\substack{\\ \text{$H\rightarrow (S_1,S_2)$} \\ \text{induced by nodes in $\cT$}}} 2\cdot w(\Smin, V(H)\setminus \Smin)\cdot \card{H} \\
&\leq 6\cdot \sum_{\substack{\\ \text{$H\rightarrow (S_1,S_2)$} \\ \text{induced by nodes in $\cT$}}}\sum_{t=\card{S_2}+1}^{\card{H}} \sum_{(u,v)\in E(H)} \edgeprint{\cT^*}{2\card{H}/3}{(u,v)} \tag{by \Cref{lem:balanced-min-to-opt-edge}}\\
&= 6\cdot \sum_{t=0}^{n} \sum_{(u,v)\in E}\edgeprint{\cT^*}{2t/3}{(u,v)} \tag{by \Cref{clm:edge-cost-terms} since the summations are disjoint}\\
& \leq 18\cdot \sum_{t=0}^{n} \sum_{(u,v)\in E}\edgeprint{\cT^*}{t}{(u,v)} \tag{by \Cref{clm:edge-cost-footprint-charge}}\\
&\leq 18\cdot \cost_{G}(\cT^*) = 18\cdot \OPT_{G}, \tag{by \Cref{lem:opt-hc-cost-structure}}
\end{align*}
as desired by the lemma statement. \myqed{\Cref{lem:hc-cost-mult-term}}
\end{proof}

\section{Additional Details on Experiments}
\label{apx:exp}
All of our experiments are implemented on a Macbook Pro with M2 CPU. In this section, we provide missing details on the dataset, pre-processing parameters and more experimental results.

\subsection{More Details on Datasets}
We first provide the basic statistics of the datasets used in the paper. The number of edges of SBM and HSBM are the average of all generated graphs.

\begin{table}[!ht]
    \centering
    \small
    \caption{Datasets used for experiments}
    \renewcommand{\arraystretch}{1.2}
    \begin{tabular}{c|cc|ccc}
    \thickhline
         & \multicolumn{2}{c|}{Synthetic} & \multicolumn{3}{c}{Real-world} \\
    \hline 
        Dataset & SBM & Hierarchical SBM & IRIS &  WINE & BOSTON  \\
    \hline
        Size & 150 & 150 & 150 & 178 & 506 \\
        \# Edges & 2614 & 4005 & 4851 & 11830 & 95566 \\
        \# Clusters & 5 & 5 & 3 & 3 & 5\\
    \thickhline
    \end{tabular}
    \label{tab:dataset}
\end{table}

The similarity graph of each real-world dataset is constructed by Gaussian kernel, according to the standard heuristic. The parameter $\sigma$ is selected as 0.65 for WINE and BOSTON, and 5 for IRIS.

\subsection{More Results on Synthetic Data}
The synthetic graphs we use in \Cref{sec:experiment} have five clusters $(k=5)$ with sizes of $\{20,20,30,30,50\}$ for each cluster. Here we demonstrate more experiment results on SBM graphs with other values of $k$ and sizes of each cluster. We give details together with the illustrations.

We start with $k=6$ SBM graphs, but in two different scenarios: the graphs can have clusters of the same size of various sizes. We set the privacy parameters same as before. The intra-probability is 0.7 $(p=0.7)$ and inter-probability is 0.1 $(q=0.1)$. Similarly as before, our algorithm outperforms input perturbation significantly on all choices of $\eps$. Further, the cost given by our algorithm approaches the cost in the non-private setting. The details are shown in \Cref{fig:k6same} and \Cref{fig:k6diff}.

\begin{figure}[h]
\begin{minipage}{0.48\linewidth}
    \centering
    \includegraphics[width=1\linewidth]{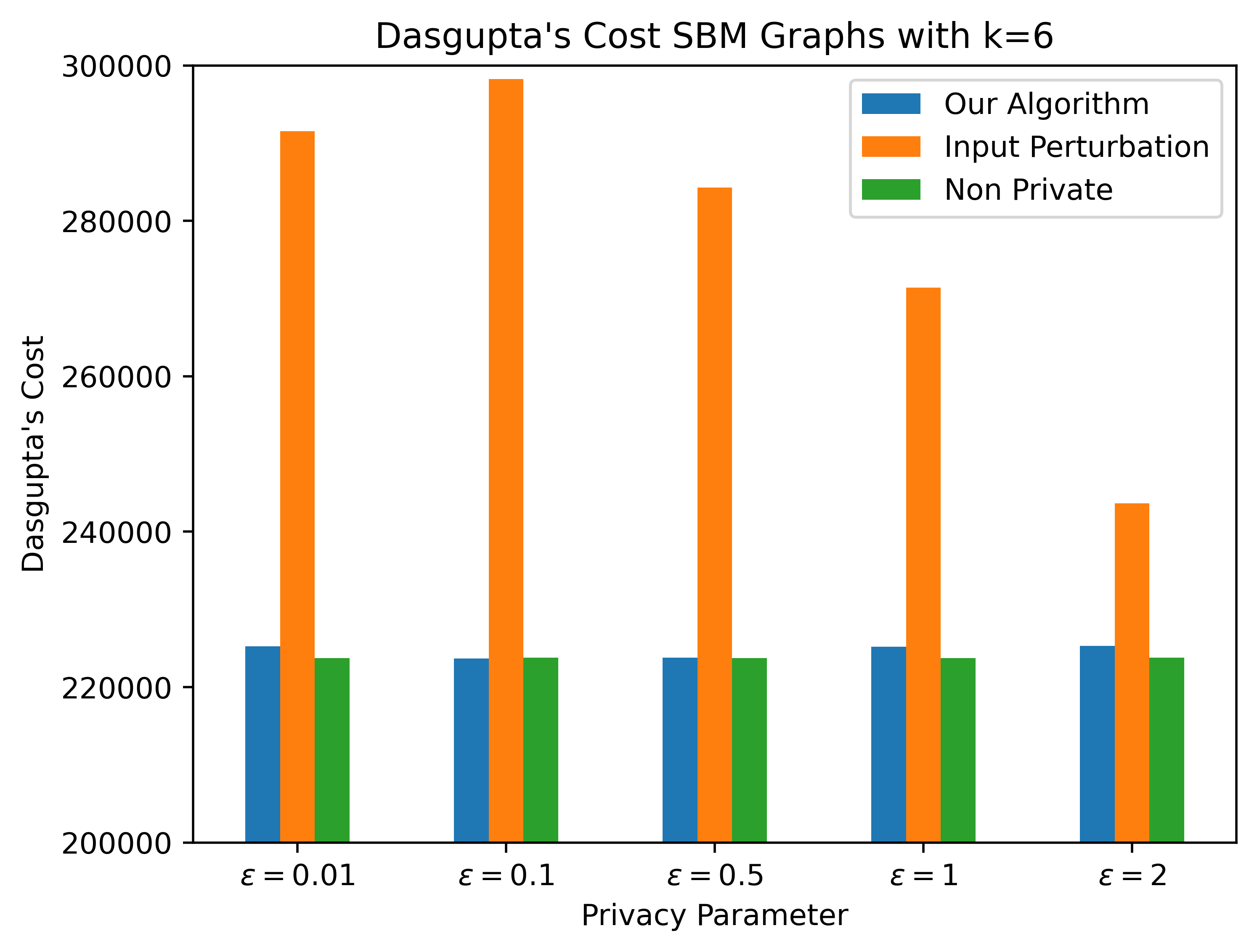}
    \vspace{-5mm}
    \caption{Comparison of Dasgupta's cost on SBM graphs of size $n=180$ and $k=6$. Each cluster has the same size of 30.}
    \label{fig:k6same}
\end{minipage}
\hfill
\begin{minipage}{0.48\linewidth}
    \centering
    \includegraphics[width=1\linewidth]{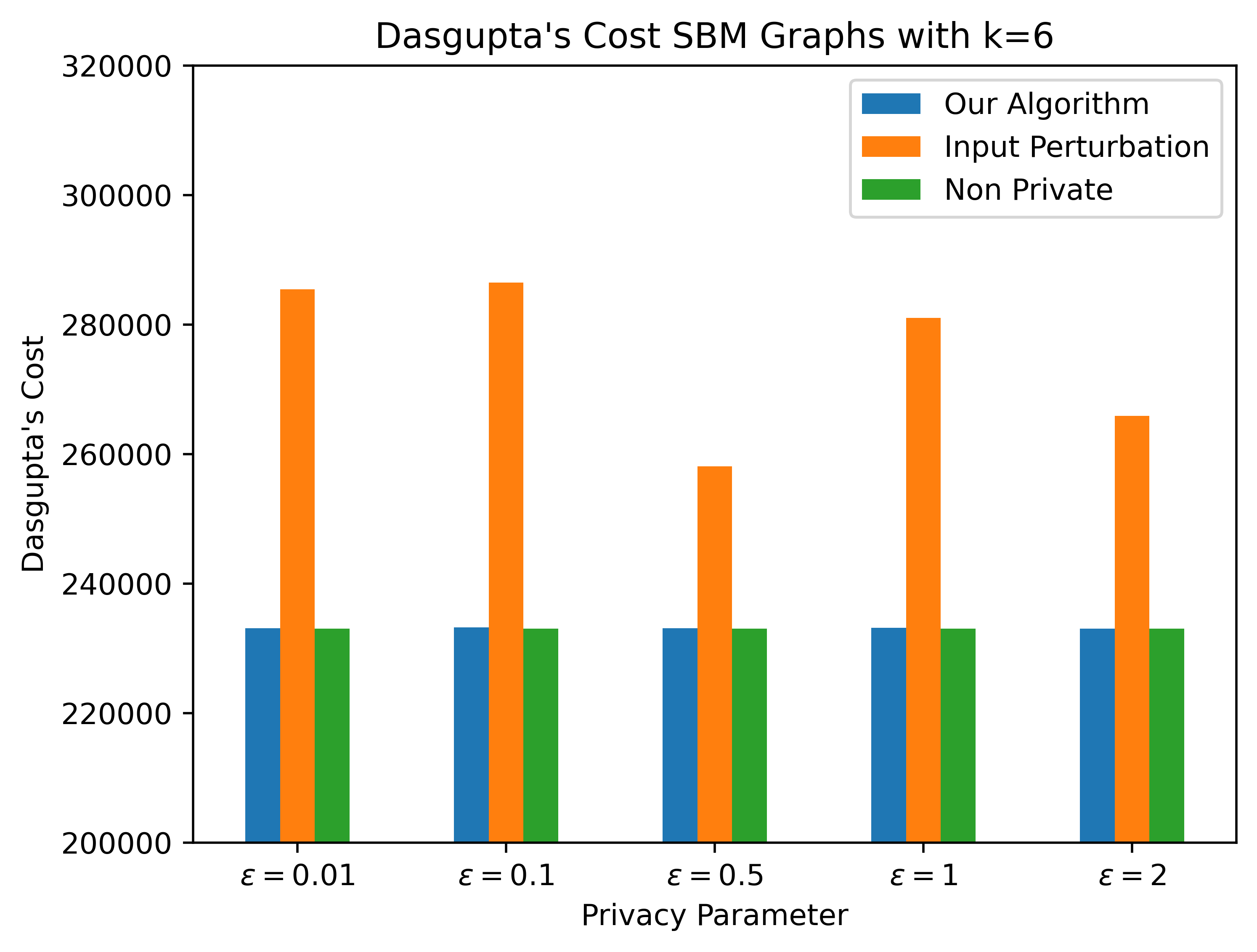}
    \vspace{-5mm}
    \caption{Comparison of Dasgupta's cost on HSBM graphs of size $n=180$ and $k=6$. Clusters have different sizes: 10,20,30,30,40,50.}
    \label{fig:k6diff}
\end{minipage}
\end{figure}

\begin{figure}[h!]
\begin{minipage}{0.48\linewidth}
    \centering
    \includegraphics[width=1\linewidth]{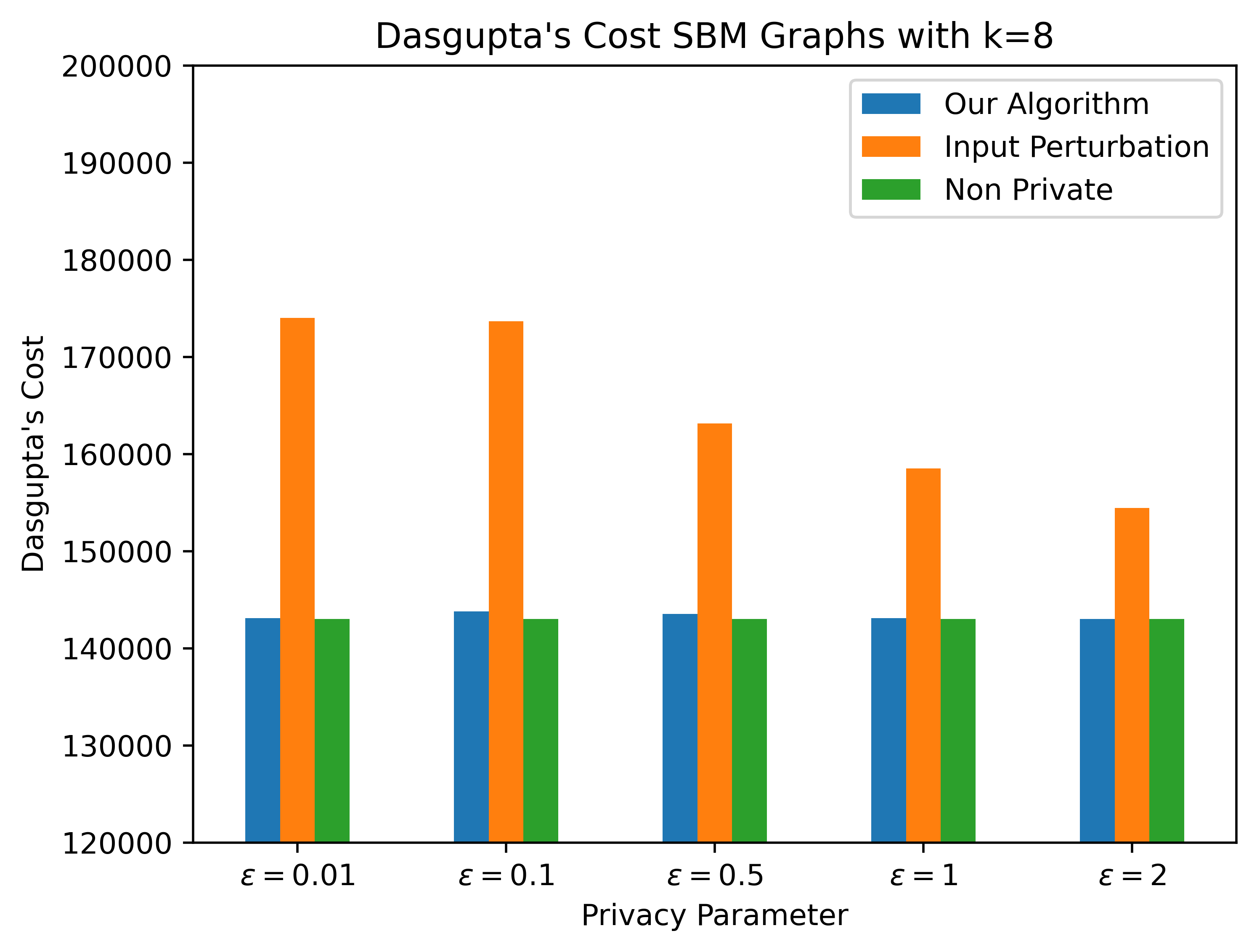}
    \vspace{-5mm}
    \caption{Comparison of Dasgupta's cost on SBM graphs of size $n=160$ and $k=8$. Each cluster has the same size of 20.}
    \label{fig:k8same}
\end{minipage}
\hfill
\begin{minipage}{0.48\linewidth}
    \centering
    \includegraphics[width=1\linewidth]{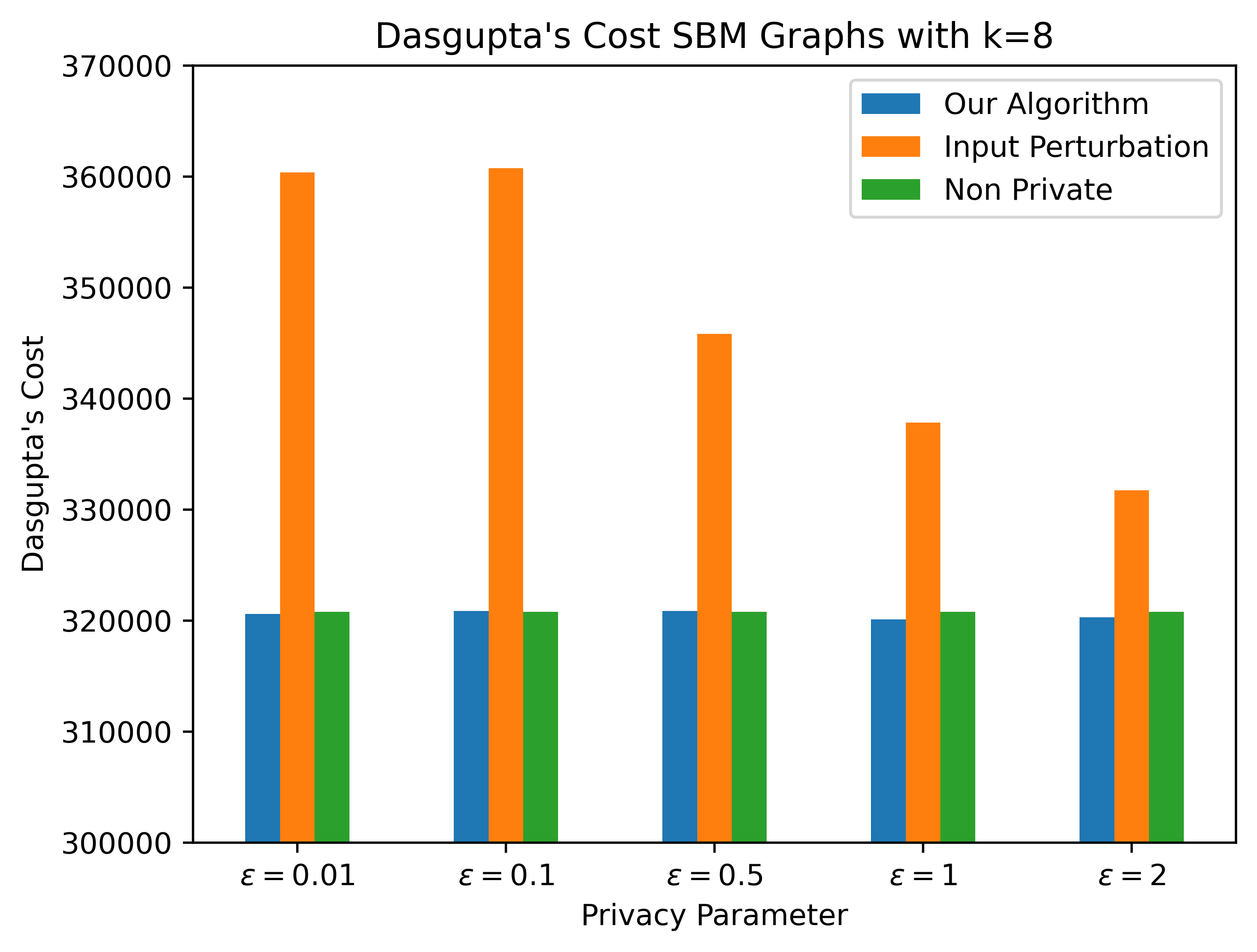}
    \vspace{-5mm}
    \caption{Comparison of Dasgupta's cost on HSBM graphs of size $n=200$ and $k=8$. Clusters have different sizes: 10,10,20,20,30,30,40,40.}
    \label{fig:k8diff}
\end{minipage}
\end{figure}

With the chosen parameters the SBM graphs are well-clustered. In the next set of results we weaken the cluster pattern by reducing $p$ to half. Other settings follow as above.

As shown in \Cref{fig:k8same} and \Cref{fig:k8diff}, we move on to $k=8$ with weaker cluster patterns. Observation follows immediately that our algorithm still produces favorable results consistently. Up to this point, we have tested synthetic graphs on a diversity of settings where the number of clusters, cluster sizes and clustering pattern vary, these additional experiment results show that our algorithm can generate private hierarchical clustering with much lower cost comparing to the input perturbation. Further, if the input graph has a well-clustered structure, our algorithm is even comparable to the non-private algorithm in terms of Dasgupta's cost.

\end{document}